\Crefname{lemma}{Lemma}{Lemmas}
\Crefname{definition}{Definition}{Definitions}
\Crefname{theorem}{Theorem}{Theorems}
\Crefname{corollary}{Corollary}{Corollaries}
\Crefname{section}{Section}{Sections}
\Crefname{appendix}{Appendix}{Appendices}
\Crefname{figure}{Fig.}{Figs.}
\Crefname{equation}{Eq.}{Eqs.}
\DeclarePairedDelimiter\ceil{\lceil}{\rceil}
\DeclarePairedDelimiter\floor{\lfloor}{\rfloor}
\DeclareMathOperator{\poly}{poly}
\DeclareMathOperator{\BigO}{O}
\newtheorem{theorem}{Theorem}[section]
\newtheorem{lemma}[theorem]{Lemma}
\newtheorem{defn}[theorem]{Definition}
\title{Holographic duality between local Hamiltonians from random tensor networks}
\author{Harriet Apel,}
\author{Tamara Kohler}
\author{and Toby Cubitt}
\affiliation{Department of Computer Science, University College London,\\Gower St, London WC1E 6EA, UK}
\emailAdd{harriet.apel.19@ucl.ac.uk}
\emailAdd{tamara.kohler.16@ucl.ac.uk}
\emailAdd{ t.cubitt@ucl.ac.uk}
\abstract{The AdS/CFT correspondence realises the holographic principle where information in the bulk of a space is encoded at its border.
We are yet a long way from a full mathematical construction of AdS/CFT, but toy models in the form of holographic quantum error correcting codes (HQECC) have replicated some interesting features of the correspondence.
In this work we construct new HQECCs built from random stabilizer tensors that describe a duality between models encompassing local Hamiltonians whilst exactly obeying the Ryu-Takayanagi entropy formula for all boundary regions.
We also obtain complementary recovery of local bulk operators for any boundary bipartition.
Existing HQECCs have been shown to exhibit these properties individually, whereas our mathematically rigorous toy models capture these features of AdS/CFT simultaneously, advancing further towards a complete construction of holographic duality.
}
\keywords{holographic duality, random tensor networks, quantum error correction, Hamiltonian simulation}
\begin{document}

\maketitle
\flushbottom


\newpage
\section{Introduction}

The holographic principle states that the description of a gravitational theory in a volume can be mathematically encoded onto its lower dimensional boundary~\cite{tHooft:93,Susskind_1995}.
This principle was inspired by discussions of black hole thermodynamics, as it is thought that the information of objects inside a black hole is captured in a preserved image at the event horizon.
The most successful realisation of the holographic principle is the AdS/CFT correspondence.
AdS/CFT is a conjectured duality between quantum theories of gravity in an Anti-de-Sitter ($d+1$)-dimensional bulk and a conformal field theory (CFT) on the $d$-dimensional boundary.
It is a helpful framework to study strongly interacting quantum field theories by mapping them to semi-classical gravity in a higher dimensional space.
It is also thought that AdS/CFT will lead to insights into quantum gravity by using the duality in the opposite direction.

Quantum information is a rewarding vantage point from which to study AdS/CFT since entanglement in the correspondence has a close relationship to geometry, described by the Ryu-Takayanagi entropy formula.  Additionally, in AdS/CFT a bulk operator can be mapped to various operators on different sections of the boundary, but the operator is never cloned on two non-overlapping boundary segments. This redundancy and secrecy in the information's encoding echoes that of a quantum error correcting code, first noted in~\cite{Rindler}. Previously the tools of quantum information have been used to build toy models of the AdS/CFT correspondence~\cite{Happy, Tamara, Random, HQECC1, HQECC2, HQECC3, HQECC4, HQECC5}.
The logical degrees of freedom of a quantum code are encoded into the physical Hilbert space which can be interpreted as a bulk theory of quantum gravity being dual with a CFT on the boundary.
These mathematically rigorous toy models reproduce many of the features of holographic duality for any choice of bulk Hamiltonian, including those which do not describe any gravitational physics.
Thus, whilst they are able to reproduce aspects of the holographic dualities that arise in AdS/CFT, they do not single out gravitaional models in the bulk.
\footnote{We could choose to put a Hamiltonian which recreates aspects of gravitational physics into the bulk of these toy models. However, the emergent features of AdS/CFT captured by the models, e.g.\ complementary recovery, do not require this.}

If this type of tensor network toy model of holography is to shed light on quantum gravity, it is important to push these models further to learn where -- if anywhere -- they break down for arbitrary bulk Hamiltonians, and it becomes necessary to include gravity in the bulk in order to reproduce features of AdS/CFT.
The original tensor network models such as~\cite{Happy} reproduced the correct geometry of states and observables, but failed to map local models in the bulk to local models on the boundary. (I.e.\ Hamiltonians made up of local interactions in the bulk are mapped to Hamiltonians with no local structure on the boundary.) In light of this, one might reasonably have conjectured that to obtain a \emph{local} boundary dual model, one would need to restrict to specific bulk models. Perhaps even to models that have at least some features of bulk gravitational physics, given the non-trivial interplay between symmetries and locality involved in AdS/CFT -- and in gravitational physics more generally.~\cite{Tamara} showed this was not the case: they gave a tensor network toy model that was able to map any local bulk Hamiltonian to \emph{local} Hamiltonian on the boundary, whilst also reproducing all the same features of AdS/CFT as the original HaPPY code~\cite{Happy}.

It was known that the HaPPY code does not exactly reproduce the correct entropy scaling encapsulated in the Ryu-Takayanagi formula from AdS/CFT.~\cite{Random} improved on the original construction by showing that random tensor networks \emph{were} able to reproduce the Ryu-Takayanagi entropy scaling exactly. However, it was not known whether it was possible to construct a toy model of holographic duality that simultaneously maps between local Hamiltonians and recovers the expected Ryu-Takayangi entropy formula for general cases.

In this work, we demonstrate that the holographic toy model mapping between local Hamiltonians described by Kohler and Cubitt in~\cite{Tamara, kohler2020translationallyinvariant} can be constructed from networks of random tensors rather than tensors chosen with particular properties, thereby also reproducing the correct RyuTakayanagi entropy scaling. By showing that both these properties can be realised simultaneously, this work advances a further step along the path of mathematically rigorous constructions of holographic codes that capture more features of the AdS/CFT correspondence. Perhaps the most important consequence is to push the boundary further out between those features of holography that can already be realised without incorporating gravity into the model, and those that are inherently gravitational.

The following section of this paper gives an overview of key previous works and informally presents our main results with an overview of the proofs. \cref{Technical set-up} introduces the technical background and the notation of the relevant tensors. The full mathematical proofs of our main results are given in \cref{Results with technical details} going via results concerning the concentration of random stabilizer tensors about perfect tensors and the agreement with the Ryu-Takayanagi entropy formula, before finally presenting a description of our holographic toy model. The last section discusses the conclusions of the work and avenues for future work.

\subsection{Previous work}
\label{Previous work}

Previous work has established various holographic quantum error correcting codes (HQECC) based on tensor network structures as toy models of the AdS/CFT correspondence~\cite{Happy, Tamara, Random, HQECC1, HQECC2, HQECC3, HQECC4, HQECC5}. In a HQECC, the logical/physical code subspaces are interpreted as the bulk/boundary degrees of freedom. A complete mathematical construction of AdS/CFT is still far away, however models are increasingly capturing essential features of the duality. Among others, a successful toy model might strive to replicate holographic duality on these fronts:
\begin{enumerate}
\item \textbf{Mapping between models.} AdS/CFT is a duality between two models: the quantum theory of gravity in the bulk and a conformal field theory on the boundary.
Not only should bulk states and observables be mapped to the same on the boundary, if HQECCs are to emulate this mapping between models \textit{local} bulk Hamiltonians should map to \textit{local} boundary Hamiltonians.
\footnote{Note the bulk Hamiltonian isn't necessarily strictly local -- there may be gravitational Wilson lines which break locality in a restricted way. Our construction can also map these `quasi-local' bulk Hamiltonians to local Hamiltonians on the boundary.}
Once an encoding isometrically maps local Hamiltonians in the bulk to local Hamiltonians on the boundary, a further step would be to seek that this boundary model is Lorentz invariant and further still a quantum CFT. The reverse mapping from the boundary to the bulk is another important feature of full AdS/CFT, since it is this direction that could lead to insights into bulk quantum gravity by mapping a better understood boundary CFT.
\item \textbf{Entanglement structure.} The relationship between geometry and entropy in AdS/CFT is described by the Ryu-Takayangi formula. It states that in a holographic state the entanglement entropy of a boundary region, $A$, is proportional to the area of a corresponding minimal surface, $\gamma_A$, in the bulk geometry:
\begin{equation} \label{eqn RT}
S(A)\approx \frac{|\gamma_A|}{4G_N},
\end{equation}
here $G_N$ is Newton's constant. \cref{eqn RT} comes from classical physics in the bulk and is correct to leading order in the $G_N$ expansion. There are quantum corrections of order $G_N^0$ that come from quantum mechanical effects in the bulk, for instance the entanglement entropy between the region bounded by the minimal surface and the rest of the bulk~\cite{RT}.
\item \textbf{AdS Rindler reconstruction.} AdS/CFT has quantum error correcting features proposed in~\cite{Rindler}. The AdS-Rindler reconstruction of boundary operators from bulk operators in AdS/CFT demonstrates this property: bulk information is encoded with redundancy with complementary recovery on the boundary. On a fixed time slice, a boundary subset $A$ defines an entanglement wedge $\mathcal{E}[A]$ -- the bulk region bounded by the Ryu-Takayangi surface of $A$. The AdS-Rindler reconstruction states that for a general point in the entanglement wedge any bulk operator can be represented as a boundary operator supported on $A$~\cite{Rindler, entanglement_wedge}. Any given bulk position lies in many entanglement wedges with distinct associated boundary subregions, hence there are multiple representations for a single bulk operator with different spatial support on the boundary. Since the bulk operator can be reconstructed on $A$ is it protected against an error where the complementary subregion $\bar{A}$ is erased.
Given any partition of the boundary into non-overlapping regions $A$ and $\bar{A}$, where the union of $\mathcal{E}[A]$ and $\mathcal{E}[\bar{A}]$ cover the entire bulk spatial slice, a given bulk operator should always be recoverable on exactly one of the regions (a property known as \emph{complementary recovery}).
\footnote{In AdS/CFT it is not guaranteed that the union of $\mathcal{E}[A]$ and $\mathcal{E}[\bar{A}]$ will cover the entire bulk spatial slice, for example if there are horizons in the bulk. However if there is no bulk entanglement -- a regime we will restrict to for quantitative study of our toy models -- this condition is satisfied and we would expect complementary recovery.}\end{enumerate}
We will briefly describe three notable HQECCs based on tensor network constructions that this work will draw heavily upon, outlining their successes and limitations with respect to the above three points.

The HaPPY code~\cite{Happy} was the first explicit construction of a HQECC. The encoding has a tensor network structure where tensors are arranged in a pentagonal tiling of hyperbolic 2-space, depicted in \cref{fg holographic code}. A particular choice of isometric tensor was selected in order for the total encoding to be a bulk-to-boundary isometry, able to map states and observables with no loss of information.
However, a local Hamiltonian in the bulk, given by $H = \sum_z h_z \otimes \mathbb{I}$ where $h_z$ are local Hermitian observables, is generally mapped to a non-local Hamiltonian on the boundary with global interactions.
By including global interactions the Hamiltonian has lost its connection to the boundary geometry and it is not meaningful to describe it as having the dimensionality of the boundary. The HaPPY code is a good model of the AdS-Rindler reconstruction since a given bulk operator can be mapped to non-unique boundary operations, all with the same action on the code subspace. Broken symmetry of the boundary through discretisation does however introduce pathological boundary regions, whereby complementary recovery is violated and the bulk operator is not recoverable on either $A$ or $\bar{A}$. The entanglement structure of the model echoes that of holographic duality with the Ryu-Takayanagi formula obeyed exactly in special cases where $A$ is a connected boundary region and there are no bulk degrees of freedom. Yet for uncontracted bulk indices Ryu-Takayanagi is not obeyed and only entropy bounds are manifest. Furthermore, if the bulk input state is entangled even these bounds do not hold. While imperfect, these codes do capture key holographic properties making them an important footing for further work.

\begin{figure}[tbp]
\centering
\includegraphics[trim={0cm 0cm 0cm 0cm},clip,scale=0.34]{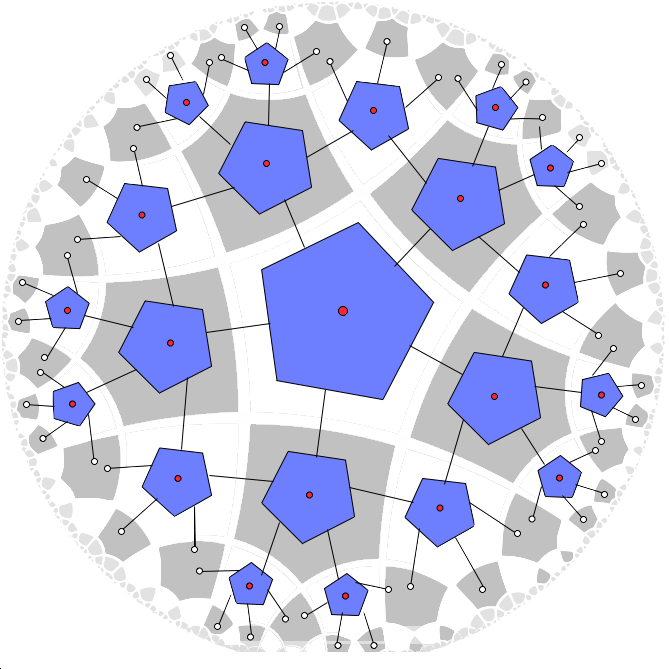}
\caption{The holographic pentagon code from~\cite{Happy}. Each blue pentagon represents a 6 leg perfect tensor (see \cref{defn Perfect tensors}) with legs partitioned to represent 5 physical output legs and 1 logical input leg (shown by the red dot).}
 \label{fg holographic code}
\end{figure}

An extension of the HaPPY code was proposed in~\cite{Tamara} that applied the recent theory of Hamiltonian simulation from~\cite{simulation}. The same class of isometric tensors as the HaPPY code is used, this time choosing tensors describing qudit rather than qubit systems. The tensor network geometry is also revised from~\cite{Happy} with a 3-dimensional bulk tessellated by Coxeter polytopes. This generalisation to higher dimensions enabled techniques from Hamiltonian complexity theory to be employed to break down global interactions, so that local Hamiltonians in the bulk now map to local Hamiltonians on the boundary.
The mapping described is therefore between models, however it should be noted that the boundary model doesn't exhibit any discrete version of Lorentz invariance or conformal symmetry.\footnote{In~\cite{HQECC2} a holographic construction where the boundary theory is invariant under Thompson's group, a discrete analogue of the conformal group, has been constructed.}
\cite{Tamara}'s construction was able to surpass a static bulk and boundary to explore how the model's dynamics qualitatively mirrors features of AdS/CFT. This work also completed the dictionary with the reverse boundary-to-bulk map which facilitates an insight into the connection between bulk and boundary energy scales. The error correction properties and entanglement structure are inherited unchanged from the underlying HaPPY-like code.
In~\cite{kohler2020translationallyinvariant} this was extended to a 2D-1D holographic mapping which mapped (quasi)-local bulk Hamiltonians to local boundary Hamiltonians.

\cite{Random} studied toy models of AdS/CFT based on random tensor networks with unconstrained graph geometries in $\geq 2$ dimensions. In the large bond dimension limit, random tensor networks are approximate isometries from the boundary to the bulk and in the reverse direction, defining bidirectional holographic codes. As with the HaPPY code, the encoding does not map between local Hamiltonians, instead producing global interactions. The bulk-to-boundary mapping does however satisfy the error correction properties of the AdS-Rindler reconstruction. The main triumph of this model is that the entanglement entropy of all boundary regions obey the Ryu-Takayanagi formula with the expected corrections when there is non-trivial quantum entanglement in the bulk. This natural likeness between the entanglement structure of high-dimensional random tensor networks and holography might suggest that there is a deeper link between semi-classical gravity and scrambling/chaos.

Guided by these results we will work towards constructing a HQECC that simultaneously exhibits both the local Hamiltonian bulk-boundary correspondence of~\cite{Tamara} and the Ryu-Takayanagi agreement of~\cite{Random}. More details on the relevant proof techniques used in these works are discussed in \cref{Results with technical details}.

\subsection{Our results}
\label{Our results}

We set up a duality between states, observables and local Hamiltonians in $d$-dimensional hyperbolic space and its $(d-1)$-dimensional boundary, for $d=2,3$. In the model a general local Hamiltonian acting in the bulk has an approximate dual 2-local nearest-neighbour Hamiltonian on the boundary. The mapping has redundant encoding leading to error correcting properties where the reconstruction of any bulk operator acting in the entanglement wedge of a boundary region is protected against erasure of the rest of the boundary Hilbert space. This implies complementary recovery for all partitions of the boundary and all local bulk operators. The entanglement entropy of general boundary regions obeys the Ryu-Takayanagi formula exactly where there is no bulk entanglement. Furthermore the effect of introducing bulk entanglement qualitatively agrees with the entropic corrections expected in real AdS/CFT.

The explicit encoding is a chain of simulations, the first of which is described by a tensor network HQECC. The geometry of our network is inherited from~\cite{Tamara} where hyperbolic bulk space is tessellated by space-filling Coxeter polytopes. In our set-up a random stabilizer tensor is placed in each polytope with one tensor index identified as the bulk index and the rest contracted through the faces of the polytopes. With a suitably high tensor bond dimension we are able to achieve several features of AdS/CFT simultaneously with high probability. Therefore, by choosing a model of semi-classical gravity in the bulk this construction is an explicit toy model of holography, providing a mathematically rigorous tool for exploring the physics of this setting. The notable features of our construction are summarised in the following statement of our main result, these are then made precise in \cref{holography} for the 2D to 1D duality and in \cref{thm Main result} for 3D to 2D.

\begin{theorem} [Informal statement of holographic constructions, \cref{holography} and \cref{thm Main result}] \label{informal theorem}
Given any (quasi) local bulk Hamiltonian acting on qudits in ($d=2,3$)-dimensional AdS space, we can construct a dual Hamiltonian acting on the $(d-1)$-dimensional boundary surface such that:
  \begin{enumerate}
  \item All relevant physics of the bulk Hamiltonian are arbitrarily well approximated by the boundary Hamiltonian, including its eigenvalue spectrum, partition function and time dynamics.
  \item The boundary Hamiltonian is 2-local acting on nearest neighbour boundary qudits, realising a mapping between models.
  \item The entanglement structure mirrors AdS/CFT as the Ryu-Takayanagi formula is obeyed exactly for any boundary subregion in the absence of bulk entanglement.
  \item Any local observable/measurement in the bulk has a set of corresponding observables/measurements acting on portions of boundary as described by the AdS Rindler reconstruction giving complementary recovery.
  \end{enumerate}
\end{theorem}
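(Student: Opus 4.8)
The plan is to build the encoding as a chain of simulations whose first link is a tensor-network HQECC and whose remaining links are Hamiltonian simulations that enforce locality. The conceptual crux is a concentration result: although \cite{Tamara} used perfect tensors and \cite{Random} used random tensors, I would show that random stabilizer tensors are, with high probability, close to perfect, so that a single random construction inherits the local-Hamiltonian dictionary of \cite{Tamara} and the exact Ryu-Takayanagi scaling of \cite{Random} at once.

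First I would prove the bridging lemma. For a random stabilizer state on the legs of a tensor of bond dimension $D$, the deviation of any balanced-bipartition reduced state from maximal mixedness is controlled by the explicitly computable moments of the stabilizer ensemble, which forms a projective $3$-design. A concentration estimate together with a union bound over the finitely many bipartitions then shows the tensor is $\epsilon$-close to perfect except with probability suppressed as $D$ grows, realising \cref{defn Perfect tensors} probabilistically.

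Next I would assemble the network by placing such a tensor in each Coxeter polytope tessellating the $d$-dimensional hyperbolic bulk, as in \cite{Tamara}, with one leg per tensor as the bulk input and the rest contracted through shared faces. Taking a union bound over all vertices, the contraction is with high probability a bulk-to-boundary isometry, supplying the state and observable dictionary. For the entropy I would run the min-cut argument of \cite{Random}: in the large-$D$ limit the R\'enyi entropies of a boundary region $A$ are dominated by the minimal cut through the network, which coincides with the discretised surface $\gamma_A$, giving $S(A)=\abs{\gamma_A}\log D$ exactly when the bulk is unentangled and reproducing the expected $G_N^0$ corrections otherwise. Locality then follows by feeding the generically non-local boundary Hamiltonian into the simulation machinery of \cite{simulation}, exactly as in \cite{Tamara}: perturbative gadgets reduce it to $2$-local nearest-neighbour form while preserving the spectrum, partition function and time dynamics to arbitrary precision, establishing properties (1) and (2). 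Complementary recovery (property 4) follows from the perfect-tensor error-correction guaranteed by the bridging lemma: operator pushing shows every local bulk operator in the entanglement wedge $\mathcal{E}[A]$ is reconstructible on $A$, and wedge complementarity forces recovery on exactly one side of any bipartition.

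The main obstacle I anticipate is controlling the interplay of the two regimes. The simulation step of \cite{Tamara} is proved for exact perfect tensors, whereas here each tensor is only perfect up to $\epsilon$; I would have to track how these deviations propagate through the gadgets and accumulate across the network, keeping the total error below the simulation tolerance. Equally delicate is showing that one random draw is good for every required property at once---isometry, RT min-cut, and complementary recovery for all boundary partitions---so the union bound ranges over exponentially many bipartitions and $D$ must be taken large enough to absorb this while respecting the $\poly$ resource bounds promised in \cref{holography} and \cref{thm Main result}.
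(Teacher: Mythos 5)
Your overall architecture---random stabilizer tensors placed in a Coxeter-polytope tessellation, a concentration-plus-union-bound argument for isometry, the Ising/min-cut entropy analysis of~\cite{Random}, and then the simulation machinery of~\cite{simulation,Tamara} to enforce boundary locality---matches the paper's. But the step you flag as your ``main obstacle'' is a genuine gap, and the paper's central technical contribution is precisely the idea that removes it. You only establish that each tensor is $\epsilon$-close to perfect and then propose to propagate these deviations through the perturbative gadgets. This would not deliver the theorem: an approximately perfect stabilizer tensor need not describe an MDS stabilizer code across every bipartition, so operator pushing---and with it the Pauli-rank preservation via the consistent-basis result (Theorem D.4 of~\cite{Tamara}) that controls the ancilla count and hence the claimed boundary-qudit scalings---is no longer available, and the \emph{exact} Ryu-Takayanagi statement in point 3 cannot follow from a merely approximate isometry. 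The paper instead upgrades ``close to perfect'' to \emph{exactly} perfect with high probability (\cref{thm Random stabilizer tensors are perfect}): since any bipartite reduced state of a stabilizer state has flat spectrum and entropy quantised in units of $\log p$ (\cref{thm Quantisation of entropy}), once the concentration bound forces the deviation from maximal entropy below one quantisation unit, the entropy must be exactly maximal. With exactly perfect tensors the first simulation step is perfect, no $\epsilon$'s propagate through the gadgets, and the error and resource accounting of \cite[Theorem 6.10]{Tamara} and \cite[Theorem 5.2]{kohler2020translationallyinvariant} applies verbatim.

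Two further points. First, your appeal to a projective $3$-design is both unnecessary and incorrect in the regime required: for qudits of odd prime dimension $p$ the Clifford group is only a unitary 2-design, and the construction needs $p = O(n^q)$ growing with the network; the paper deliberately runs the argument through the purity, a degree-2 polynomial whose 2-design average equals the Haar average, with the flat stabilizer spectrum converting the R\'enyi-2 statement into a von Neumann one. Second, exact RT for arbitrary product bulk states does not follow from the min-cut argument alone, which only pins $S(A)$ to within $\ln k + o(1)$ of $\abs{\gamma_A}\ln p$: the paper first obtains exactness for \emph{stabilizer} product bulk inputs via entropy quantisation plus Markov's inequality (\cref{lm Approximate Ryu-Takayanagi}), and then extends to arbitrary product states using full complementary recovery (\cref{lm Exact Ryu-Takayanagi}). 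That recovery statement in turn requires geometric conditions you omit---polytopes with an odd number of faces and bulk-index dimension $D_b$ strictly smaller than the internal bond dimension $D$ (\cref{lm Complementary recovery})---without which the Ising domain wall can shift when the bulk pinning field flips, leaving a bulk index near the minimal surface recoverable on neither side, exactly the pathology of earlier constructions your point 4 must rule out. Finally, your worry about a union bound over exponentially many boundary bipartitions is misplaced: the paper's union bound ranges only over the $\binom{t}{\lfloor t/2\rfloor}/2$ bipartitions of a single tensor's legs and then, via Fr\'echet inequalities, over the $n$ tensors; the RT and recovery statements are established per boundary region through Markov's inequality, so no exponential union over regions is needed.
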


There are two main steps to proving this result.
The first is using measure concentration techniques to demonstrate that, with high probability, a HQECC constructed out of high dimensional random stabilizer tensors will be a an isometry from bulk to boundary, and will exhibit complementary recovery and obey the Ryu-Takayanagi formula.
The next step is to use Hamiltonian simulation techniques to demonstrate that the boundary Hamiltonian which results from pushing a local bulk Hamiltonian through the random tensor network can be simulated arbitrarily well by a 2-local boundary Hamiltonian.
The level of approximation depends on parameters in the local boundary Hamiltonian -- with the most significant parameter being maximum weight of interactions in the Hamiltonian.

This result could be extended to higher dimensions.
In particular, all the techniques in \cref{thm Main result} generalise to higher dimensions, all that is needed to generalise the result is to find examples of tessellations meeting the requirements of the construction.
While the formal theorems set out the details of the boundary, intuitively the geometry is what we expect from the concept of a boundary of a finite bulk space and the number of boundary qudits scales almost linearly with the number of bulk qudits.

\subsection{Overview of methodology}

As outlined above, we first build a tensor network out of random tensors and demonstrate that it is a mapping from bulk to boundary spins which captures key features of AdS/CFT.
Then we apply Hamiltonian simulation techniques to the boundary to construct a local boundary  model.

The Hamiltonian simulation techniques we use require additional boundary qudits in order to achieve a local boundary Hamiltonian.
Non-increasing Pauli rank of the operator as it is pushed through the tensor network is essential for this step of the simulation in order to achieve a reasonable scaling of the final set of boundary qudits with the bulk qudits.
Therefore, we cannot apply Hamiltonian simulation techniques to a HQECC constructed from Haar random tensors.
We instead use random stabilizer tensors which are generated by the Clifford group, forming a 2-design on prime dimensional qudits.
Since our construction involves mapping the bulk Hamiltonian through the tensor network to the boundary, we require the network formed from these random stabilizer tensors to be an isometry.

It follows from measure concentration that random tensors with large bond dimension describe approximate isometries with high probability.
In the case of random stabilizer tensors, this can be strengthened to saying that they are \emph{exactly} perfect with high probability, using quantisation of entropy results.
By increasing the bond dimension this probability can be pushed arbitrarily close to~1, which is crucial because it means that if we construct a large tensor network, and pick each tensor uniformly at random, we can efficiently construct a network where each random stabilizer tensor is perfect.
The probability of the network being an isometry is then given by the union bound of all $n$ tensors in the network being perfect, \cref{thm Random stabilizer tensors are perfect}.
Hence we show that for any sized network, one can choose a sufficiently large prime $p=O(n^q)$ bond dimension (where $q>1/b$ and $0<b<1$), such that the probability of obtaining an isometry is close to 1.

The next step in our construction involves approximate simulations.
It is the Hamiltonian parameters in this step that determine how accurately the resulting boundary Hamiltonian reproduces the relevant physics of the bulk Hamiltonian stated in point~1 of \cref{informal theorem}.
By tuning these Hamiltonian parameters we can make this step arbitrarily accurate.
The technique for carrying out the approximate simulation depends on the spatial dimension of the HQECC.
In the 2D-1D case, we use the `history state simulation method' from~\cite{kohler2020translationallyinvariant}.
For this simulation technique, increasing the accuracy of the simulation requires increasing the number of ancilla qudits and increasing the weight of terms in the Hamiltonian.
For the 3D-2D case (or higher), we can use perturbative simulations built out of perturbation gadgets~\cite{gadgets,simulation}.
In this case it is just the weight of terms in the Hamiltonian which determines the accuracy of the approximation.

The entanglement structure and complementary recovery is proven via a technique introduced in~\cite{Random}, relating functions of the random tensor network to partition functions of the classical Ising model.
The error in approximating the partition function with the well-known Ising model ground state is suppressed polynomially with the bond dimension of the tensor, which we have already chosen to be large in order to achieve an isometric encoding.
Using quantisation of entropy for stabilizer states, we can show that for large enough bond dimension a random stabilizer tensor network with a Coxeter polytope structure obeys the Ryu-Takayanagi formula exactly for general boundary regions when contracted with a bulk stabilizer state.
To generalise this to bulk product states, we first prove complementary recovery when the Coxeter polytopes generating the honeycombing of the bulk are odd-faced.
This allows us to convert bulk stablilizer states into general product states while retaining the Ryu-Takayanagi entropy statement, \cref{lm Exact Ryu-Takayanagi}.
In the 2D/1D construction, the final simulation step  does introduce small errors in the Ryu-Takayanagi entropy formula.
These are polynomially suppressed by the high-weight interaction terms in the boundary Hamiltonian, and can be made arbitrarily close to zero.
In the 3D/2D construction, the ancilla qudits do not change the entanglement structure, so do not introduce errors in the Ryu-Takayanagi formula.

The proof combines many of the techniques from previous works: perturbation gadgets and the structure of the Coxeter group~\cite{Tamara}; the theory of Hamiltonian simulation~\cite{simulation}, history state simulation techniques~\cite{kohler2020translationallyinvariant} and drawing comparison to a classical Ising model~\cite{Random}.
The main function of our result is to prove that these techniques can be successfully employed together with no arising contradictions, generating a more complete toy model of holography. Full proof of the result and the relation between the bond dimension and the probability of obtaining the encoding described are given in \cref{Results with technical details}.

\section{Technical set-up}
\label{Technical set-up}

A pure quantum state on $t$ qudits of prime dimension $p$ is described by a rank $n$-tensor with components $T_{i_1i_2...i_t}$, where each index runs over $p$ values:
\begin{equation} \label{eqn tensor state}
\ket{\psi} = \sum_{i_1i_2...i_t \in \mathbb{Z}_p^t} T_{i_1i_2...i_t} \ket{i_1}\otimes \ket{i_2} \otimes ... \otimes \ket{i_t}.
\end{equation}
Consider bipartitioning this Hilbert space so that $\mathcal{H}_T = \mathcal{H}_A \otimes \mathcal{H}_B$ where $d_A \leq d_B$. Grouping the tensor's indices together into these two subsystems a state on the total Hilbert space is given by
\begin{equation}
\ket{\psi} = \sum_{a,b} T_{a,b} \ket{a} \otimes \ket{b}.
\end{equation}
By reshaping the tensor to consider $t_A$ `input legs' and $t_B$ `output legs' it represents a linear map, $V$, between two Hilbert spaces $\mathcal{H}_A \mapsto \mathcal{H}_B$ with dimensions $d_A=p^{t_A}$ and $d_B=p^{t_B}$ respectively,
\begin{equation} \label{eqn tensor mapping}
V\ket{a} = \sum_b T_{a,b} \ket{b}.
\end{equation}

Larger mappings can be generated by contracting multiple tensors together to form a tensor network; in holography this mapping is from states and observables acting on bulk indices to those acting on the boundary and vice versa. In this way tensor networks are a useful tool to represent HQECCs, since through the graphical notation the structure and properties of the encoding are manifest. Visual depictions of the mathematical objects described in the above equations are shown in \cref{fg tensor diagrams}. The choice of tensor and network geometry in quantum-code-based toy models of holography ultimately determines the model's properties and in what ways it successfully mimics AdS/CFT.

\begin{figure}[tbp]
\centering
\includegraphics[trim={13cm 12cm 2cm 4cm},clip,scale=0.28]{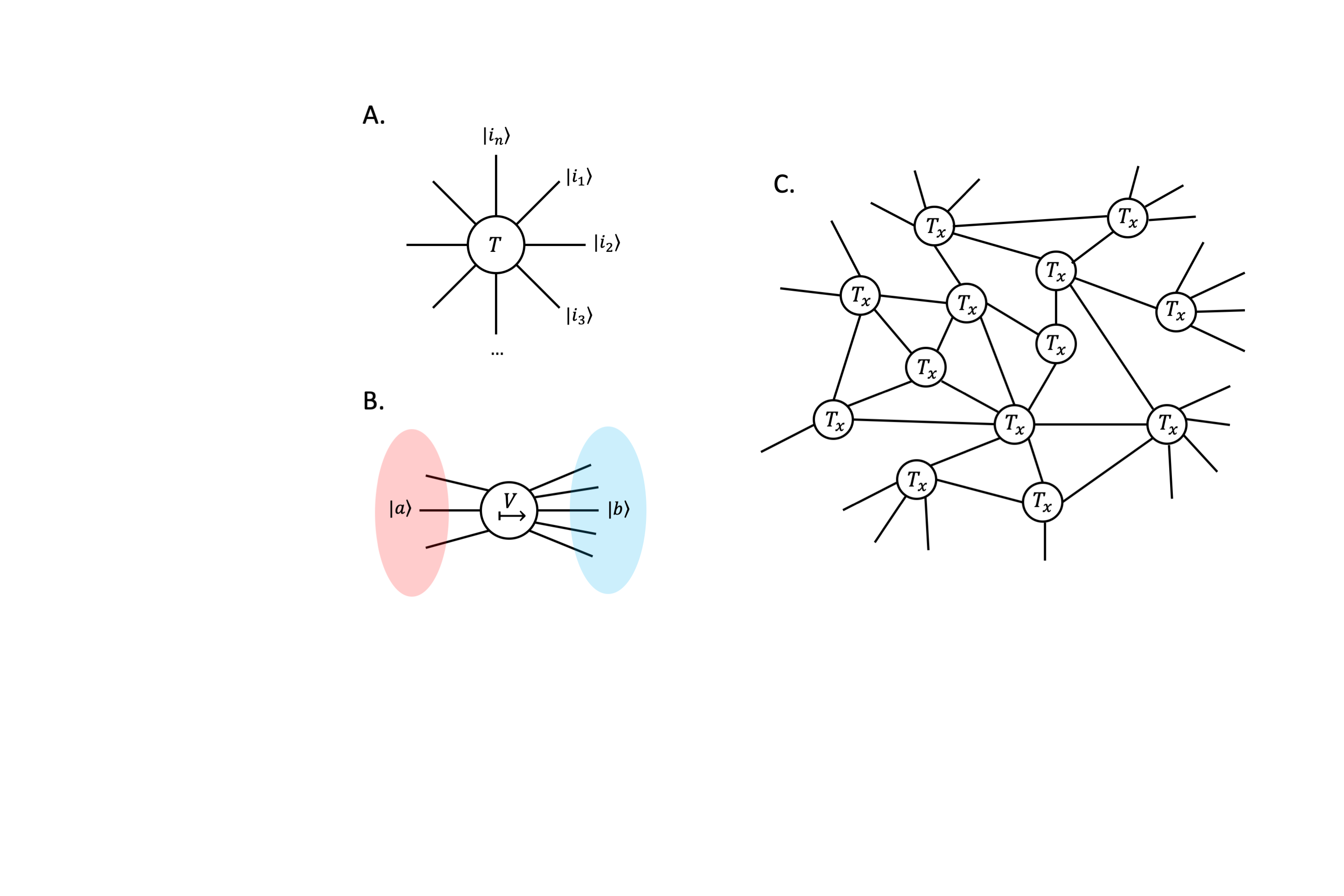}
\caption{A) Graphical representation of tensor from \cref{eqn tensor state}. B) Graphical representation of reshaping a tensor into a mapping from \cref{eqn tensor mapping}. C) Example of contracting tensors together to form a tensor network. }
 \label{fg tensor diagrams}
\end{figure}

\subsection{Perfect tensors}
\label{Perfect tensors}

The properties of the state describing the tensor, in the sense of \cref{eqn tensor state}, will determine the tensor's properties.
The set of maximally entangled states (AME states) inspired the classification of a set of isometric tensors, called `perfect tensors' in the context of the HaPPY code construction~\cite{Happy}\footnote{This idea has been extended to tensors with $2m+1$ indices via pseudo-perfect tensors, Definition 5.0.2 in~\cite{Tamara}.}.
AME sates are maximally entangled for all possible bipartitions, however there are other equivalent defining properties:
\begin{defn} [Absolutely maximally entangled states, Definition 1 from~\cite{AME}]
\label{defn AME}
\textit{An absolutely maximally entangled state is a pure state, shared among $t$ parties $P = \{1, . . . , t\}$, each having a system of dimension $p$. Hence $\ket{\Phi}\in \mathcal{H}_1 \otimes ... \otimes \mathcal{H}_t$, where $\mathcal{H}_i \cong \mathbb{C}^p$, with the following equivalent properties:
\begin{enumerate}[(i)]
\item  $\ket{\Phi}$ is maximally entangled for any possible bipartition. This means that for any bipartition of $P$ into disjoint sets $A$ and $B$ with $A \cup B = P$ and, without loss of generality,  $m = |B| \leq |A| = t-m$, the state $\ket{\Phi}$ can be written in the form
\begin{equation}
\ket{\Phi} = \frac{1}{\sqrt{p^m}} \sum_{k\in \mathbb{Z}_p^m}\ket{k_1}_{B_a} ... \ket{k_m}_{B_m}\ket{\phi(k)}_A,
\end{equation}
with $\braket{\phi(k)|\phi(k')}=\delta_{kk'}$.
\item The reduced density matrix of every subset of parties $A \subset P$ with $|A| = \floor*{\frac{t}{2}} $ \footnote{The notation$\floor*{x}$ corresponds to rounding down to the nearest integer.} is totally mixed, $\rho_A = p^{-\floor*{\frac{t}{2}}}\mathbb{I}_{p^{-\floor*{\frac{t}{2}}}}$.
\item The reduced density matrix of every subset of parties $A \subset P$ with $|A| \leq \frac{t}{2}$ is totally mixed.
\item The von Neumann entropy of every subset of parties $A \subset P$ with $|A|  =  \floor*{\frac{t}{2}} $ is maximal, $S(A) = \floor*{\frac{t}{2}}  \log p$.
\item The von Neumann entropy of every subset of parties $A \subset P$ with $|A| \leq \frac{t}{2}$ is maximal, $S(A) = |A| \log p$.
\end{enumerate}
These are all necessary and sufficient conditions for a state to be absolutely maximally entangled.
We denote such a state as an $AME(t, p)$ state.}
\end{defn}

A perfect tensor acts as an isometric mapping from any operator acting on less than half of its legs to a new operator acting on the remaining legs.
Formally,
\begin{defn}[Perfect tensors, Definition 2 from~\cite{Happy}]
\label{defn Perfect tensors}
\textit{A 2$m$-index tensor $T_{a_1a_2...a_{2m}}$ is a perfect tensor if, for any bipartition of its indices into a set $A$ and complementary set $\bar{A}$ with $|A| \leq |\bar{A}|$, $T$ is proportional to an isometric tensor from $A$ to $\bar{A}$.}
\end{defn}
Examining the necessary and sufficient condition $(iii)$ from \cref{defn AME} reveals the connection between AME states and perfect tensors.
$\rho_A$ can be calculated from the reshaped tensor:
\begin{subequations}
\begin{align}
\rho_A &= \trace_B\left[ \ket{\psi}\bra{\psi} \right]\\
& = \trace_B \left[\sum_{a,b,a',b'} T_{a,b}T_{a',b'} (\ket{a}\otimes \ket{b}) (\bra{a'}\otimes \bra{b'}) \right]\\
&= \sum_{a,a',b} T_{a,b}T_{a',b}^\dagger \ket{a}\bra{a'}= VV^\dagger.
\end{align}
\end{subequations}
Therefore $\ket{\psi}$ is an AME state if and only if the tensor $T$ is perfect in the sense of \cref{defn Perfect tensors}.

Perfect tensors are a convenient choice in model holographic constructions since they have many useful properties.
For the physics on the boundary to be dual to that in the bulk, no information can be lost and the encoding map between the two spaces must be an isometry.
Since the product of isometries is another isometry, a tensor network comprising of perfect tensors is an isometric mapping -- providing the network geometry is such that the input of any tensor acts on the minor partition.
This is the case in the HaPPY code and in the generalisation proposed in~\cite{Tamara}.
Perfect tensors also produce desirable error correcting properties since every AME state is the purification of a quantum maximum distance separable (MDS) code (Theorem B.1 in~\cite{Tamara}).
Such a code saturates the quantum Singleton bound, a constraint on the distance of a $[[n,k,d]]$ quantum code:
\begin{equation}
n-k \geq 2(d-1).
\end{equation}
In a MDS code the entire code space is required to accommodate all the correctable errors and there is no wasted space.

\subsubsection{Perfect stabilizer tensors}
\label{Perfect stabilizer tensors}

Perfect \textit{stabilizer} tensors that describe \textit{stabilizer} AME states are particularly suited to the construction of quantum error correction codes. Given a factorisation of a Hilbert space into $t$ $p$-dimensional qudits. A stabilizer state on this space is the unique simultaneous $+1$ eigenvector of its stabilizer $S$,
\begin{equation}
\{\, \ket{\psi}\, | \,P\ket{\psi} = \ket{\psi} \, ,\forall \,P \in S \, \}.
\end{equation}
$S$ is an Abelian subgroup of the Pauli group excluding the element $\omega I$. In order to specify a unique state the stabilizer's minimal generating set must contain $t$ elements so that the total size of the subgroup is $|S|=p^t$.

The constructions of both~\cite{Happy} and~\cite{Tamara} are based on perfect stabilizer tensors. These tensors inherit the desirable characteristics of perfect tensors with some additional useful properties. The family of maximum distance QECCs described by different partitions of these tensors are \textit{stabilizer} codes where the stabilizer is derived from the stabilizer of the state (Theorem D.1 in~\cite{Tamara}). The group theoretic structure of stabilizer codes fosters other useful properties. A notable example is that for a perfect stabilizer tensor there exists a consistent basis for the family of QECCs that maps logical Pauli operators to physical Pauli operators hence preserving Pauli rank (Theorem D.4 in~\cite{Tamara}). The importance of this property in assembling an increasingly representative toy model is discussed in \cref{HQECC between local Hamiltonians with random stabilizer tensors}.

\subsection{Random tensors}
\label{Random tensors}

Random tensors can be generated via random states on the respective Hilbert space. To obtain the random state $\ket{\phi} = U\ket{0}$, start from an arbitrary reference state, $\ket{0}$, and apply a random unitary operation, $U$. The average over a function of the random state, $f(\ket{\phi})$, is given by integration over the unitary group, $U$, with respect to the Haar measure
\begin{equation}
\langle f(\ket{\phi}) \rangle = \int_{\mathcal{U}(d)} f(\ket{\phi}) dU.
\end{equation}
The Haar probability measure is a non-zero measure $\mu$ such that if $h$ is a probability density function on the group $G$, for all $S\subseteq G$ and $g\in G$:
\begin{equation}
\mu(gS) = \mu(Sg) = \mu(S),
\end{equation}
where
\begin{equation}
\mu(S)  := \int_{g\in S} d\mu (g) = \int_{g\in S} h(g) dg, \qquad \mu(G)  := 1.
\end{equation}
A unique Haar measure exists on every compact topological group, in particular the unitary group~\cite{Haar}.

A random tensor does not generally have any particular properties. It is only in the limit of large bond dimension of the tensor, i.e. high dimensional qudits, where networks of random tensors have been shown to exhibit holographic properties, particularly entanglement structure~\cite{Random}.

\subsubsection{Random stabilizer tensors}
\label{Random stabilizer tensors}

Random \textit{stabilizer} tensors are analogously generated by uniformly choosing \textit{stabilizer} states at random. In this case the reference state is chosen as a stabilizer state $\ket{\tilde{\psi}}$, stabilized by $S$, and instead of a random unitary, a random Clifford unitary, $C$, is applied to generate the random stabilizer state $\ket{\psi} =C \ket{\tilde{\psi}} $. Since elements of the Clifford group map the Pauli group to itself under conjugation, the resulting state is stabilized by $S' = C S C^\dagger$:
\begin{subequations}
\begin{align}
C P C^\dagger \ket{\psi} & = C P C^\dagger C \ket{\tilde{\psi}}\\
& =  C P \ket{\tilde{\psi}}\\
& =  C \ket{\tilde{\psi}}\\
& = \ket{\psi}
\end{align}
\end{subequations}
In the case of qudits of prime dimension the same procedure is followed for generating random stabilizer tensors, substituting for the generalised Pauli and Clifford operators.

The unitary Haar measure described in the above section is invariant under arbitrary unitary transformations. While uniformly sampling over the Clifford group is not equal to the Haar measure on unitary groups, we can still exploit the invariance of the Haar measure through designs:
\begin{defn}[Unitary 2-design,~\cite{unitary_design}]
\label{defn Unitary 2-design}
A set $\mathcal{D} = \{U_k\}_{k=1}^K$
of unitary matrices on $\mathcal{H}=\mathbb{C}^d$  is a unitary 2-design if it fulfils the condition:
\begin{enumerate}
\item (Twirling of states) For all $\rho\in \mathcal{B}(\mathcal{H}\otimes \mathcal{H})$
\begin{equation} \label{eqn Twirling of states}
\frac{1}{K}\sum_{U_k \in \mathcal{D}} \left(U_k \otimes U_k \right) \rho \left(U_k \otimes U_k \right)^\dagger = \int_{\mathcal{U}(d)} (U\otimes U)\rho (U\otimes U)^\dagger dU.
\end{equation}
\end{enumerate}
\end{defn}
Uniform distribution over the Clifford group on qubits is known to be a unitary 2-design, hence the \nth{2} moment of the ensemble is equal to the \nth{2} moment of the invariant Haar-random unitary~\cite{unitary_design, Clifford_2-design}. This result was generalised to Cliffords acting on qudit systems of prime power dimensions in~\cite{Clifford_2-design_qudit}. It is a network of these random stabilizer tensors that we will base our HQECC on and the fact that the Clifford group generates stabilizer states and is simultaneously a 2-design will be key in the following results.

\subsection{Hyperbolic Coxeter groups}\label{sect Hyperbolic Coxeter groups}

Our tensor network comprises of a tensor network living in hyperbolic space.
For $d>2$ analysing properties of hyperbolic tessellations by visualisation becomes cumbersome so a systematic approach is required.
\cite{Tamara} has already done this heavy lifting, generating tessellations of hyperbolic space using Coxeter polytopes which can be analysed using their associated Coxeter system.
We will recycle unchanged this procedure and the scalings that arise from such tessellations, so we refer to~\cite{Tamara} for details but include a short summary here of the main definitions.

\begin{defn}[Coxeter system,~\cite{coxeter1}]
\label{defn Coxeter system}
Let $S=\{s_i\}_{i\in I}$ for $I \subset \mathbb{Z}$ be a finite set.
Let $M=(m_{i,j})_{i,j\in I}$ be a matrix such that:
\begin{itemize}
\item $m_{ii} = 1$, $\forall i \in I$
\item $m_{ij}=m_{ji}$, $\forall i,j \in I, i\neq j$
\item $m_{ij}\in ( \mathbb{N} \setminus \{ 1\})\cup \{ \infty\}$,  $\forall i,j, \in I, i\neq j$
\end{itemize}
$M$ is called the Coxeter matrix.
The associated Coxeter group, $W$, is defined by the presentation:
\begin{equation}
W = \langle S | (s_is_j)^{m_{i,j}}= 1 \quad \forall i,j \in I \rangle
\end{equation}
The pair $(W,S)$ is called a Coxeter system
\end{defn}

Each Coxeter system has an associated polytope, $P$.
The $P$ tiles in $\mathbb{X}^d$ and the reflections in the facets of $P$ generate a Coxeter matrix $(m_{ij})_{i,j \in I}$ which is a discrete subgroup of $Isom(\mathbb{X}^d)$~\cite{coxeter3}.

\begin{defn}[Coxeter polytope,~\cite{coxeter2}\cite{Tamara}]
\label{defn Coxeter polytope}
A convex polytope in $\mathbb{X}^d = \mathbb{S}^d, \mathbb{E}^d$ or $\mathbb{H}^d$ is a convex intersection of a finite number of half spaces.
A Coxeter polytope $P\subseteq \mathbb{X}^d$ is a polytope with all dihedral angles integer submultiples of $\pi$.
\end{defn}

Properties of the Coxeter system will determine properties of the resulting tessellation and therefore the HQECC embedded in it.
Hyperbolic Coxeter groups have a growth rate $\tau >1$, which determines the number of the number of polyhedral cells in a ball of radius $r$ scaling as $O(\tau^r)$.
This is used to bound the final number of boundary qudits required.
For the resulting network to describe a bulk to boundary isometry when we place a tensor in each polyhedral cell, it is essential that the number of output indices is at least as large as the number of output indices (including the bulk index).
Since we are working in negative curvature space, most polytopes in the tessellation will have more facets shared with polytopes in the next layer (at a larger radius) than with the previous layer (at a smaller radius).
While this property is not generally true of all tessellations generated by Coxeter polytopes, Theorem 6.1 of~\cite{Tamara} gives a condition on the Coxeter system that is sufficient to ensure this is always the case.
These are just two examples of where analysing the Coxeter groups gives a systematic method of choosing an appropriate tessellation for HQECCs

\section{Results with technical details}
\label{Results with technical details}

We construct tensor network HQECCs by embedding random stabilizer tensors with $t$ legs into each cell of space-filling tessellations of hyperbolic 2-space and hyperbolic 3-space.
The tessellations are defined by Coxeter systems chosen such that the Coxeter polytope has an odd number of faces\footnote{An odd number of faces is essential for complementary recovery see \cref{lm Complementary recovery}}.
Each tensor has one uncontracted leg associated as the logical bulk index, the remaining legs are contracted with the tensors that occupy the neighbouring polyhedral cells.
The tessellation is finite such that at the cut-off the uncontracted tensor legs become the physical boundary indices.

In this section we demonstrate particular properties of this construction.
First we present a mathematically rigorous characterisation of the concentration of random stabilizer tensors about perfect tensors with increasing bond dimension, using the algebraic structure of the stabilizer group to arrive at a probability bound on having an exact perfect tensor.
Then we demonstrate via an Ising model mapping that the entanglement structure of general boundary subregions obeys the Ryu-Takayanagi formula exactly when there is no entanglement in the bulk input state.
Here we also show that the construction exhibits complementary recovery for all choices of boundary bipartition and all local bulk operators, an advance on previous models where exceptions could be found.
Finally we use simulation techniques to break down global interactions at the boundary to demonstrate that at the same time as achieving exact Ryu-Takayanagi we can describe a duality between models that encompasses local Hamiltonians.

\subsection{Random stabilizer tensor networks describe an isometry}
\label{Random stabilizer tensor networks describe an isometry}

In contrast to the perfect tensor case, it is not immediately clear that random stabilizer tensor networks with large dimension correspond to an \textit{exact} isometric mapping or maximum distance stabilizer codes.
Since the product of isometries is still an isometry, we can focus on proving a result for an individual random stabilizer tensor.
Our first result is that random stabilizer tensors with large bond dimension are perfect stabilizer tensors with high probability.
This implies that simultaneously exhibiting the advantageous properties of perfect stabilizer and random tensors is realisable.
This was previously shown to be true for random tensors in HQECCs where the dimensions of the bulk indices and boundary indices were chosen appropriately~\cite{Random}.
In the following we explicitly work out the relation between the probability and bond dimension, and our result is applicable to tensors with uniform bond dimension.

The key ingredient in this proof is that random states in high dimensional bipartite systems are subject to the `concentration of measure' phenomenon.
This means that with high probability a function of the random state will concentrate about its expectation value~\cite{Aspects_of_generic_entanglement}.
We will show that for a random stabilizer state on $n$ qudits, the von Neumann entropy of every subset of the tensor's indices with $|A|= \floor*{t/2}$ is maximal with high probability.
This is a necessary and sufficient condition for the state to be AME and therefore for the tensor to be perfect. To ensure the tensor is exactly perfect opposed to close to perfect -- which would not guarantee that the tensor described MDS stabilizer codes -- we will use the particular algebraic structure of stabilizer states to constrain the values that the entropy can take.

\subsubsection{Concentration bounds}
\label{Concentration bounds}

Measure concentration is the surprising observation that a uniform measure on a hypersphere will strongly concentrate about the equator as the dimension of the hypersphere grows. This implies that a smoothly varying function of the hypersphere will also concentrate about its expectation. Levy's lemma formalises the concentration of measure in the rigorous sense of an exponential  probability bound on a finite deviation from the expectation value:
\begin{lemma}[Levy's lemma; see~\cite{Levys_lemma}]
\label{lm Levy's lemma}
Given a function $f:\mathbb{S}^d \mapsto R$ defined on the $d$-dimensional hypersphere $\mathbb{S}^d$, and a point $\phi \in \mathbb{S}^d$ chosen uniformly at random,
\begin{equation}
\textup{Prob}_H \left[ |f(\phi)-\langle f \rangle | \geq \epsilon\right] \leq 2 \exp \left( \frac{-2C_1(d+1)\epsilon^2}{\eta^2} \right),
\end{equation}
where $\eta$ is the Lipschitz constant of $f$, given by $\eta = \sup |\nabla f |$, and $C_1$ is a positive constant (which can be taken to be $C_1=(18\pi^3)^{-1}$).
\end{lemma}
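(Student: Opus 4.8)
The plan is to prove this by reducing the concentration of $f$ about its mean to a geometric isoperimetric statement on the sphere. First I would work with the \emph{median} $m_f$ of $f$ rather than the mean, defining the sublevel set $A = \{\phi \in \mathbb{S}^d : f(\phi) \leq m_f\}$, which by definition has measure $\mu(A) \geq 1/2$. Because $f$ is $\eta$-Lipschitz with respect to the geodesic distance, any point within geodesic distance $t$ of $A$ satisfies $f(\phi) \leq m_f + \eta t$; equivalently, the set where $f > m_f + \eta t$ is disjoint from the $t$-neighbourhood $A_t$ of $A$. This converts a statement about the tail of $f$ into a statement about how much a set of measure at least $1/2$ grows under enlargement.

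The key tool is then the spherical isoperimetric inequality: among all measurable sets of a fixed measure, the geodesic ball (spherical cap) minimises the measure of the complement of its $t$-enlargement. Applying this with $\mu(A) \geq 1/2$ lets me lower bound $\mu(A_t)$ by the corresponding enlargement of a hemisphere, so that the tail $\mu\{f > m_f + \eta t\} \leq 1 - \mu(A_t)$ is controlled by the measure of a spherical cap of angular radius $\pi/2 - t$. A direct estimate of this cap measure, obtained by integrating the $\sin^{d-1}$ profile of the sphere, yields a Gaussian-type bound of the form $\exp(-c\,(d+1)t^2)$; substituting $t = \epsilon/\eta$ gives one-sided concentration about the median. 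The symmetric argument applied to $-f$ controls the lower tail, and the factor of $2$ in the statement accounts for the union of the two tails.

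The final ingredient is to pass from the median to the mean. Integrating the median concentration inequality shows that $|\langle f\rangle - m_f| = \BigO(\eta/\sqrt{d})$, a shift small enough to be absorbed into the exponent at the cost of adjusting the numerical prefactor. Carefully chaining the cap-volume estimate, this median–mean comparison, and the constant appearing in the isoperimetric inequality is precisely what pins down the explicit value $C_1 = (18\pi^3)^{-1}$ rather than some larger unspecified constant.

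I expect the main obstacle to be the isoperimetric inequality itself, which is a genuinely nontrivial geometric result whose sharp form — with the explicit cap profile — is what produces the clean Gaussian exponent proportional to $d+1$. Everything downstream of it is bookkeeping: tracking the numerical constant through the cap-volume integral and through the median-to-mean shift. An alternative route that sidesteps the sharp isoperimetric inequality would be to transfer Gaussian concentration on $\mathbb{R}^{d+1}$ to the sphere via normalisation of a Gaussian vector, using that its norm concentrates around $\sqrt{d+1}$; this trades the geometric input for an elementary but slightly lossier argument, which typically produces a worse constant than the stated $(18\pi^3)^{-1}$.
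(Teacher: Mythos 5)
The paper does not prove this lemma at all --- it is imported verbatim from the literature (\cite{Levys_lemma}, in the mean form with $C_1=(18\pi^3)^{-1}$, as quoted e.g.\ in~\cite{Aspects_of_generic_entanglement}), so there is no in-paper proof to compare against. Your sketch --- the spherical isoperimetric inequality bounding the complement of the $t$-enlargement of a half-measure sublevel set by a cap, Lipschitz continuity converting this into concentration about the median, the factor of $2$ from the two tails, and the $\BigO(\eta/\sqrt{d})$ median-to-mean shift absorbed into the constant --- is precisely the classical argument behind that citation, and it is sound.
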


Pure $d_{AB}$-dimensional quantum states can be represented by points on the surface of a hypersphere in  $(2d_{AB}-1)$ dimensions due to normalisation.
Therefore by setting $d=2d_{AB}-1$, the above can be applied to functions of states $\ket{\phi}_{AB}$ chosen randomly with respect to the Haar measure.
However, for this construction it is important that we use random stabiliser states, $\ket{\psi}_{AB}$, so we must take an exact 2-design. Low showed in~\cite{Low_2009} that in general t-designs give large deviations, particularly for low t.
By leveraging intermediate results of~\cite{Low_2009} alongside a quantisation condition on entropy for stabiliser states we will show that $S(\trace_B(\ket{\psi}\bra{\psi}))=\log d_A$ with high probability.  In order to arrive at this more general concentration result for entropy we will need the following bound on moments.

\begin{lemma}[Bound on moments; see~\cite{Low_2009} Lemma 3.3]
\label{lm moment bound}
Let $X$ be any random variable with probability concentration
\begin{equation}
\textup{Prob} \left(|X-\mu|\geq \epsilon \right) \leq C_2e^{-a\epsilon^2}.
\end{equation}
(Normally $\mu$ will be the expectation of $X$, although the bound does not assume this.)
Then,
\begin{equation}
\langle |X-\mu|^m \rangle  \leq C_2 \Gamma(m/2 +1)a^{-m/2} \leq C_2 \left( \frac{m}{2a}\right)^{m/2}
\end{equation}
for any $m>0$.
\end{lemma}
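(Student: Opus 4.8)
The plan is to bound the $m$-th absolute moment directly from the supplied tail estimate using the standard ``layer-cake'' representation of expectations of nonnegative random variables, so that no additional structure on $X$ (in particular, no assumption that $\mu$ is its mean) is needed. First I would set $Y \defeq |X-\mu| \geq 0$ and write
\begin{equation}
\langle Y^m \rangle = \int_0^\infty m\,\epsilon^{m-1}\,\textup{Prob}\!\left(Y \geq \epsilon\right) d\epsilon,
\end{equation}
which follows from Fubini's theorem applied to the pointwise identity $Y^m = \int_0^Y m\,\epsilon^{m-1}\,d\epsilon$ (equivalently, one may integrate $\langle Y^m\rangle = \int_0^\infty \textup{Prob}(Y^m\geq s)\,ds$ by parts). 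This representation is valid for every $m>0$ and for an arbitrary reference point $\mu$, matching the generality demanded by the statement.

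Next I would substitute the hypothesised concentration inequality $\textup{Prob}(Y \geq \epsilon) \leq C_2 e^{-a\epsilon^2}$ to obtain
\begin{equation}
\langle Y^m \rangle \leq C_2\, m \int_0^\infty \epsilon^{m-1} e^{-a\epsilon^2}\,d\epsilon .
\end{equation}
The remaining integral is a routine Gaussian moment integral: under the change of variables $u = a\epsilon^2$ it reduces to $\tfrac{1}{2}a^{-m/2}\Gamma(m/2)$. Folding in the prefactor $C_2 m$ and using the functional equation $\tfrac{m}{2}\Gamma(m/2) = \Gamma(m/2+1)$ then yields the first claimed inequality $\langle Y^m\rangle \leq C_2\,\Gamma(m/2+1)\,a^{-m/2}$.

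For the second inequality I would invoke a Stirling-type bound $\Gamma(m/2+1) \leq (m/2)^{m/2}$, which converts the Gamma factor into the cleaner power $(m/(2a))^{m/2}$. I expect the only genuine (and mild) obstacle to lie precisely here: writing $x = m/2$, the bound $\Gamma(x+1) \leq x^x$ follows from the asymptotic $\Gamma(x+1)\sim \sqrt{2\pi x}\,(x/e)^x$, whose subexponential prefactor is dominated by $x^x$, and a short monotonicity check (the function $x\log x - \log\Gamma(x+1)$ vanishes at $x=1$ and is increasing thereafter) shows it holds exactly for $x\geq 1$, i.e.\ for $m\geq 2$. It can fail for $0<m<2$, but since this lemma is only applied downstream with $m$ a growing parameter in the entropy-concentration argument, I would either state the power-law form for $m$ above a fixed constant or absorb the resulting $O(1)$ discrepancy into $C_2$; everything else is elementary integral evaluation.
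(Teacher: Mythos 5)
Your proof is correct, and it is essentially the canonical argument: the paper does not prove this lemma at all but imports it from~\cite{Low_2009}, whose own proof is exactly your layer-cake computation --- integrate the tail bound against $m\epsilon^{m-1}$, substitute $u=a\epsilon^2$ to reduce to $\tfrac{1}{2}a^{-m/2}\Gamma(m/2)$, and use $\tfrac{m}{2}\Gamma(m/2)=\Gamma(m/2+1)$. Your caveat about the final inequality is genuine and correctly diagnosed: $\Gamma(m/2+1)\le (m/2)^{m/2}$ holds for $m\ge 2$ (your monotonicity check is sound, e.g.\ via $\psi(x+1)<\log(x+1)\le \log x+1$ for $x\ge 1$, where $\psi$ is the digamma function) but fails for $0<m<2$; at $m=1$ one has $\Gamma(3/2)=\sqrt{\pi}/2\approx 0.886$ versus $(1/2)^{1/2}\approx 0.707$. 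One correction to your closing remark, however: downstream the paper does \emph{not} apply the lemma with growing $m$ --- it invokes it precisely at $m=1$ (in the proof of \cref{thm Random stabilizer tensors are perfect}, to bound $\langle\abs{\trace(\sigma_A^2)-\mu}\rangle_H$), which is exactly the regime where the second inequality is invalid. Your proposed fix of absorbing the discrepancy into constants is nevertheless the right one: using the valid $\Gamma$-form bound replaces $2\left(\tfrac{1}{2C_1p^t}\right)^{1/2}$ by $\sqrt{\pi}\left(C_1p^t\right)^{-1/2}$, a constant factor $\sqrt{\pi/2}\approx 1.25$ that leaves the $O(p^{-t/2})$ scaling in \cref{scaling_rhs}, and hence all subsequent results, unchanged.
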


In particular, if the function $f$ is a polynomial of degree 2 then the expectation value with respect to the Haar measure or an exact 2-design are equal. While the von Neumann entropy is not such a function, the flat eigenspectra of stabiliser states demonstrated in \cref{Quantised entropy of stabilizer states}  allows us to equivalently consider the R\'enyi-2 entropy which is the logarithm of a degree 2 polynomial, $S_2(\rho_A)= -\log(\trace(\rho_A^2))$. We first use Levy's lemma to bound the probability concentration of the purity of Haar random states, then using \cref{lm moment bound} link this concentration to entropy of pseudo random states.

\subsubsection{Expectation of $\trace(\rho_A^2) $}
\label{Expectation of purity}

The average reduced density matrix of any bipartite Haar random state is the maximally mixed state, $\mathbb{I}/d_A$, following from the invariance of the Haar measure. This is the premise for expecting that the average entropy of high dimensional random stabilizer states is close to maximal with some small fluctuation.
However, we will need the expectation of the purity of random stabilizer states, $\trace(\rho_A^2)$, in order to translate Levy's lemma applied to a Haar random degree-2 polynomial into a concentration statement concerning the entropy of stabilizer states.

We start by proving that the expectation of a tensor product of two copies of the random stabiliser state density matrix is proportional to the projector onto the symmetric subspace.
This result is a specific case of Proposition 6 from~\cite{sym_subspaces} suitable when $n=2$ and $\rho$ is a 2-design.
We will use a known theorem that relates the symmetric subspace to representation theory:

\begin{theorem}[Symmetric subspaces; see~\cite{sym_subspaces} Theorem 5]
\label{thm sym subspaces}
For $U$ the $d$\nobreakdash-dimensional unitary group, $U^{\otimes n}$ acts as an irreducible representation on the symmetric subspace of $(\mathbb{C}^d)^{\otimes n}$.
\end{theorem}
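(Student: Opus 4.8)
The plan is to prove irreducibility through Schur's lemma: I would show that the only linear operator on the symmetric subspace commuting with $U^{\otimes n}$ for every unitary $U$ is a scalar multiple of the identity, which is equivalent to the restricted representation being irreducible. Two preliminary ingredients are needed. The first is a spanning statement: the symmetric subspace $\mathrm{Sym}^n(\mathbb{C}^d) \subseteq (\mathbb{C}^d)^{\otimes n}$ is the linear span of the ``coherent'' product vectors $\{\ket{v}^{\otimes n} : v \in \mathbb{C}^d\}$. The second is the commutant structure supplied by Schur--Weyl duality, namely that an operator on $(\mathbb{C}^d)^{\otimes n}$ commutes with all of $\{U^{\otimes n}\}$ if and only if it lies in the linear span of the permutation operators $\{P_\sigma : \sigma \in S_n\}$.

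First I would establish the spanning lemma by polarization. Each $\ket{v}^{\otimes n}$ manifestly lies in $\mathrm{Sym}^n$, since it is invariant under every $P_\sigma$. For the converse, writing $v = \sum_j t_j e_j$ and expanding $(\sum_j t_j \ket{e_j})^{\otimes n}$ as a polynomial in the $t_j$, the symmetrised basis vectors $\sum_{\sigma \in S_n} \ket{e_{i_{\sigma(1)}}} \otimes \cdots \otimes \ket{e_{i_{\sigma(n)}}}$ appear as the coefficients, and these can be isolated by taking mixed partial derivatives (finite differences) in the $t_j$. Since the symmetrised basis vectors span $\mathrm{Sym}^n$, so do the product vectors.

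With these in hand the irreducibility argument is short. Let $W \subseteq \mathrm{Sym}^n$ be a nonzero invariant subspace and let $P$ be the orthogonal projector onto $W$, viewed as an operator on the full tensor power. Invariance of $W$ forces $P$ to commute with every $U^{\otimes n}$, so by the commutant statement $P$ is a linear combination of the $P_\sigma$. But every $P_\sigma$ restricts to the identity on $\mathrm{Sym}^n$ (which is precisely their common $+1$ eigenspace), so $P$ restricted to $\mathrm{Sym}^n$ is a scalar multiple of the identity; being a projector, that scalar is $0$ or $1$, giving $W = 0$ or $W = \mathrm{Sym}^n$. Hence $\mathrm{Sym}^n$ admits no nontrivial invariant subspace and the representation is irreducible. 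An alternative route avoiding the full commutant theorem is to fix a nonzero $\ket{w}\in W$, note that $\braket{v^{\otimes n}}{w} \neq 0$ for some $v$ by the spanning lemma, and then argue that the span of the orbit $\{U^{\otimes n}\ket{w}\}$ already exhausts the product vectors and hence all of $\mathrm{Sym}^n$.

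The hard part will be the commutant characterisation: proving that the commutant of $\{U^{\otimes n} : U \in \mathcal{U}(d)\}$ is exactly the span of the permutation operators is the substantive half of Schur--Weyl duality and is where the real work lies, whereas the spanning lemma and the concluding Schur's-lemma step are comparatively routine once it is available. If one prefers to sidestep Schur--Weyl, the orbit argument shifts the difficulty instead to showing that the coherent-state orbit under $U^{\otimes n}$ spans, which amounts to a separation/density argument for the degree-$n$ Veronese map $v \mapsto \ket{v}^{\otimes n}$.
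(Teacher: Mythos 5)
First, a point of comparison: the paper offers no proof of this theorem at all --- it is imported verbatim from Theorem 5 of~\cite{sym_subspaces} and used as a black box (via Schur's lemma) in the proof of \cref{lm Average of rho}. So your attempt is measured against the standard literature argument rather than anything in the paper itself.

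Your argument is correct as its main route stands. The polarization proof of the spanning lemma is the standard one, and the chain --- the orthogonal projector onto an invariant subspace lies in the commutant, the commutant of $\{U^{\otimes n}\}$ is $\operatorname{span}\{P_\sigma\}$ by Schur--Weyl, and each $P_\sigma$ restricts to the identity on $\mathrm{Sym}^n(\mathbb{C}^d)$ --- does close the argument, with no circularity, since the commutant half of Schur--Weyl is proved (e.g.\ via the double commutant theorem) without presupposing irreducibility of the symmetric subspace. Two details should be made explicit. (i) The step ``invariance of $W$ forces $P$ to commute with every $U^{\otimes n}$'' uses unitarity: $W$ is also invariant under $(U^{\otimes n})^{-1}=(U^\dagger)^{\otimes n}$, hence $W^\perp$ is invariant, and only then does the orthogonal projector lie in the commutant. (ii) Your opening claim that a trivial commutant is \emph{equivalent} to irreducibility is false for general representations --- trivial commutant gives only indecomposability (the algebra of upper-triangular matrices acting on $\mathbb{C}^2$ has scalar commutant but is reducible); it is valid here because unitary representations are completely reducible. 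Your actual projector argument quietly sidesteps this, but the framing should be corrected. On trade-offs: invoking full Schur--Weyl is heavier machinery than the statement warrants, though it buys exactly what the paper needs in \cref{lm Average of rho}, namely that any operator commuting with all $U^{\otimes n}$ acts as a scalar on the symmetric subspace. Your alternative orbit route is more self-contained but genuinely incomplete as sketched: knowing $\braket{v^{\otimes n}|w}\neq 0$ for some $v$ does not by itself show that $\operatorname{span}\{U^{\otimes n}\ket{w}\}$ contains a coherent vector; closing that gap requires, e.g., a weight-space argument (average over diagonal unitaries to extract a weight vector from $w$, then move it to the extreme weight vector $\ket{e_1}^{\otimes n}$), and you rightly flag that this is where the difficulty relocates.
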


\begin{lemma}[Average of $\rho^{\otimes 2}$]
\label{lm Average of rho}
Let $\ket{\psi}\in \mathcal{H}_{AB}$ with dimension $d_{AB}$ be a random stabilizer state obtained by applying a random element of the Clifford group $\{U_k \}_{k=1}^K$ to a reference stabilizer state $\ket{\tilde{\psi}}_{AB}$. Given the density matrix $\rho = \ket{\psi} \bra{\psi}$, the average over two copies of this density matrix is given by:
\begin{equation}
\bigg\langle \ket{\psi}\bra{\psi}_{AB}\otimes \ket{\psi}\bra{\psi}_{A'B'}\bigg\rangle = \frac{\mathbb{I}_{ABA'B'}+\mathcal{F}_{ABA'B'}}{d_{AB}(d_{AB} +1)},
\end{equation}
where $\mathbb{I}_{ABA'B'}$ is the identity of the Hilbert space $\mathcal{H}_{AB} \otimes \mathcal{H}_{A'B'}$ and $\mathcal{F}_{ABA'B'}$ is the swap operator \footnote{$\mathcal{F}_{KK'}(\ket{\phi}_K \otimes \ket{\psi}_{K'}) = \ket{\psi}_K \otimes \ket{\phi}_{K'}$}.
\end{lemma}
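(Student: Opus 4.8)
The plan is to reduce the Clifford average to a Haar average and then invoke Schur's lemma on the symmetric subspace. First I would write the random stabilizer state as $\ket{\psi} = U_k \ket{\tilde{\psi}}$ for $U_k$ drawn uniformly from the Clifford group, so that
\begin{equation}
\bigg\langle \rho_{AB}\otimes \rho_{A'B'}\bigg\rangle = \frac{1}{K}\sum_{U_k\in\mathcal{D}} (U_k\otimes U_k)\left( (\ket{\tilde{\psi}}\bra{\tilde{\psi}})^{\otimes 2} \right)(U_k\otimes U_k)^\dagger .
\end{equation}
Since the Clifford group is an exact unitary 2-design (\cref{defn Unitary 2-design}), applying \cref{eqn Twirling of states} with $\rho$ taken to be the second-moment operator $(\ket{\tilde{\psi}}\bra{\tilde{\psi}})^{\otimes 2}$ lets me replace this finite average by the Haar twirl $\int_{\mathcal{U}(d_{AB})} (U\otimes U)\,(\ket{\tilde{\psi}}\bra{\tilde{\psi}})^{\otimes 2}\,(U\otimes U)^\dagger\, dU$. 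This is the step where using Clifford-generated stabilizer states, rather than a general ensemble, is essential: the equality with the Haar average is exact precisely because the quantity is a genuine second moment and the Clifford group is an exact $2$-design.

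Next I would exploit that $\ket{\psi}\otimes\ket{\psi}$ is a symmetric vector for any $\ket{\psi}$, so the operator being averaged is supported entirely on the symmetric subspace $\mathrm{Sym}^2$ of $\mathcal{H}_{AB}\otimes\mathcal{H}_{A'B'}$. Because $U\otimes U$ preserves $\mathrm{Sym}^2$, the Haar twirl also lands in $\mathrm{Sym}^2$ and commutes with the restriction of $U\otimes U$ to that subspace. By \cref{thm sym subspaces} this restriction is an irreducible representation, so Schur's lemma forces the twirl to be a scalar multiple of the projector $P_{\mathrm{sym}} = (\mathbb{I}_{ABA'B'}+\mathcal{F}_{ABA'B'})/2$ onto the symmetric subspace; that is, the average equals $c\,P_{\mathrm{sym}}$ for some constant $c$.

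Finally I would fix $c$ by normalisation. Since $\rho^{\otimes 2}$ is a normalised density matrix the average has unit trace, while $\trace(P_{\mathrm{sym}})$ equals the dimension $d_{AB}(d_{AB}+1)/2$ of the symmetric subspace of two $d_{AB}$-dimensional systems. Hence $c = 2/(d_{AB}(d_{AB}+1))$ and
\begin{equation}
\bigg\langle \rho_{AB}\otimes \rho_{A'B'}\bigg\rangle = \frac{2}{d_{AB}(d_{AB}+1)}\,P_{\mathrm{sym}} = \frac{\mathbb{I}_{ABA'B'}+\mathcal{F}_{ABA'B'}}{d_{AB}(d_{AB}+1)},
\end{equation}
which is the claimed identity. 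The main obstacle is conceptual rather than computational: it is ensuring that the passage from the Clifford average to the Haar average is exact, since the whole point of working with stabilizer (Clifford) ensembles is that the $2$-design property makes this reduction hold without any approximation error, which is what later allows the entropy concentration to be sharpened to an exact statement.
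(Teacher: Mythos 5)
Your proposal is correct and follows essentially the same route as the paper's own proof: replacing the Clifford average by a Haar twirl via the exact 2-design property, restricting to the symmetric subspace where $U\otimes U$ acts irreducibly (\cref{thm sym subspaces}), applying Schur's lemma to get a multiple of $\frac{1}{2}(\mathbb{I}_{ABA'B'}+\mathcal{F}_{ABA'B'})$, and fixing the constant by the trace normalisation $\trace\left(\Pi^{\text{sym}}\right)=d_{AB}(d_{AB}+1)/2$. No gaps; your closing remark about the exactness of the 2-design reduction is precisely the point the paper relies on.
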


\begin{proof}
The average of $\rho^{\otimes 2}$ is given by the sum over the Clifford group. Since the Clifford group forms a 2-design, using the twirling of states relation \cref{eqn Twirling of states} this sum can be replaced with integration over the full Haar measure,
\begin{subequations}
\begin{align}
\bigg\langle \ket{\psi}\bra{\psi}_{AB}\otimes &\ket{\psi}\bra{\psi}_{A'B'}\bigg\rangle \notag \\
&=\frac{1}{K}\sum_{k=1}^K \left(U_k\otimes U_k \right) \left( \ket{\tilde{\psi}}\bra{\tilde{\psi}}_{AB} \otimes \ket{\tilde{\psi}}\bra{\tilde{\psi}}_{A'B'}\right) \left(U_k^\dagger \otimes U_k^\dagger \right) \\
& =\int_{\mathcal{U}(d)} dU \left(U\otimes U \right) \left( \ket{\tilde{\psi}}\bra{\tilde{\psi}}_{AB} \otimes \ket{\tilde{\psi}}\bra{\tilde{\psi}}_{A'B'}\right) \left(U^\dagger \otimes U^\dagger \right).
\end{align}
\end{subequations}
Therefore the above average is invariant under conjugation by $(V\otimes V)$ for any unitary $V$,
\begin{equation}
(V\otimes V) \big\langle\ket{\psi}\bra{\psi}_{AB}\otimes \ket{\psi}\bra{\psi}_{A'B'} \big\rangle  (V^\dagger \otimes V^\dagger)= \big\langle\ket{\psi}\bra{\psi}_{AB}\otimes \ket{\psi}\bra{\psi}_{A'B'} \big\rangle.
\end{equation}
The operator $\big\langle\rho^{\otimes 2}\big\rangle$ only has support on the symmetric subspace of $\mathcal{H}_{AB}\otimes \mathcal{H}_{A'B'}$ and commutes with every element of $U\otimes U$.
\cref{thm sym subspaces} implies that $U\otimes U$ is irreducible on the symmetric subspace.
Since $\big\langle\rho^{\otimes 2}\big\rangle$ commutes with every element of this irrep, it must be proportional to the identity operator on that subspace by Shur's Lemma.
The identity operator on the symmetric subspace is the projector onto that subspace, which can be expressed as $\Pi^\text{sym}_{ABA'B'} = \frac{1}{2}(\mathbb{I}_{ABA'B'}+\mathcal{F}_{ABA'B'})$.
It then follows from normalisation that
\begin{subequations}
\begin{align}
\big\langle \ket{\psi}\bra{\psi}_{AB}\otimes \ket{\psi}\bra{\psi}_{A'B'}\big\rangle &= \frac{1}{\text{\tiny dim of sym subspace of } {\scriptscriptstyle ABA'B'}} \frac{1}{2}(\mathbb{I}_{ABA'B'}+\mathcal{F}_{ABA'B'})\\
& = \frac{2}{d_{AB}(d_{AB}+1)} \frac{1}{2}(\mathbb{I}_{ABA'B'}+\mathcal{F}_{ABA'B'}).
\end{align}
\end{subequations}

\end{proof}

We arrive at a value for the expectation value of the purity of a random stabiliser state, which will later appear in our application of Levy's lemma.

\begin{lemma}[Average of the purity]
\label{lm Average of the purity}
Given a stabilizer state $\ket{\psi}_{AB}$ on $t$ qudits of prime dimension $p$, bipartitioned into subsets $A$ and $B$ of $t_A$ and $t_B$ qudits respectively where $t_A \leq t_B$. The average purity of the reduced density matrix is given by,
\begin{equation}
\big\langle \trace_A (\rho_A^2) \big\rangle = \frac{d_A+d_B}{d_{AB}+1}
\end{equation}
where $d_A=p^{t_A}$, $d_B=p^{t_B}$, $d_AB=p^{t_A+t_B}$ are the dimensions of the respective (sub)spaces.
\end{lemma}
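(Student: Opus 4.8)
The plan is to reduce the average purity to the second moment $\langle \rho^{\otimes 2}\rangle$ computed in \cref{lm Average of rho}, via the standard ``swap trick''. First I would introduce a second copy of the system labelled by primed subsystems $A'B'$ and recall the identity $\trace(\rho_A^2) = \trace\!\big(\mathcal{F}_{AA'}(\rho_A \otimes \rho_{A'})\big)$, where $\mathcal{F}_{AA'}$ swaps the two copies of $A$. Since $\rho_A = \trace_B(\rho_{AB})$, I can write $\rho_A \otimes \rho_{A'} = \trace_{BB'}(\rho_{AB}\otimes \rho_{A'B'})$. Averaging is linear and so commutes with the trace operations, giving
\begin{equation}
\big\langle \trace(\rho_A^2)\big\rangle = \trace\!\Big(\mathcal{F}_{AA'}\,\trace_{BB'}\big\langle \rho_{AB}\otimes \rho_{A'B'}\big\rangle\Big).
\end{equation}

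Next I would substitute the result of \cref{lm Average of rho}, namely $\langle \rho_{AB}\otimes \rho_{A'B'}\rangle = (\mathbb{I}_{ABA'B'}+\mathcal{F}_{ABA'B'})/(d_{AB}(d_{AB}+1))$, and evaluate the two partial traces over $BB'$. The identity factorises across the tensor factors, giving $\trace_{BB'}(\mathbb{I}_{ABA'B'}) = d_B^2\,\mathbb{I}_{AA'}$, while the full swap factorises as $\mathcal{F}_{ABA'B'} = \mathcal{F}_{AA'}\otimes \mathcal{F}_{BB'}$, so that $\trace_{BB'}(\mathcal{F}_{ABA'B'}) = \trace(\mathcal{F}_{BB'})\,\mathcal{F}_{AA'} = d_B\,\mathcal{F}_{AA'}$, using $\trace(\mathcal{F}_{BB'}) = d_B$.

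Then I would apply the remaining $\mathcal{F}_{AA'}$ and take the outer trace, using $\trace(\mathcal{F}_{AA'}) = d_A$ and $\trace(\mathcal{F}_{AA'}^2) = \trace(\mathbb{I}_{AA'}) = d_A^2$, to obtain
\begin{equation}
\big\langle \trace(\rho_A^2)\big\rangle = \frac{d_A d_B^2 + d_A^2 d_B}{d_{AB}(d_{AB}+1)} = \frac{d_A d_B(d_A+d_B)}{d_{AB}(d_{AB}+1)}.
\end{equation}
Finally, substituting $d_{AB} = d_A d_B$ cancels the $d_A d_B$ prefactor against $d_{AB}$ in the denominator and yields the claimed $\big\langle\trace(\rho_A^2)\big\rangle = (d_A+d_B)/(d_{AB}+1)$.

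The computation is essentially mechanical once \cref{lm Average of rho} is in hand, so there is no deep obstacle; the only place demanding care is the bookkeeping of which subsystems each swap and identity act on, in particular the factorisation $\mathcal{F}_{ABA'B'} = \mathcal{F}_{AA'}\otimes \mathcal{F}_{BB'}$ and the partial-trace identities $\trace(\mathcal{F}) = \dim$ for a swap on a single tensor factor. I would verify these trace identities explicitly, since a misplaced factor of $d_A$ versus $d_B$ is the most likely source of error.
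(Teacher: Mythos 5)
Your proposal is correct and takes essentially the same route as the paper's proof: both reduce the average purity via the swap trick to the second moment $\big\langle \rho^{\otimes 2}\big\rangle$ from \cref{lm Average of rho}, factorise $\mathcal{F}_{ABA'B'} = \mathcal{F}_{AA'}\otimes\mathcal{F}_{BB'}$, and evaluate the resulting traces. The only cosmetic difference is that you carry out the partial trace over $BB'$ explicitly using $\trace(\mathcal{F}_{BB'})=d_B$, whereas the paper keeps full traces over both copies and applies the swap trick in reverse at the final step; the arithmetic is identical.
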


\begin{proof}
Considering a second copy of the total Hilbert space: $\mathcal{H}_{T'} = \mathcal{H}_{A'} \otimes \mathcal{H}_{B'}$ one can apply the swap trick,
\begin{subequations}
\begin{align}
\trace_A (\rho_A^2) & = \trace_{AA'}\left[(\rho_A\otimes \rho_{A'}) \mathcal{F}_{AA'} \right]\\
& = \trace_{TT'}\big[(\ket{\psi}\bra{\psi}_{T}\otimes \ket{\psi}\bra{\psi}_{T'}) (\mathcal{F}_{AA'}\otimes \mathbb{I}_{BB'})\big].
\end{align}
\end{subequations}
The average of this function can be calculated by considering the average over the random states,
\begin{equation}
 \big\langle \trace_A (\rho_A^2) \big\rangle= \trace_{TT'}\left[ \big\langle\ket{\psi}\bra{\psi}_{T}\otimes \ket{\psi}\bra{\psi}_{T'}\big\rangle (\mathcal{F}_{AA'}\otimes \mathbb{I}_{BB'})\right].
\end{equation}
Substituting into the above the average of $\rho^{\otimes 2}$ from \cref{lm Average of rho}, using $d_T$ to denote the dimension of the total Hilbert space $\mathcal{H}_T$,
\begin{equation}
\big\langle \trace_A (\rho_A^2) \big\rangle = \trace_{TT'}\left[ \frac{\left(\mathbb{I}_{TT'}+\mathcal{F}_{TT'}\right)}{d_T(d_T +1)} (\mathcal{F}_{AA'}\otimes \mathbb{I}_{BB'})\right].
\end{equation}
Writing $\mathcal{F}_{TT'} = \mathcal{F}_{AA'} \otimes \mathcal{F}_{BB'}$ and noting that $\mathcal{F}_{KK'}^2 = \mathbb{I}_{KK'}$:
\begin{subequations}
\begin{align}
\big\langle \trace_A (\rho_A^2) \big\rangle &= \trace_{TT'}\left[\frac{\left(\mathbb{I}_{TT'}+\mathcal{F}_{AA'}\otimes\mathcal{F}_{BB'} \right)}{d_T(d_T +1)} (\mathcal{F}_{AA'}\otimes \mathbb{I}_{BB'})\right]\\
& = \trace_{TT'}\left[ \frac{\mathbb{I}_{TT'}(\mathcal{F}_{AA'}\otimes \mathbb{I}_{BB'})}{d_T(d_T +1)} \right] + \trace_{TT'}\left[ \frac{\mathbb{I}_{TT'}(\mathbb{I}_{AA'}\otimes \mathcal{F}_{BB'})}{d_T(d_T +1)} \right]\\
& \begin{multlined} =\trace_{TT'}\left[ \left(\frac{\mathbb{I}_{T}}{d_T} \otimes \frac{\mathbb{I}_{T'}}{d_{T}+1} \right) (\mathcal{F}_{AA'}\otimes \mathbb{I}_{BB'})\right]  \\
\hspace{60pt}+ \trace_{TT'}\left[ \left(\frac{\mathbb{I}_{T}}{d_T} \otimes \frac{\mathbb{I}_{T'}}{d_{T}+1} \right) (\mathbb{I}_{AA'}\otimes \mathcal{F}_{BB'})\right]
\end{multlined}\\
& \begin{multlined} =\frac{d_T}{d_T+1} \left(\trace_{TT'}\left[ \left(\frac{\mathbb{I}_{T}}{d_T} \otimes \frac{\mathbb{I}_{T'}}{d_{T}} \right) (\mathcal{F}_{AA'}\otimes \mathbb{I}_{BB'})\right]  \right. \\
\hspace{68pt} \left. +\trace_{TT'}\left[ \left(\frac{\mathbb{I}_{T}}{d_T} \otimes \frac{\mathbb{I}_{T'}}{d_{T}} \right) (\mathbb{I}_{AA'}\otimes \mathcal{F}_{BB'})\right]\right)
\end{multlined}\label{eqnbeforeswap}\\
& = \frac{d_T}{d_T+1} \left( \trace_A \left[\left(\frac{\mathbb{I}_A}{d_A} \right)^2 \right] + \trace_B \left[\left(\frac{\mathbb{I}_B}{d_B} \right)^2 \right] \right)\label{eqnafterswap}\\
& = \frac{d_A + d_B}{d_Ad_B+1}.
\end{align}
\end{subequations}
Where between lines \cref{eqnbeforeswap} and \cref{eqnafterswap} we have used the swap trick in reverse to obtain the result.
\end{proof}

\subsubsection{Quantised entropy of stabilizer states}
\label{Quantised entropy of stabilizer states}

Applying Levy's lemma to the purity of Haar random states, along with the bounds on moments from \cref{lm moment bound}, is already sufficient to conclude that high dimensional random stabilizer tensors are \textit{close} to perfect with high probability \footnote{$S(\rho_A)$ is always lower bounded by $S_2(\rho_A)$}.
Some properties of perfect tensors will follow approximately from this statement. For example, the tensor is an approximate isometry across any bipartition where the departure can be suppressed by scaling the bond dimension. One could individually investigate the behaviours of this approximate perfect tensor to see if the deviation can be suppressed in all relevant cases. Instead we look to exploit the algebraic structure of the stabilizer group to demonstrate a strengthened result: high dimensional random stabilizer tensors are \textit{exactly} perfect still with high probability.

The following theorem builds on results (see e.g.~\cite{stabilizer}) showing that the entropy of a bipartite stabilizer state is quantised.

\begin{theorem}[Quantisation of entropy]
\label{thm Quantisation of entropy}
Given a stabilizer state $\ket{\psi}_{AB}$ on $t$ qudits of prime dimension $p$, bipartitioned into subsets $A$ and $B$ of $t_A$ and $t_B$ qudits respectively where $t_A \leq t_B$. The reduced density matrix, $\rho_A$, has a flat spectrum and its entropy, $S(\rho_A)$, is quantised in units of $\log p$.
\end{theorem}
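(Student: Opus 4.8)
The plan is to work directly in the stabilizer formalism, computing $\rho_A$ from the stabilizer group $S$ and showing it is proportional to a projector, which delivers both the flat spectrum and the quantisation at once. First I would write the pure-state density matrix as a uniform sum over the stabilizer group,
\begin{equation}
\ket{\psi}\bra{\psi} = \frac{1}{p^t}\sum_{g\in S} g,
\end{equation}
which holds because the right-hand side is the projector onto the unique simultaneous $+1$ eigenspace of $S$ (recall $|S|=p^t$). Taking the partial trace over $B$, each generalised Pauli factorises as $g = g_A \otimes g_B$ and contributes $\trace_B[g_B]\,g_A$; since any non-identity generalised Pauli is traceless, only terms with $g_B \propto \mathbb{I}_B$ survive.

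Next I would identify the surviving terms with the subgroup $S_A = \{\, g\in S : g = g_A\otimes\mathbb{I}_B \,\}$ of stabilizer elements acting trivially on $B$ (the factorisation is exact, including phase, because $S$ contains no nontrivial phase multiple of the global identity). This gives
\begin{equation}
\rho_A = \frac{|S_A|}{d_A}\left(\frac{1}{|S_A|}\sum_{g\in S_A} g_A\right),
\end{equation}
and I would argue the bracketed operator is a projector. The restrictions $\{g_A : g\in S_A\}$ form an abelian group of commuting generalised Paulis containing no nontrivial phase times $\mathbb{I}_A$ (else $S$ would contain $\omega\mathbb{I}$), so their average projects onto the simultaneous $+1$ eigenspace $\Pi_A$, of dimension $d_A/|S_A|$. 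Hence $\rho_A = (|S_A|/d_A)\,\Pi_A$ has a flat spectrum: every nonzero eigenvalue equals $|S_A|/d_A$ with multiplicity $d_A/|S_A|$, and $\trace\rho_A = 1$ is automatic.

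Finally, since $S_A$ is a subgroup of the abelian $p$-group $S$ of order $p^t$, its order is $|S_A| = p^{k_A}$ for some integer $k_A$, so
\begin{equation}
S(\rho_A) = -\log\frac{|S_A|}{d_A} = (t_A - k_A)\log p,
\end{equation}
an integer multiple of $\log p$ as claimed. I expect the main obstacle to be the phase bookkeeping for generalised (qudit) Paulis at prime dimension: I must verify that restriction to $A$ is injective on $S_A$ and that $S_A|_A$ inherits the absence of nontrivial central phases, so that the projector-counting argument giving dimension $d_A/|S_A|$ is valid for all prime $p$ rather than only $p=2$. Everything else reduces to elementary abelian group theory together with the traceless property of non-identity Paulis.
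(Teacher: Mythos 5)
Your proposal is correct and follows essentially the same route as the paper: both express $\ket{\psi}\bra{\psi}$ as the uniform sum $\frac{1}{p^t}\sum_{g\in S} g$, trace out $B$ to obtain $\rho_A \propto \sum_{g\in S_A} g$ over the subgroup $S_A$ acting trivially on $B$, deduce that $\rho_A$ is proportional to a projector (the paper via $\rho_A^2 = \frac{|S_A|}{p^{t_A}}\rho_A$ using the rearrangement theorem, you via identifying the group average with the projector onto the joint $+1$ eigenspace --- the same fact), and conclude quantisation from $|S_A|$ being a power of $p$. Your write-up is if anything slightly more self-contained, since you derive the partial-trace formula and the phase/injectivity bookkeeping that the paper delegates to a citation, and the worry you flag resolves exactly as you anticipate: $g\in S_A$ determines $g_A$ via $g = g_A\otimes\mathbb{I}_B$, and $\omega\mathbb{I}_A$ in the restricted group would force $\omega\mathbb{I}\in S$, which is excluded.
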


\begin{proof}
Let $S$ be the stabilizer of $\ket{\psi}_{AB}$ and $S_A$ be the subgroup of $S$ consisting of all stabilizer elements that act with identity on $B$. Let $|S_A|$ denote the number of elements in the subgroup $S_A$. Since $g\rho = \rho$ for all $g\in S$ the density matrix and reduced density matrix can be written as (see e.g.~\cite{stabilizer} equations 4,5.)
\begin{equation}
\rho = \ket{\psi}\bra{\psi} = \frac{1}{p^t}\sum_{g \in S} g,
\end{equation}
\begin{equation}
\rho_A  = \trace_B\left(\ket{\psi}\bra{\psi}\right)=\frac{1}{p^{t_A}} \sum_{g\in S_A} g.
\end{equation}
Now we can show that $\rho_A$ is proportional to a projector,
\begin{subequations}
\begin{align}
\rho_A^2 & = \left( \frac{1}{p^{t_A}}\sum_{g\in S_A} g \right)^2\\
&= \frac{1}{p^{2t_A}} \sum_{g,g'\in S_A} g g' \\
& =\frac{1}{p^{2t_A}} \sum_{g,g''\in S_A} g'' \tag*{ by the Rearrangement theorem, $\{hg|g\in G\}=G$}\\
& = \frac{|S_A|}{p^{2t_A}} \sum_{g'' \in S_A} g'' \\
& = \frac{|S_A|}{p^{t_A}}\rho_A.
\end{align}
\end{subequations}
Therefore since $\rho_A^2 \propto \rho_A$ it has a flat spectrum with the eigenvalues $\lambda = \frac{|S_A|}{p^{t_A}}$ or $\lambda=0$. The size of the subgroup $S_A$ is given by $|S_A|=p^x$ where $x\leq t_A$ is the number of stabilizer generators, giving $p^{t_A-x}$ non-zero eigenvalues $=p^{x-t_A}$ (since $\trace\rho_A=1$). Consider the von Neumann entropy of $\rho_A$ (although since stabilizer states have flat entanglement spectrum all R\'enyi entropies are equal to the von Neumann entropy),
\begin{subequations}
\begin{align}
S(\rho_A) &= - \trace \left(  \rho_A \log \rho_A \right) \\
& = -\sum_j \lambda_j \log \lambda_j \\
& = -\sum_{j=1}^{p^{t_A-x}} \frac{|S_A|}{p^{t_A}} \log \frac{|S_A|}{p^{t_A}}\\
& = - \sum_{j=1}^{p^{t_A-x}} p^{x-t_A} \log (p^{x-t_A})\\
& = (t_A-x) \log p.
\end{align}
\end{subequations}
$t_A$ and $x$ are both integers hence $S(\rho_A)$ is quantised in units of $\log p$.
\end{proof}

\subsubsection{Random stabilizer tensors are perfect tensors with high probability}
\label{Random stabilizer tensors are perfect tensors with high probability}

Our first key result is that a random stabilizer tensor is exactly perfect with probability that can be pushed arbitrarily close to 1 by scaling the bond dimension, $p$. Formally,

\begin{theorem}[Random stabilizer tensors are perfect]
\label{thm Random stabilizer tensors are perfect}
Let the tensor $T$, with $t$ legs, describe a stabilizer state $\ket{\psi}$ chosen uniformly at random where each leg corresponds to a prime $p$-dimensional qudit. The tensor $T$ is perfect in the sense of \cref{defn Perfect tensors} with probability
\begin{equation} \label{high_prob_eqn}
P \geq  \max \left\{ 0, 1 - \frac{1}{2p^b}{t\choose \floor{t/2}}\right\}
\end{equation}
in the limit where $p$ is large, Where $0< b \leq 1$.
 \end{theorem}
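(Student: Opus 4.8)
The plan is to reduce perfectness of $T$ to a statement about the entropy of the associated random stabilizer state $\ket{\psi}$, and then to control that entropy using the 2-design property together with quantisation. Since $\ket{\psi}$ is an $AME(t,p)$ state exactly when $T$ is perfect, and by condition $(iv)$ of \cref{defn AME} this holds iff $S(\rho_A) = |A|\log p$ for every index subset with $|A| = \floor{t/2}$, I would aim to bound the failure probability for a single such bipartition by $\tfrac{1}{2p^b}$ and then close with a union bound over all $\binom{t}{\floor{t/2}}$ such subsets. This is precisely the combinatorial factor appearing in \cref{high_prob_eqn}, and the outer $\max\{0,\cdot\}$ then absorbs the regime where the resulting bound is vacuous.

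For a fixed subset $A$ the key move is to replace the (non-polynomial) von Neumann entropy by the purity. \cref{thm Quantisation of entropy} guarantees that $\rho_A$ has a flat spectrum, so $S(\rho_A) = S_2(\rho_A) = -\log\trace(\rho_A^2)$, and maximal entropy is equivalent to $\trace(\rho_A^2) = 1/d_A$. Crucially, quantisation also produces a gap: if $A$ is \emph{not} maximally entropic then $S(\rho_A) \leq (t_A - 1)\log p$, which forces $\trace(\rho_A^2) \geq p/d_A$. Thus failure on $A$ is the event that the purity exceeds its minimal value by the macroscopic factor $p$, and only a rather weak concentration of the purity is needed to exclude it.

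The expectation of the purity is where the 2-design enters: by \cref{lm Average of the purity}, $\langle\trace(\rho_A^2)\rangle = (d_A + d_B)/(d_{AB}+1)$, which for $|A| = \floor{t/2}$ lies within a small multiplicative constant of the minimal value $1/d_A$. I would then apply a one-sided Markov inequality to the non-negative deviation $\trace(\rho_A^2) - 1/d_A$ to bound the probability that the purity reaches $p/d_A$; since the mean deviation is of order $1/d_A$ while the threshold deviation is of order $p/d_A$, this yields a per-bipartition failure probability of order $1/p$, which in the large-$p$ limit is at most $\tfrac{1}{2p^b}$ for $0 < b \leq 1$. Only the \emph{first} moment of the purity is used here, which is exactly what a stabilizer 2-design matches to Haar; higher moments are not matched, and Levy's lemma (\cref{lm Levy's lemma}) together with the moment transfer of \cref{lm moment bound} would only be needed for the weaker `close to perfect' estimate.

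The main obstacle, and the reason the argument is delicate, is precisely this interplay: Levy's lemma supplies strong Gaussian concentration only for \emph{Haar}-random states, whereas we are forced to work with stabilizer states, for which a 2-design controls only the degree-two (purity) moment. The decisive ingredient is therefore not sharp concentration but \cref{thm Quantisation of entropy}: the discreteness of the stabilizer entropy spectrum upgrades an $O(1/p)$ first-moment bound on the purity into an \emph{exact} perfectness statement, since any departure from maximal entropy must be a full quantum of $\log p$, i.e.\ a factor-$p$ jump in purity. The remaining work is bookkeeping: verifying the expected-purity estimate for both parities of $t$ and confirming that the constant and the exponent $b$ survive the union bound in the $p\to\infty$ limit.
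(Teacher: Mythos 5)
Your proposal is correct, and it shares the paper's overall skeleton --- reduce perfectness to maximal entropy on every balanced bipartition via \cref{defn AME}, pass from von Neumann entropy to purity using the flat spectrum of \cref{thm Quantisation of entropy}, use the quantisation gap to upgrade an approximate concentration statement to an exact one, apply Markov, and close with a union (Fr\'echet) bound producing the factor $\binom{t}{\floor{t/2}}$ --- but your analytic core is genuinely leaner than the paper's. The paper first invokes Levy's lemma (\cref{lm Levy's lemma}) for Haar-random states and then the moment-transfer bound (\cref{lm moment bound} with $m=1$) to control $\langle|\trace(\sigma_A^2)-\mu|\rangle_H$, before applying Markov at the threshold $\mu+\delta$ with $\delta = O(p^{b-\floor{t/2}})$ chosen just small enough that the corresponding entropy deficit stays below one quantum $\log p$. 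You observe, correctly, that since $\mu$ in \cref{lm Average of the purity} is the \emph{exact} Haar mean, matched by the 2-design because purity is a degree-2 polynomial, only the first moment is load-bearing: the Levy/moment detour contributes merely the subleading additive term $2\sqrt{1/(2C_1p^t)}$ to the numerator of the paper's Markov bound and can be dropped entirely. Moreover, by placing the Markov threshold at the full quantised jump $p/d_A$ rather than at $\mu+\delta$, your per-bipartition failure probability becomes $\langle\trace(\rho_A^2)-1/d_A\rangle \big/ \bigl((p-1)/d_A\bigr) = \frac{d_A^2-1}{(p-1)(d_A d_B+1)} \leq \frac{1}{p-1}$, which is $O(1/p)$ for even $t$ and $O(1/p^2)$ for odd $t$ --- strictly sharper than the paper's per-bipartition $O(p^{-b})$ with $b<1$, and comfortably implying \cref{high_prob_eqn} for every $0<b<1$ at large $p$. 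One shared wrinkle: at the endpoint $b=1$ your constant ($\tfrac{1}{p-1}$ versus the target $\tfrac{1}{2p}$) does not quite close, but the paper's own derivation has the same issue (its proof establishes the bound for $0\leq b<1$ while the theorem statement asserts $0<b\leq 1$), so this is a pre-existing inconsistency rather than a defect of your argument.
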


\begin{proof}
A sufficient condition for a tensor to be perfect is that the reduced density matrix of every subset of legs $|A|=\floor*{t/2}$ is maximally mixed, using condition (ii) from \cref{defn AME} of an AME state. We first use concentration results to find the probability of being maximally entangled across a given bipartition with $d_A=p^{\floor*{t/2}}$ and $d_B=p^{\ceil*{t/2}}$.

Applying Levy's lemma to the purity (using the bound on the Lipschitz constant found in \cref{Lipschitz constant}, \cref{lm purity Lipschitz}) gives a bound on the probability tails for Haar random states, $\ket{\phi}_{AB}$,
\begin{subequations}
\begin{align}
\text{Prob}_H \left( |\trace(\sigma_A^2) - \mu| \geq \epsilon \right) &\leq 2 \exp \left(\frac{-4C_1 p^t \epsilon^2}{\eta^2} \right)\\
& \leq 2 \exp \left(\frac{-4C_1 p^t \epsilon^2}{4} \right)\\
& \leq 2 \exp \left(-C_1 p^t \epsilon^2 \right),
\end{align}
\end{subequations}
where $\sigma_A = \trace_B \left(\ket{\phi}\bra{\phi} \right)$ and $\mu$ is the mean of purity.
Since purity is a degree-2 polynomial, and stabililizer states form a 2-design, the expectation value over random stabilizer states and Haar random states coincide (see \cref{Expectation of purity}).

This can be combined with \cref{lm moment bound} where $m=1$, $C_2=2$ and $a = C_1p^t$ to give
\begin{subequations}
\begin{align}
\langle|\trace(\sigma_A^2)-\mu|\rangle_H&\leq C_2 \left( \frac{m}{2a}\right)^{m/2}\\
&= 2\left(\frac{1}{2C_1p^t} \right)^{1/2}.\label{eqn bound on dev avg}
\end{align}
\end{subequations}

Starting with Markov's inequality, \cref{eqn bound on dev avg} is used to upper bound the probability that the purity of stabilizer states is higher than the mean:
\begin{subequations}
\begin{align}
\text{Prob}_{k=2} \left(\trace(\rho_A^2)\geq \mu + \delta \right)&\leq \frac{\langle\trace(\rho_A^2)\rangle_{k=2}}{\mu+ \delta}\\
&=\frac{\langle\trace(\sigma_A^2)\rangle_{H}}{\delta + \mu}\label{eqn 2 des to haar step}\\
&=\frac{\langle\trace(\sigma_A^2)-\mu\rangle_{H} + \mu}{\delta + \mu}\\
&\leq \frac{\langle|\trace(\sigma_A^2)-\mu|\rangle_{H} + \mu}{\delta + \mu}\\
&\leq \frac{2 \sqrt{\frac{1}{2C_1p^t} }+ \mu}{\delta + \mu}\label{eqn prob1},
\end{align}
\end{subequations}
where $\rho_A = \trace_B \left(\ket{\psi}\bra{\psi} \right)$.
We have considered the $\trace (\rho_A^2)>\mu$ case since it is the lower tail of $S(\rho_A)$ we are interested in bounding.
In \cref{eqn 2 des to haar step} we have replaced the expectation over random stabilizer states $\rho_A$ with the expectation over Haar random states $\sigma_A$, since $\trace(\rho_A^2)$ is a degree-2 polynomial and stabilizer states form a 2-design.

\cref{thm Quantisation of entropy} demonstrated that the eigenspectrum is flat and therefore all R\'enyi entropies are equal.
Therefore we can relate the above probability statement to the von Neumann entropy via the R\'enyi-2 entropy using $S(\rho_A) = S_2(\rho_A) = -\log(\trace(\rho_A^2))$,
\begin{subequations}
\begin{align}
\text{Prob}_{k=2} \left(\trace(\rho_A^2) \geq \mu + \delta \right) &=  \text{Prob}_{k=2} \left(\log\trace(\rho_A^2) \geq \log(\mu + \delta) \right)\\
& = \text{Prob}_{k=2} \left(-\log\trace(\rho_A^2) \leq -\log(\mu + \delta) \right)\\
& = \text{Prob}_{k=2} \left(S(\rho_A) \leq -\log(\mu + \delta) \right).
\end{align}
\end{subequations}
Using \cref{lm Average of the purity} to substitute the average purity $\left( \mu = \frac{d_A + d_B}{d_Ad_B+1}\right)$ further manipulation gives this probability in terms of deviation from the maximum entropy:
\begin{align}
\text{Prob}&_{k=2} \left(\trace(\rho_A^2) \geq \mu +  \delta \right) \notag\\
&= \text{Prob}_{k=2} \left( S(\rho_A) \leq \log d_A - \underbrace{\log \left(\frac{d_A+d_B}{d_B + 1/d_A} + \delta d_A \right)}_{\Delta} \right) \label{eqn prob2}.
\end{align}
Combining \cref{eqn prob1} and \cref{eqn prob2} gives
\begin{equation}
\text{Prob}_{k=2} \left( S(\rho_A) \leq \log d_A - \Delta \right) \leq \frac{2 \sqrt{\frac{1}{2C_1p^t} }+ \mu}{\delta + \mu}. \label{eqn apply quant cond}
\end{equation}
In the limit of large $p$ this bound scales as:
\begin{equation}
\frac{2 \sqrt{\frac{1}{2C_1p^t} }+ \mu}{\delta + \mu}=O\left( \frac{p^{-t/2}+ p^{-\floor*{t/2}}}{\delta + p^{-\floor*{t/2}}}\right) =O\left( \frac{1}{\delta p^{\floor*{t/2}}}\right). \label{scaling_rhs}
\end{equation}

We know that the entropy is quantised in units of $\log(p)$ (see \cref{thm Quantisation of entropy}). Therefore if the deviation from the mean, $\Delta$, is less than $\log(p)$, then \cref{eqn apply quant cond} describes the probability of the entropy being less than its maximum value. The quantisation is less than the quantisation unit if:
\begin{equation}
\Delta \leq\log p \label{eqn quant cond}
\end{equation}
\begin{equation}
\frac{d_A+d_B}{d_B + 1/d_A} + \delta d_A  \leq  p
\end{equation}
\begin{equation}
\frac{p^{\floor*{t/2}}+p^{\ceil*{t/2}}}{p^{\ceil*{t/2}} + p^{-\floor*{t/2}}} + \delta p^{\floor*{t/2}}  \leq  p.
\end{equation}
For any $t$, the following marginally stronger condition on $\delta$ will also ensure that \cref{eqn quant cond} is satisfied
\begin{subequations}
\begin{align}
2 + \delta p^{\floor*{t/2}} &\leq p\\
\delta & \leq p^{-\floor*{t/2}}(p-2),
\end{align}
\end{subequations}
which is satisfied by $\delta = O(p^{b-\floor*{t/2}})$ with $0\leq b < 1$. Combining this and the bound from \cref{scaling_rhs}, we find that in the limit of large $p$, \cref{eqn apply quant cond} becomes,
\begin{subequations}
\begin{align}
\text{Prob}_{k=2} \left( S(\rho_A) \neq \log d_A \right) &\leq \frac{1}{\delta p^{\floor*{t/2}}}\\
&\leq \frac{1}{ p^{b}}
\end{align}
\end{subequations}
and the state is maximally entangled across a \textit{given} equal bipartition with probability,
\begin{equation}
P' \geq 1 -\frac{1}{p^{b}}.
\end{equation}

Maximal entropy across \textit{every} bipartition is required for the state to be AME and hence the tensor to be perfect. The number of distinct ways to equally bipartition a tensor of $t$ legs is ${t\choose \floor*{t/2}}/2$. Making no assumption about the dependence of the events $A_i$, Fr\'echet inequalities~\cite{Frechet1, Frechet2} bound the conjunction probability of $N$ events by:
\begin{multline}
\max \left\{0, P(A_1)+ P(A_2) +... +P(A_N) -(N-1)\right\} \leq P(A_1 \cap A_2 \cap ... \cap A_N) \\ \leq \min \left\{P(A_1), P(A_2),...P(A_N) \right\}.
\end{multline}
Hence the joint probability of all ${t\choose \floor*{t/2}}/2$ bipartitions satisfying $S(\rho_A)=\log d_A$ is lower bounded by
\begin{subequations}
\begin{align}
P \geq & \max \left\{ 0, {t\choose \floor{t/2}}\frac{P'}{2} -  \left[ {t\choose \floor{t/2}}/2 -1 \right] \right\} \\
\geq & \max \left\{ 0, 1 - \frac{1}{2p^b}{t\choose \floor{t/2}}\right\}.
\end{align}
\end{subequations}
\end{proof}

Consequently by making $p$ arbitrarily large, $P'$ and subsequently $P$ can be pushed close to 1 independent of $t$. Therefore by scaling the bond dimension we can ensure that, with high probability, the random stabilizer tensor is exactly a perfect tensor describing a qudit stabilizer AME state.

It follows from this result that an individual random stabilizer tensor inherits all the properties of perfect stabilizer tensors with probability that can be made arbitrarily close to 1 by increasing the bond dimension $p$. That is, they are an isometry across any bipartition and describe a family of stabilizer MDS codes, where there exists a consistent choice of basis that preserves the Pauli rank of the operator.

\subsection{Entanglement structure}
\label{Entanglement structure}

We now investigate the entanglement structure of our proposed holographic code, since the principal motivation for choosing random tensors was to demonstrate a construction that exactly obeys Ryu-Takayanagi while simultaneously mapping between local Hamiltonians. In this section we show that with high probability, when the bond dimension is large, our HQECC construction obeys the Ryu-Takayanagi entropy formula exactly for all boundary regions. We demonstrate this rigorously for arbitrary unentangled bulk states. We lean heavily on the work of~\cite{Random}, where by mapping to a classical spin system they were able to say something about the entropies in a random network.

First using their lower bound on the average entanglement entropy we make an exact statement of Ryu-Takayanagi for stabilizer product bulk states, while the general product state case is only approximate. Reusing their mapping Hayden et al. also present near complementary recovery via an entropic argument. We follow their proof structure but taking care and refining our network set-up to ensure any local operator acting on a bulk index can be recovered on either subregion for all bipartitions of the boundary. Finally we use this complementary recovery result to elevate the exact statement of Ryu-Takayanagi to apply to arbitrary product bulk states.

\subsubsection{Approximate Ryu-Takayanagi}

Hayden et al. investigated properties of general random tensor networks in~\cite{Random}, particularly their entanglement structure. Their methods of interpreting functions of the tensor network as partition functions of the classical Ising model led to Ryu-Takayanagi minimal surfaces manifesting as domain walls between spin regions. This method will be a key technique in demonstrating the improved entanglement structure of our modified code, so we encapsulate their argument in the following result. The logic assumes nothing about the geometry of the network graph and uses only second order moments so can be applied to our construction. For technical details of the proof we refer to~\cite{Random}.

\begin{lemma}[Mapping to an Ising partition function; discussion in~\cite{Random} section 2]
\label{lm Mapping to an Ising partition function}
Let $\rho_\mathcal{B}$ be the density operator of a boundary state obtained by mapping a bulk state through a random HQECC, where tensors have a bond dimension $p$. Let $A$ be any subregion of the boundary.

Introduce a spin interpretation of the network, as follows:
\begin{itemize}
\item To every vertex $x$ in the network graph -- whether there be a tensor there or a dangling bulk/boundary index -- assign an Ising variable, $s_x\in \pm1$. Let $s_x=+1$ correspond to acting at that index with the identity operator and $s_x=-1$ to acting instead with the swap operator, $\mathcal{F}_x$, defined on two copies of the original system $(\rho_\mathcal{B} \otimes \rho_\mathcal{B})$.
\item To every boundary vertex also assign a pinning field parameter, $h_x \in \pm 1$, where $h_x = -1$ if $x \in A$ and $h_x=+1$ otherwise.
\end{itemize}
The following function, inspired by the form of $S_2(\rho_A)$, can be manipulated into the form of a partition function of the above classical spin system:
\begin{equation}
Z := \langle \trace \left[ (\rho_\mathcal{B} \otimes \rho_\mathcal{B}) \mathcal{F}_A \right] \rangle = \sum_{\{s_x \}} e^{-\mathcal{A}[\{s_x \}]},
\end{equation}
where the sum is over all possible configurations of the Ising parameters, $\{s_x \}$, and the Ising action is given by
\begin{equation}
\mathcal{A}[\{s_x \}] = S_2(\{s_x=-1 \};\rho_b) - \sum_{\langle xy \rangle} \frac{1}{2}(s_x s_y -1) \ln p - \sum_{x\in \mathcal{B}} \frac{1}{2}(h_x s_x -1) \ln p + \textup{const}.
\end{equation}
$S_2(\{s_x=-1 \};\rho_b)$ denotes the R\'enyi-2 entropy of the density operator of the bulk degrees of freedom restricted to the domain where $s_x=-1$.
\end{lemma}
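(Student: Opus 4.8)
The plan is to recognise $Z$ as a fourth-order moment of the random tensors and collapse it to a classical spin sum by averaging vertex by vertex. Write the boundary operator as $\rho_\mathcal{B}=V\rho_b V^\dagger$, where $V$ is the linear map obtained by placing one random stabilizer tensor $T_x$ in each Coxeter cell and contracting through shared faces. Then $\trace[(\rho_\mathcal{B}\otimes\rho_\mathcal{B})\mathcal{F}_A]$ contains two copies of every $T_x$ together with their conjugates, so the ensemble average probes only the \nth{2} moment. Because the Clifford group is an exact unitary $2$-design (\cref{defn Unitary 2-design}), this average equals the Haar average, and \cref{lm Average of rho} applied at each vertex gives
\begin{equation}
\big\langle (\ket{T_x}\bra{T_x})^{\otimes 2}\big\rangle \propto \mathbb{I}_x + \mathcal{F}_x = 2\,\Pi^{\mathrm{sym}}_x,
\end{equation}
the (rescaled) projector onto the symmetric subspace of the doubled leg space of $T_x$. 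This is the only step that uses the randomness, and---crucially---it makes no reference to the graph geometry, which is what lets the computation of~\cite{Random} transfer to our hyperbolic Coxeter network.

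Next I would expand each vertex factor into its two summands, labelling the choice $\mathbb{I}_x$ by $s_x=+1$ and the choice $\mathcal{F}_x$ by $s_x=-1$. Distributing the product of these two-term sums over all vertices turns $Z$ into a sum over all Ising configurations $\{s_x\}$, each contributing the contraction of the corresponding network of identity and swap operators. The per-vertex prefactors $[p^{t}(p^{t}+1)]^{-1}$ from \cref{lm Average of rho} are identical for every tensor in our uniform-$t$-leg, uniform-bond-dimension construction, so they contribute only a configuration-independent multiplier that is absorbed into the additive $\mathrm{const}$ in $\mathcal{A}$.

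The remaining work is the loop counting that turns each fixed configuration into a Boltzmann weight $e^{-\mathcal{A}[\{s_x\}]}$. On an internal edge $\langle xy\rangle$ the contraction of the two end operators closes a number of index loops that depends only on whether $s_x=s_y$; an anti-aligned (domain-wall) edge closes one fewer loop than an aligned one, costing a factor $p^{-1}$ and reproducing the ferromagnetic term $-\tfrac12(s_x s_y-1)\ln p$. Each dangling boundary leg is contracted against $\mathcal{F}_A$ when $x\in A$ and against the trace identifying the two copies when $x\in\bar A$; this pins the spin towards $h_x=-1$ or $h_x=+1$ respectively and yields $-\tfrac12(h_x s_x-1)\ln p$. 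Finally the dangling bulk legs are contracted against $\rho_b^{\otimes 2}$, so the bulk indices with $s_x=-1$ are acted on by the swap and the resulting factor is exactly $\trace[(\rho_b\otimes\rho_b)\mathcal{F}_{\{s_x=-1\}}]=e^{-S_2(\{s_x=-1\};\rho_b)}$. Collecting the three contributions gives the stated action.

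I expect the main obstacle to be purely the bookkeeping of this last step: tracking the index loops carefully enough to fix the exact power of $p$ on every edge, confirming that all multiplicative prefactors are configuration-independent and hence collapse into $\mathrm{const}$, and checking that the bulk legs contribute precisely the R\'enyi-$2$ entropy $S_2(\{s_x=-1\};\rho_b)$ rather than some other functional of $\rho_b$. Conceptually, however, the whole argument rests on the single observation that only second moments appear; once the $2$-design reduction via \cref{defn Unitary 2-design} and \cref{lm Average of rho} is in place, the geometry-independence noted above means the derivation is identical to~\cite{Random}, and the only genuinely new point to verify is that random stabilizer tensors supply the required Haar-equal second moment.
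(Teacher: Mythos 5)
Your proposal is correct and takes essentially the same route as the paper: the paper states this lemma by encapsulating the section-2 derivation of~\cite{Random}, justifying its transfer to the stabilizer construction by exactly the two observations you make -- that $Z$ involves only second moments of each independently chosen tensor, so the Clifford 2-design average (via \cref{lm Average of rho}) coincides with the Haar average, and that the Ising expansion makes no reference to the graph geometry. Your subsequent bookkeeping (the $p^{-1}$ cost per anti-aligned edge, the boundary pinning from contracting against $\mathcal{F}_A$ versus the identity, and the bulk legs yielding $e^{-S_2(\{s_x=-1\};\rho_b)}$) is precisely the technical content the paper defers to~\cite{Random} rather than reproducing.
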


The above lemma has interesting consequences when considering an unentangled bulk state ($S_2(\{s_x=-1 \};\rho_b)=0$) since,
\begin{subequations}
\begin{align}
Z &= \sum_i e^{-\beta E_i}\\
&= \sum_{\{s_x \}} \exp \left[- \ln p \left( \frac{1}{2}\sum_{\langle xy \rangle} (s_x s_y -1) + \frac{1}{2}\sum_{x\in\mathcal{B}}(h_x s_x-1)\right) \right],
\end{align}
\end{subequations}
where we identify $\beta = \ln p$ and $E_{\{ s_x\}} = \frac{1}{2}\sum_{\langle xy \rangle} (s_x s_y -1) + \frac{1}{2}\sum_{x\in\mathcal{B}}(h_x s_x-1)$.

In the large $p$ (low temperature) regime this partition function is dominated by the ground state energy of the system with low energy fluctuations. The ground state of an Ising model in $>1$ dimensions is given by the configuration of $\{s_x\}$ that minimises the domain wall `length'. Hence the ground state energy, $E_{GS}$, is given by the number of tensor `legs' crossed by the Ryu-Takayanagi surface of the boundary region $A$, $|\gamma_A|$. \cref{Low temperature Ising model corrections} demonstrates that in this regime $-\ln Z$ deviates from the ground state energy by manageable corrections,
\begin{subequations}
\begin{align}
-\ln Z &= \beta E_{GS} - \ln k - O \left( \frac{1}{p} \right)\\
&= \ln (p) |\gamma_A| - \ln k - O \left( \frac{1}{p} \right).
\end{align}
\end{subequations}
Here $k$ is the number of minimal geodesic surfaces.

The above object is close to the Ryu-Takayanagi entropy which we identify with $S_{RT}(\rho_A)=\ln (p) |\gamma_A|$ when there is no entanglement in the bulk. It remains to relate $-\ln Z = -\ln \langle \trace \left[ (\rho_\mathcal{B} \otimes \rho_\mathcal{B}) \mathcal{F}_A \right] \rangle$ back to the entanglement entropy of the boundary subregion. For an arbitrary state the entanglement entropy is lower bounded by the R\'enyi-2 entropy,
\begin{equation}
S_2(\rho_A) = -\ln \trace \left[\rho_A^2 \right] = -\ln \trace \left[ \left(\rho_\mathcal{B}\otimes \rho_\mathcal{B} \right)\mathcal{F}_A \right].
\end{equation}
The average R\'enyi-2 entropy can be expanded as $\langle S_2(\rho_A) \rangle  \approx -\ln \langle \trace \left[ (\rho_\mathcal{B} \otimes \rho_\mathcal{B}) \mathcal{F}_A \right] \rangle$ plus some fluctuations. Using only second order moments Hayden et al. consider these fluctuations, proving a lower bound on the average entanglement entropy. Their result is encapsulated in the following lemma; for the proof we refer to their work.

\begin{lemma}[Lower bound on the entanglement entropy; discussion in appendix F of~\cite{Random}]
\label{lm Lower bound on the entanglement entropy}
The average entanglement entropy of a general boundary region $A$ of a random stabilizer tensor network is lower bounded by
\begin{equation}
S_{RT}(\rho_A) - \left[\ln k +o(1)\right]\leq \langle S_2(\rho_A) \rangle_{\neq 0}\leq \langle S(\rho_A) \rangle_{\neq 0 },
\end{equation}
where $\langle \rangle_{\neq 0}$ is the average over all choices of network excluding the cases resulting in $\rho_\mathcal{B} = 0$ where the entropy is not well-defined, and $k$ is the number of minimal geodesic surfaces.
 \end{lemma}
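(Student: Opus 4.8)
The plan is to lower bound the averaged von Neumann entropy through the R\'enyi-2 entropy, for which the Ising mapping of \cref{lm Mapping to an Ising partition function} is directly available. The right-hand inequality $\langle S_2(\rho_A)\rangle_{\neq 0}\leq\langle S(\rho_A)\rangle_{\neq 0}$ is immediate from the monotonicity $S_2\leq S$ of the R\'enyi entropies, applied state by state and then averaged; indeed for a stabilizer network it is an equality because of the flat entanglement spectrum established in \cref{thm Quantisation of entropy}. All the effort therefore goes into the left-hand inequality, and the strategy is to pass from the \emph{quenched} average $\langle -\ln(\cdots)\rangle$ to the \emph{annealed} estimate $-\ln\langle\cdots\rangle$, which is exactly the partition-function ratio already evaluated in the low-temperature regime.

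Concretely, I would write $\rho_\mathcal{B}=\tilde\rho_\mathcal{B}/\mathcal{N}$ with $\mathcal{N}\defeq\trace\tilde\rho_\mathcal{B}$ the (random) trace normalisation of the unnormalised boundary operator, and set $\mathcal{Z}_A\defeq\trace[(\tilde\rho_\mathcal{B}\otimes\tilde\rho_\mathcal{B})\mathcal{F}_A]$. The swap trick then gives
\begin{equation}
S_2(\rho_A)=-\ln\mathcal{Z}_A+2\ln\mathcal{N},
\end{equation}
so that $\langle S_2(\rho_A)\rangle_{\neq 0}=-\langle\ln\mathcal{Z}_A\rangle+2\langle\ln\mathcal{N}\rangle$, where all averages are conditioned to exclude the $\rho_\mathcal{B}=0$ events on which the entropy is ill-defined. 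The two terms can be treated separately, exploiting that $\langle\mathcal{Z}_A\rangle$, $\langle\mathcal{N}\rangle$ and $\langle\mathcal{N}^2\rangle$ are all computable from at most the second moment of the tensors, i.e.\ from the $2$-design property of the Clifford ensemble (this is what keeps the whole argument within reach of stabilizer states, which are not higher designs). For the first term, convexity of $-\ln$ yields the Jensen bound $-\langle\ln\mathcal{Z}_A\rangle\geq-\ln\langle\mathcal{Z}_A\rangle$, and the low-temperature analysis of \cref{lm Mapping to an Ising partition function} together with \cref{Low temperature Ising model corrections} identifies the annealed ratio as
\begin{equation}
-\ln\frac{\langle\mathcal{Z}_A\rangle}{\langle\mathcal{N}\rangle^2}=\ln(p)\,|\gamma_A|-\ln k-O(1/p)=S_{RT}(\rho_A)-\ln k-O(1/p),
\end{equation}
the numerator being dominated by the minimal domain wall of length $|\gamma_A|$ (with multiplicity $k$) and the denominator by the trivial all-aligned configuration.

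The remaining, and main, obstacle is to bound the second term from below, i.e.\ to show $2\langle\ln\mathcal{N}\rangle\geq 2\ln\langle\mathcal{N}\rangle-o(1)$. Here concavity of the logarithm runs the wrong way, so this step requires genuine concentration of the normalisation $\mathcal{N}$ about its mean. The plan is to compute the relative variance $\langle\mathcal{N}^2\rangle/\langle\mathcal{N}\rangle^2-1$ -- again a second-moment quantity evaluable by the same Ising mapping, whose leading configuration is trivial and whose fluctuations are suppressed by powers of $p^{-1}$ -- and then to convert this into control of the logarithm via $\ln\langle\mathcal{N}\rangle-\langle\ln\mathcal{N}\rangle=O\!\left(\operatorname{Var}(\mathcal{N})/\langle\mathcal{N}\rangle^2\right)$. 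The delicate points are ensuring these fluctuations are genuinely $o(1)$ as $p\to\infty$ and handling the logarithmic singularity at $\mathcal{N}=0$, which is precisely why the average is taken in the conditioned form $\langle\cdot\rangle_{\neq 0}$. Assembling the two bounds gives $\langle S_2(\rho_A)\rangle_{\neq 0}\geq S_{RT}(\rho_A)-\ln k-o(1)$, which combined with R\'enyi monotonicity yields the stated chain of inequalities.
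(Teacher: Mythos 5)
Your overall plan -- control the quenched average by the annealed one using only second moments, identify $-\ln\left(\langle\mathcal{Z}_A\rangle/\langle\mathcal{N}\rangle^2\right)$ with the Ising free energy $\ln(p)|\gamma_A|-\ln k-O(1/p)$, and get the right-hand inequality from R\'enyi monotonicity -- is the correct skeleton; note the paper itself does not reprove this lemma but imports it from appendix F of~\cite{Random}, whose argument has exactly this second-moment structure. However, there is a genuine gap in your treatment of the normalisation term. The inequality you invoke, $\ln\langle\mathcal{N}\rangle-\langle\ln\mathcal{N}\rangle = O\left(\operatorname{Var}(\mathcal{N})/\langle\mathcal{N}\rangle^2\right)$, is simply false under variance control alone: take $\mathcal{N}=\epsilon$ with probability $q$ and $\mathcal{N}\approx 1$ otherwise; the relative variance stays $O(q)$ as $\epsilon\to 0$ while $\langle\ln\mathcal{N}\rangle\to-\infty$. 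Conditioning on $\rho_\mathcal{B}\neq 0$ does not rescue this -- it removes only the atom at exactly $\mathcal{N}=0$, not the near-zero mass that drives the divergence -- so the main step of your left-hand bound fails as written.

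There are two ways to repair it. The route taken in~\cite{Random} is to apply Jensen once to the whole exponent rather than to each term: $\langle S_2(\rho_A)\rangle_{\neq 0}\geq -\ln\langle e^{-S_2(\rho_A)}\rangle_{\neq 0}$ with $e^{-S_2(\rho_A)}=\mathcal{Z}_A/\mathcal{N}^2=\trace\rho_A^2$, then split on the event $\{\mathcal{N}\geq 1-\delta\}$: on the good event bound $\mathcal{Z}_A/\mathcal{N}^2\leq\mathcal{Z}_A/(1-\delta)^2$, and on the bad event use $e^{-S_2}\leq 1$ (valid since $S_2\geq 0$) together with Chebyshev, $\Pr[\mathcal{N}<1-\delta]\leq\operatorname{Var}(\mathcal{N})/\delta^2$, where $\operatorname{Var}(\mathcal{N})$ is again a second-moment (Ising) quantity suppressed in $1/p$; this never requires a lower bound on $\langle\ln\mathcal{N}\rangle$ at all (the conditioning only contributes a benign factor $1/\Pr[\mathcal{N}\neq 0]=1+o(1)$). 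Alternatively, your decomposition can be saved specifically for the stabilizer ensemble by invoking quantisation: overlaps of stabilizer states are quantised, so conditioned on $\mathcal{N}\neq 0$ one has $\ln\mathcal{N}\geq -O(V\ln p)$ with $V$ the number of network vertices, whence the bad-event contribution to $\ln\langle\mathcal{N}\rangle-\langle\ln\mathcal{N}\rangle$ is $O(V\ln p)\cdot\Pr[\mathcal{N}<1-\delta]=o(1)$ once $p$ grows suitably with the network size -- but you would need to state and use this, which your sketch does not. A minor further caveat: your claim that the right-hand inequality is an equality by flat spectrum holds only when the bulk input is itself a stabilizer product state; for a general product bulk input contracted into a stabilizer network the boundary state need not have a flat spectrum, though monotonicity $S_2\leq S$ of course still yields the stated inequality.
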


Using the above results we come to our first statement about the entanglement entropy of general boundary subregions that can be applied to our construction.

\begin{lemma}[Approximate Ryu-Takayanagi]
\label{lm Approximate Ryu-Takayanagi}
For a random stabilizer tensor network with bond dimension $p$, let $S(\rho_A)$ be the entanglement entropy of $A$, any (disconnected or connected) subregion of the boundary. Given
\begin{enumerate}
\item an arbitrary tensor-product bulk input state:
\begin{equation}
\text{\emph{Prob}} \left[S_{RT}(\rho_A)-S(\rho_A) \geq a\cdot (\ln k +o(1)) \right] \leq \frac{1}{a}.
\end{equation}
Hence the entanglement entropy can be made to be $(a \cdot\ln k)$-close to the Ryu-Takayanagi entropy with probability $\left(1 - \frac{1}{a} \right)$ by scaling the bond dimension $p$.
\item a stabilizer tensor-product bulk input state, conditional on $\log p > a \cdot (\ln k +o(1))$:
\begin{equation}
\text{\emph{Prob}} \left[S_{RT}(\rho_A)=S(\rho_A)  \right] \geq 1- \frac{1}{a}.
\end{equation}
Therefore by scaling $p$, $a$ can be made arbitrarily large so that the probability of having \textbf{exactly} the Ryu-Takayanagi entropy is pushed to 1.
\end{enumerate}
$S_{RT}(\rho_A) = |\gamma_A|\ln (p)$ is the Ryu-Takayanagi entropy for an unentangled bulk state. $\rho_\mathcal{B}$ is the density operator of the resulting boundary state after the bulk state has been encoded by the tensor network such that $\rho_A = \trace_{\bar{A}}(\rho_\mathcal{B})$. $k$ is the number of minimal geodesics through the graph and $a>0$.
 \end{lemma}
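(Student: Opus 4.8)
The plan is to treat the gap between the Ryu-Takayanagi value and the true entropy as a single non-negative random variable and to control it in two stages: first in expectation (giving part~1), and then---for stabilizer bulk states---exactly, by exploiting the quantisation of stabilizer entropy (giving part~2). Define the random variable $X := S_{RT}(\rho_A) - S(\rho_A)$, where the randomness is over the choice of random stabilizer tensors, conditioned on $\rho_\mathcal{B} \neq 0$ so that the entropy is well-defined. First I would observe that $X \geq 0$ holds for \emph{every} realisation: cutting the network along the minimal surface $\gamma_A$ exhibits $\rho_A$ as the reduction of a pure state whose Schmidt rank across the cut is at most $p^{|\gamma_A|}$, so $S(\rho_A) \leq |\gamma_A|\ln p = S_{RT}(\rho_A)$. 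This per-realisation upper bound is what makes Markov's inequality applicable.

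Next I would bound the mean of $X$. Since $S_{RT}(\rho_A)$ is deterministic, \cref{lm Lower bound on the entanglement entropy} together with the pointwise inequality $S_2(\rho_A) \leq S(\rho_A)$ yields $S_{RT}(\rho_A) - [\ln k + o(1)] \leq \langle S(\rho_A)\rangle$, i.e.\ $\langle X\rangle \leq \ln k + o(1)$. Part~1 then follows directly from Markov's inequality applied to the non-negative variable $X$:
\begin{equation}
\text{Prob}\left[X \geq a\cdot(\ln k + o(1))\right] \leq \frac{\langle X\rangle}{a\cdot(\ln k + o(1))} \leq \frac{1}{a},
\end{equation}
which is precisely the claimed tail bound. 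Equivalently, with probability $1 - 1/a$ the boundary entropy sits within $a\cdot(\ln k + o(1))$ of the Ryu-Takayanagi value, where the $o(1)$ contribution is suppressed by taking $p$ large, leaving the advertised $(a\cdot\ln k)$-closeness.

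For part~2 I would promote this approximate statement to an exact equality using \cref{thm Quantisation of entropy}. When the bulk input is a stabilizer product state and the tensors are perfect stabilizer tensors (as guaranteed with high probability by \cref{thm Random stabilizer tensors are perfect}), the boundary state $\rho_\mathcal{B}$ is itself a stabilizer state, so $S(\rho_A)$ is an integer multiple of $\ln p$; since $S_{RT}(\rho_A) = |\gamma_A|\ln p$ is also an integer multiple of $\ln p$, the variable $X$ is a \emph{non-negative integer multiple of} $\ln p$. Imposing the hypothesis $\ln p > a\cdot(\ln k + o(1))$, on the good event from part~1 one has $0 \leq X < a\cdot(\ln k + o(1)) < \ln p$; the only multiple of $\ln p$ in the half-open interval $[0,\ln p)$ is $0$, forcing $X = 0$, i.e.\ $S(\rho_A) = S_{RT}(\rho_A)$. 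As this good event occurs with probability at least $1 - 1/a$, the stated exact bound follows.

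The conceptual crux---and the step I expect to need the most care---is the interplay in part~2: neither the mean bound alone (which yields only closeness) nor quantisation alone (which says nothing about the magnitude of the gap) suffices, and the argument turns entirely on the fact that a non-negative, $\ln p$-quantised quantity strictly smaller than $\ln p$ must vanish. I would therefore verify that the three ingredients are genuinely compatible: that \cref{lm Lower bound on the entanglement entropy} applies to unentangled bulk states (both generic and stabilizer product states qualify, since $S_2(\{s_x=-1\};\rho_b)=0$ in either case), that the min-cut inequality $S(\rho_A)\leq S_{RT}(\rho_A)$ holds per realisation rather than merely on average, and that the conditioning on $\rho_\mathcal{B}\neq 0$ is applied consistently to both the averaged lemma and the probabilistic conclusion.
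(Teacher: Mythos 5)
Your proposal is correct and follows essentially the same route as the paper: the pointwise upper bound $S(\rho_A)\le S_{RT}(\rho_A)$, the averaged lower bound from \cref{lm Lower bound on the entanglement entropy}, Markov's inequality for part~1, and the $\ln p$-quantisation of stabilizer entropy (\cref{thm Quantisation of entropy}) forcing the gap to vanish under $\ln p > a\cdot(\ln k + o(1))$ for part~2. One cosmetic note: the stabilizer property of $\rho_\mathcal{B}$ follows from contracting stabilizer tensors alone (appendix G of~\cite{Random}), so your invocation of \cref{thm Random stabilizer tensors are perfect} at that step is unnecessary.
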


\begin{proof}
The entanglement entropy of a boundary region for an arbitrary tensor network state is upper bounded by $S_{RT}(\rho_A)$~\cite{multipartite_entanglement}. \cref{lm Lower bound on the entanglement entropy} gives an almost matching lower bound on the conditional average of the entropy. Applying Markov's inequality to this bound we can conclude that:
\begin{equation} \label{eqn Markov entropy bound}
\text{Prob} \left[S_{RT}(\rho_A)-S(\rho_A) \geq a\cdot (\ln k +o(1)) \right] \leq \frac{1}{a}.
\end{equation}
Equivalently,
\begin{equation}
\text{Prob} \left[S_{RT}(\rho_A)-S(\rho_A) \geq \tilde{a} \right] \leq \frac{\ln k +o(1)}{\tilde{a}}.
\end{equation}
where $a,\tilde{a}>0$. This leads immediately to point 1.

The tensor network state over both bulk and boundary degrees of freedom is a stabilizer state since it is formed by contracting stabilizer tensors -- see appendix G of~\cite{Random} for a proof. If the bulk state is a stabilizer product state $\rho_\mathcal{B}$ is again a stabilizer state. Hence $S(\rho_A)$ is quantised in units of $\log p$ from \cref{thm Quantisation of entropy}. Rearranging \cref{eqn Markov entropy bound} gives,
\begin{equation}
\text{Prob} \left[S_{RT}(\rho_A)-S(\rho_A) < a\cdot (\ln k +o(1)) \right] \geq 1- \frac{1}{a}.
\end{equation}
Therefore if $\log p > a \cdot (\ln k +o(1))$, the only way the condition in the probability can be satisfied is if the entropies are equal, leading to point 2.
\end{proof}

This proof technique does not require that $A$ is a single connected boundary region since minimal domain walls ground states apply for disconnected boundary regions, depicted in \cref{fg disconnected regions}. This is a key advantage over the techniques used in the HaPPY paper~\cite{Happy} which do require a connected boundary region. Furthermore, the above result accounts for the possibility of multiple geodesics which may occur for our network's geometry.

\begin{figure}[tbp]
\centering
\includegraphics[trim={0cm 40cm 30cm 0cm},clip,scale=0.135]{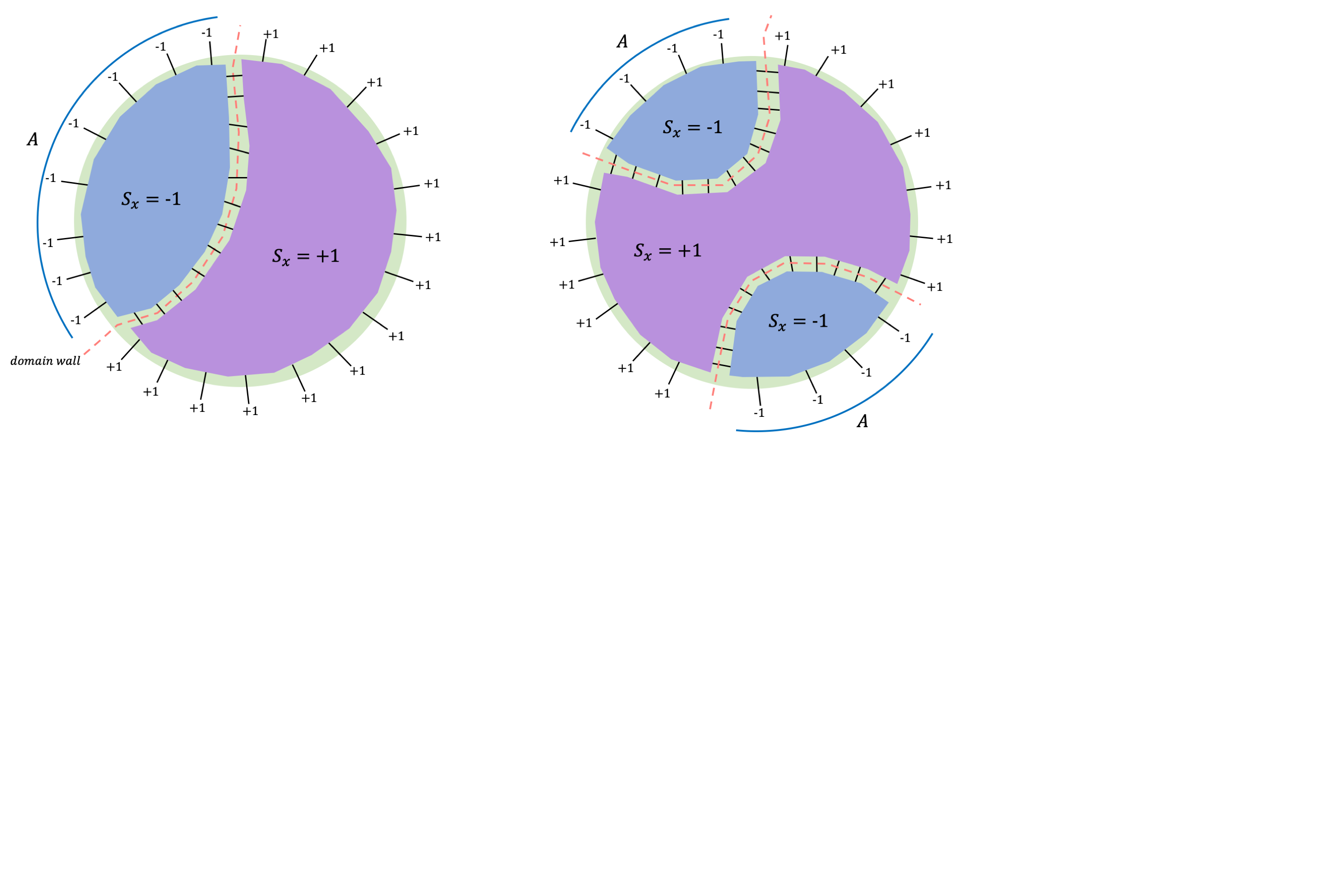}
\caption{Example of spin configurations used to calculate the entanglement entropy for connected boundary region (left) and disconnected region (right).}
 \label{fg disconnected regions}
\end{figure}

\subsubsection{Full complementary recovery}
\label{Full complementary recovery}

As discussed in \cref{Previous work}, complementary recovery is observed in AdS/CFT given any partition of the boundary into non-overlapping regions, conditional on the union of their entanglement wedges covering the entire bulk spacial slice.
In this section we will prove complementary recovery for our toy model using the Ising model technique introduced in the previous section.
For this quantitative proof, we assume firstly that the tensor network is complete, without any holes.
In~\cite{Happy, Tamara} deleting tensors from the network is associated with creating horizons and non-semiclassical states. Therefore a complete tensor network equates to an absence of black holes.
Secondly we only consider local bulk operators acting on a single bulk index and therefore our bulk is inherently unentangled.
These two assumptions ensure that bipartitions of the boundary have a common minimal area surface and hence we expect complementary recovery.
We refer to section 3.2 and 3.3 of~\cite{Random} for a qualitative discussion on the introduction of bulk entanglement and the resulting qualitative changes in the minimal surfaces.

Both the HaPPY code~\cite{Happy} and the extension by Kohler and Cubitt~\cite{Tamara} achieve approximate complementary recovery through the greedy entanglement wedge. However there exists certain `pathological' choices of boundary bipartitions where some local bulk operators cannot be recovered on either subregion. General random holographic codes in~\cite{Random} also realise approximate complementary recovery, where all but bulk indices in contact with the Ryu-Takayanagi surface can be recovered. We would expect full complementary recovery in a complete model of AdS/CFT and through careful choice of geometry we show that a HQECC based on random stabilizer tensors can achieve this.

\begin{lemma}[Complementary recovery]
\label{lm Complementary recovery}
For a tensor network comprised of random stabilizer tensors, with arbitrarily high probability we have full complementary recovery in the sense that any logical operator acting on any single bulk tensor index, $C$, can be recovered on either the boundary subregion ($A$) or its complement ($\bar{A}$) conditional on:
\begin{enumerate}
\item the hyperbolic bulk tessellation describing the network's geometry consisting of polytopes with an odd number of faces;
\item the dimension of a bulk dangling index, $D_b$, being less than that of internal connections in the network, $D$;
\item the bond dimension of the tensors being sufficiently large.
\end{enumerate}
\end{lemma}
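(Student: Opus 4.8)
The plan is to combine the exact perfectness of the tensors with the Ising-model description of the minimal surface, and then to run an operator-pushing (greedy reconstruction) argument made watertight by the two geometric conditions. First I would invoke \cref{thm Random stabilizer tensors are perfect} together with a union bound over the $n$ tensors of the network: taking the bond dimension $p$ large (condition~3) makes the probability that \emph{every} tensor is exactly a perfect stabilizer tensor arbitrarily close to $1$, so I may condition on this event and treat the whole network as built from exact perfect tensors. With the bulk unentangled (a single local bulk operator, hence $S_2(\{s_x=-1\};\rho_b)=0$), \cref{lm Mapping to an Ising partition function} identifies the Ryu--Takayanagi surface $\gamma_A$ of a boundary region $A$ with the domain wall of the ground state of the associated Ising model, whose spins $s_x=\mp1$ label the two entanglement wedges $\mathcal{E}[A]$ and $\mathcal{E}[\bar A]$. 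Because the global state is pure, sending $h_x\mapsto -h_x$ (i.e.\ swapping $A\leftrightarrow\bar A$) simply flips every spin and leaves the domain wall invariant, so $\gamma_A=\gamma_{\bar A}$ and the two wedges share a common minimal surface.

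Next I would argue that \emph{every} bulk index lies strictly inside exactly one wedge. Each dangling bulk vertex $C$ has a single Ising bond, of weight $\ln D_b>0$, to its host tensor $x$ and feels no pinning field, so in the ground state it aligns, $s_C=s_x$; the bulk bond is therefore never part of the domain wall and no bulk index ever sits \emph{on} $\gamma_A$. It remains to show that the host tensor's spin $s_x$ — and hence $C$'s wedge — is unambiguous even for tensors the surface runs alongside. Flipping a single interior tensor together with its (aligned) bulk index changes the Ising energy by $(a-b)\ln D$, where $a$ and $b=f-a$ count its face-bonds to same- and opposite-spin neighbours among its $f$ faces. Since the polytope has an \textbf{odd} number of faces (condition~1), a tie $a=b=f/2$ is impossible, so the energy strictly changes under every such flip: each tensor's spin is fixed by a strict majority vote of its neighbours, and in particular every surface-adjacent tensor keeps a strict majority of its faces on one side. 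Thus $\mathcal{E}[A]$ and $\mathcal{E}[\bar A]$ tile the bulk with $C$ in precisely one of them.

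With the wedges pinned down I would reconstruct the bulk operator by pushing it through the perfect tensors, staying inside the wedge containing $C$. Take $x\in\mathcal{E}[A]$ adjacent to $\gamma_A$ and split its legs into the $A$-facing legs $L_A$ and the $\bar A$-facing legs $L_{\bar A}$. Odd faces give the strict majority $|L_A|\ge |L_{\bar A}|+1$, so grouping the bulk leg with $L_{\bar A}$ yields a set of dimension $D_b\,D^{|L_{\bar A}|}$ facing $L_A$ of dimension $D^{|L_A|}$; the reconstruction isometry exists precisely when $D_b\,D^{|L_{\bar A}|}\le D^{|L_A|}$, and in the worst (bare-majority) case $|L_A|=|L_{\bar A}|+1$ this reduces to $D_b\le D$, which is exactly condition~2. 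The perfect-tensor property of \cref{defn Perfect tensors} then lets me push the operator at $C$ onto $L_A$, i.e.\ onto bonds leading deeper into $\mathcal{E}[A]$. Iterating this outward — the negative curvature of the Coxeter tessellation guaranteeing that each push moves support towards the boundary (cf.\ Theorem~6.1 of~\cite{Tamara}) — expresses the operator entirely on $A$ with unchanged action on the code space. The identical argument on the complementary wedge reconstructs operators at indices in $\mathcal{E}[\bar A]$ on $\bar A$, and since every bulk index lies in exactly one wedge this establishes full complementary recovery; conditioning on the all-tensors-perfect event makes the whole statement hold with arbitrarily high probability.

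The main obstacle — and the reason previous models only achieved \emph{approximate} complementary recovery~\cite{Random,Happy} — is the treatment of tensors that the minimal surface cuts through: for these the bulk operator threatens to leak onto the complementary region. The two geometric conditions are exactly what close this gap: the odd number of faces forbids a tied majority, forcing each cut tensor to retain a strict surplus of legs on one side, and $D_b<D$ converts that surplus of \emph{legs} into a genuine isometry at the level of \emph{dimensions} in the tightest bare-majority cut. I expect verifying that the greedy push indeed terminates on $A$ for every such tensor (rather than stalling at the surface) to be the most delicate point, and I would lean on the Coxeter-tessellation scalings of~\cite{Tamara} to control it.
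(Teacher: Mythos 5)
Your setup matches the paper's in outline: conditioning on the all-tensors-perfect event via \cref{thm Random stabilizer tensors are perfect}, identifying wedges with Ising spin domains, and your surface-adjacent energy/dimension counting (odd $f$ forbids a tie, and $D_b\le D$ closes the bare-majority case) is numerically the same trade-off the paper exploits. But the core of your argument --- reconstructing the operator by greedily \emph{pushing} it through perfect tensors from $C$ out to $A$ --- is not the paper's route, and it contains a genuine gap that you yourself flag at the end: you never establish that the push terminates on $A$, i.e.\ that the greedy wedge coincides with the Ising-defined entanglement wedge. The local strict-majority property of the Ising ground state (each tensor agrees with a strict majority of its face-neighbours) does not supply the required processing order: operator pushing needs, at every step, a tensor whose input legs (the bulk leg plus legs toward already-processed tensors or the wall) number at most half its legs, and whether such an ordering exists for the wedge of an arbitrary, possibly disconnected, boundary region is precisely the step that fails for the ``pathological'' bipartitions of the HaPPY code and of~\cite{Tamara} --- the very exceptions this lemma is meant to eliminate. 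Appealing to negative curvature and Theorem~6.1 of~\cite{Tamara} only controls radially layered pushes through the whole network, not pushes confined to one side of a minimal surface.

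The paper avoids operator pushing entirely. It invokes the entropic criterion of Appendix~B of~\cite{Random}: complementary recovery of the bulk index $C$ is \emph{equivalent} to the exact identity $S(\rho_C) + S(\rho_{\bar{A}\bar{C}}) = S(\rho_{\bar{A}C\bar{C}})$. Each entropy is then evaluated with the generalised Ising mapping of \cref{Generalisation of the Ising mapping}, in which the bulk vertex $C$ \emph{does} carry a pinning field --- your statement that the dangling bulk index ``feels no pinning field'' is true for the computation of $S_2(\rho_A)$ but not for these quantities, and the entire content of the lemma is whether the minimal domain wall moves when that bulk pinning field flips sign between the two computations. The paper shows it cannot, via the penalty comparison $\floor{l/2}\log D + \log D_b < \ceil{l/2}\log D$ (odd $l$ and $D_b<D$), so the identity holds and recovery follows with no explicit reconstruction procedure, errors being suppressed at large $p$. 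To repair your proof you would either have to import this entropic equivalence, or independently prove that the greedy wedge equals the entanglement wedge for these tessellations --- a substantial claim, not a routine verification, and one the literature gives reasons to doubt in the generality needed here.
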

\begin{proof}
Appendix B of~\cite{Random} shows that complementary recovery is equivalent to the following entropic equation:
\begin{equation}\label{compl_rec_entropies}
S(\rho_C) + S(\rho_{\bar{A}\bar{C}}) = S(\rho_{\bar{A}C\bar{C}}).
\end{equation}
We use a generalisation of the previous Ising model mapping summarised in \cref{Generalisation of the Ising mapping} to approximate these entropies, where again the errors can suppressed by scaling the tensor bond dimension. Each term is calculated by considering the energy penalties from the bulk and boundary pinning fields as well as domain walls of the ground state. The above equation is satisfied only if the spin domain walls described by the ground state corresponding to $S(\rho_{\bar{A}\bar{C}})$ and $S(\rho_{\bar{A}C\bar{C}})$ coincide. The only difference between the two set-ups is that the bulk pinning field on the vertex labelled $C$ switches sign. This section of the proof follows immediately from section 4 of~\cite{Random}.

Therefore for general bulk indices we do have complementary recovery. However there is a potential problem if $C$ describes a bulk tensor that is adjacent to the minimal domain wall associated with the boundary subregion $A$ i.e. the tensor has legs that cross the domain wall. In some cases the domain wall will move when the bulk pinning field is flipped, so that operators acting on that bulk tensor are not recoverable either on $A$ or $\bar{A}$.

This problematic scenario can be avoided by careful choice of tensor network. To see that the conditions listed above are sufficient to avoid these situations we consider the limiting case. Let $l$ be the number of tensor nearest-neighbours in the bulk, i.e. the number of faces of the bulk tessellation. Choose $C$ to be a bulk dangling index where the connected tensor is in the spin down domain when calculating $S(\rho_{\bar{A}\bar{C}})$. The bulk pinning field is spin-down, so at most $\floor*{l/2}$ of $C$'s legs can cross the domain wall, otherwise the lowest energy configuration would have $C$ in the spin-up domain. Considering the energy penalty trade off:
\begin{align}
&\text{energy penalty if $s_x = -1$ for $x\in C$: } \beta E_{-1} = \floor*{l/2} \log D \\
&\text{energy penalty if $s_x = +1$ for $x\in C$: } \beta E_{+1} = \ceil*{l/2} \log D  + \log D_b,
\end{align}
and
\begin{equation} \label{eqn energy inequality}
E_{+1} > E_{-1}.
\end{equation}
Then consider the case where the bulk pinning field for $S$ is flipped, $S(\rho_{\bar{A}C\bar{C}})$.
\begin{align}
&\text{energy penalty if $s_x = -1$ for $x\in C$: } \beta E_{-1} = \floor*{l/2} \log D + \log D_b,\\
&\text{energy penalty if $s_x = +1$ for $x\in C$: } \beta E_{+1} = \ceil*{l/2} \log D
\end{align}
Given $D_b<D$ and $l$ is odd ($\floor*{l/2}+1=\ceil*{l/2}$) the inequality in \cref{eqn energy inequality} is still true and the domain wall does not move.
\end{proof}

There exist space-filling tessellations of hyperbolic space in 2 and 3 dimensions.
In 2-dimensions one example is the tessellation composed of pentagons from~\cite{Happy}.
In 3-dimensions one example is based on a non-uniform Coxeter polytope with 7 faces and described in section 6.2 of~\cite{Tamara}.
We can meet the second condition without considering tensors with different dimensional indices by taking a tensor with $fs+1$ $p$-dimensional indices, where $f$ is the number of faces of each cell in the tessellation.
Then $s$ tensor indices will be contracted through each of the $f$ polytope faces, and one index will be the bulk degree of freedom.
So, in the Ising action $D_b=p$ and $D=p^s$ so that $D_b<D$.
We are free to choose $p$ arbitrarily large to satisfy the final condition of \cref{lm Complementary recovery} and ensure that the entropies in \cref{compl_rec_entropies} are arbitrarily well approximated by the free energy of the ground state of an appropriate Ising model.
Indeed, large bond dimension is also a requirement to achieve isometric tensors with high probability (\cref{thm Random stabilizer tensors are perfect}) so this condition is already required.

\subsubsection{Exact Ryu-Takayanagi}
\label{Exact Ryu-Takayanagi}

Exact Ryu-Takayanagi for any bulk state in tensor product form can be demonstrated by combining the two previous results.

\begin{lemma}[Exact Ryu-Takayanagi]
\label{lm Exact Ryu-Takayanagi}
We can construct a random stabilizer tensor network existing in 2 and 3-dimensional hyperbolic space such that the entanglement entropy of any (disconnected or connected) boundary subregion agrees exactly with the Ryu-Takayanagi entropy formula when there is no bulk entanglement. This occurs with probability:
\begin{equation}
\text{\emph{Prob}} \left[S_{RT}(\rho_A)=S(\rho_A)  \right] \geq 1- \frac{1}{a},
\end{equation}
conditional on $\log p > a \cdot (\ln k +o(1))$. All quantities carry their definitions from \cref{lm Approximate Ryu-Takayanagi}.
\end{lemma}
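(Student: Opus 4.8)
The plan is to bootstrap from the two preceding lemmas: part~2 of \cref{lm Approximate Ryu-Takayanagi} already gives \emph{exact} Ryu--Takayanagi for \emph{stabilizer} product bulk states, and \cref{lm Complementary recovery} supplies exactly the extra structure needed to promote this to \emph{arbitrary} product bulk states. So the only genuine work is the conversion step: showing that replacing a stabilizer product bulk state by a general product bulk state leaves the boundary entanglement entropy $S(\rho_A)$ unchanged, whence the exact equality $S_{RT}(\rho_A)=S(\rho_A)$ survives.

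First I would fix a geometry meeting the hypotheses of \cref{lm Complementary recovery}: an odd-faced tessellation (pentagons in 2D, the 7-faced Coxeter polytope of~\cite{Tamara} in 3D), with $D_b=p<D=p^s$ as arranged after that lemma, and bond dimension $p$ large enough that both the stabilizer exact-RT statement and complementary recovery hold. Under these choices, and with no bulk entanglement in a complete (hole-free) network, the entanglement wedges of $A$ and $\bar A$ share a common minimal surface and partition the bulk, so every single-qudit bulk index lies in exactly one wedge.

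Next I would carry out the state-conversion argument. Write the target bulk product state as $\ket{\varphi}=\big(\bigotimes_i u_i\big)\ket{\varphi_0}$, where $\ket{\varphi_0}$ is a fixed stabilizer product state (e.g.\ $\bigotimes_i\ket 0$) and each $u_i$ is a single-qudit unitary on bulk index $i$, and let $V$ be the encoding isometry. For each $i$, complementary recovery provides a boundary representation of the logical $u_i$ supported entirely on $A$ if $i$ lies in the wedge of $A$, and on $\bar A$ otherwise; collecting these yields unitaries $W_A$ (on $A$) and $W_{\bar A}$ (on $\bar A$) with $V\ket{\varphi}=W_A W_{\bar A}\,V\ket{\varphi_0}$. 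Tracing out $\bar A$, the $W_{\bar A}$ factor cancels by cyclicity and unitarity on that subsystem, leaving $\rho_A(\ket{\varphi})=W_A\,\rho_A(\ket{\varphi_0})\,W_A^\dagger$. The two reduced states therefore differ only by a unitary on $A$, are isospectral, and have identical $S(\rho_A)$. Combining with part~2 of \cref{lm Approximate Ryu-Takayanagi} gives $S_{RT}(\rho_A)=S(\rho_A)$ for the general product state with the same probability $1-\tfrac1a$, union-bounding against the arbitrarily suppressible failure probability of complementary recovery.

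The step I expect to be the main obstacle is making the factorization $V\ket{\varphi}=W_AW_{\bar A}V\ket{\varphi_0}$ fully rigorous. Complementary recovery reconstructs logical operators only up to their action on the code subspace, so I must check that the collected reconstructions $W_A,W_{\bar A}$ act as genuine unitaries on the relevant states, that operators reconstructed on $A$ commute with those on $\bar A$, and that the argument survives the indices sitting right on the minimal surface --- precisely the pathological case that the odd-face and $D_b<D$ conditions of \cref{lm Complementary recovery} were introduced to tame. Verifying that these conditions place every domain-wall-adjacent bulk index cleanly on one side of the wall is what closes the argument.
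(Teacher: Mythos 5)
Your proposal is correct and takes essentially the same route as the paper's own proof: both start from the exact Ryu--Takayanagi statement for a stabilizer product bulk state (part~2 of \cref{lm Approximate Ryu-Takayanagi}), then use the full complementary recovery of \cref{lm Complementary recovery} to represent each single-qudit bulk unitary as a boundary unitary supported entirely on $A$ or on $\bar A$, and conclude by unitary invariance of the von Neumann entropy that $S(\rho_A)$ is unchanged when passing to an arbitrary product bulk state. The only cosmetic difference is that the paper applies the logical operators one at a time as conjugations of the form $\left(\mathbb{I}_A\otimes U_B\right)$ rather than collecting them into $W_AW_{\bar A}$, with the on-the-wall indices handled, exactly as you anticipated, by the odd-face and $D_b<D$ conditions of \cref{lm Complementary recovery}.
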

\begin{proof}

For the $\bigotimes \ket{0}$ bulk stabilizer state we have exact Ryu-Takayanagi for any boundary subregion from \cref{lm Approximate Ryu-Takayanagi}. From this stabilizer state we can get to an arbitrary tensor product state by applying local operators to bulk tensors in turn. If we have full complementary recovery as described in \cref{lm Complementary recovery}, the action of such a logical operator on the boundary cannot change the entropy of the two boundary subregions.

This can be seen since the application of an operator $\left(\mathbb{I}_A\otimes U_B \right)$ to $\rho$, commutes with taking the partial trace of $\rho$ to give $\rho_A$ or $\rho_B$:
\begin{align}
\trace_A \left[ \left(\mathbb{I}_A\otimes U_B \right) \rho \left(\mathbb{I}_A\otimes U_B \right)^\dagger \right] &= U_B \rho_B U_B^\dagger,\\
\trace_B \left[ \left(\mathbb{I}_A\otimes U_B \right) \rho \left(\mathbb{I}_A\otimes U_B \right)^\dagger \right] &= \rho_A.
\end{align}
Moreover, entropy is invariant under a unitary change of basis, $S(U_B \rho_B U_B^\dagger) = S(\rho_B)$. Therefore the von Neumann entropies of $\rho_A$ and $\rho_B$ are unchanged after applying the logical operator $\left(\mathbb{I}_A\otimes U_B \right)$ to the total state.

There exist tensor network constructions that meet the conditions of full complementary recovery from \cref{lm Complementary recovery} which we described in \cref{Full complementary recovery}. Hence using together exact Ryu-Takayanagi result for a stabilizer bulk state and complementary recovery gives exact Ryu-Takayanagi for arbitrary product bulk states.
\end{proof}

The above theorem implies that the entanglement structure expected from AdS/CFT is achieved in our construction with high probability since by scaling the bond dimension, $a$ can be made arbitrarily large so that the probability of having exactly the Ryu-Takayanagi entropy can be made arbitrarily close to~1. Therefore the entanglement structure of a network comprised of random stabilizer tensors is closer to AdS/CFT than one built from perfect tensors where Ryu-Takayanagi does not generally apply.

\cite{Random} also explores the entanglement structure for entangled bulk states in Haar random tensor networks, examining qualitatively how introducing entanglement in the bulk leads to displacement of the minimal surface and increased entropy of the boundary region. The minimal surface will never enter bulk regions of sufficiently high entanglement leading to discontinuous jumps as the boundary region varies. This transition is speculatively linked to the Hawking-Page transition where upon increasing temperature a black hole emerges from the perturbed AdS bulk geometry~\cite{Hawking}. The same techniques are used to study boundary two-point correlation functions which decay as a power law when the bulk has hyperbolic geometry, defining the spectrum of scaling dimension. They find a separation in scaling dimensions that is expected in AdS/CFT where there is a known scaling dimension gap~\cite{spectrum1, spectrum2}. This analysis could be recycled to further investigate the entanglement structure of stabilizer random tensors and we expect these proof techniques to be more fruitful than those of the HaPPY paper.

\subsection{HQECC between local Hamiltonians with random stabilizer tensors}
\label{HQECC between local Hamiltonians with random stabilizer tensors}

The HQECC toy model of AdS/CFT described in~\cite{Tamara} comprised of a 3d tessellation of perfect stabilizer tensors of prime-powered dimension. As briefly discussed in \cref{Previous work} their model reflected several qualities of the holographic duality, most notably the mapping between models.
In~\cite{kohler2020translationallyinvariant} this was extended to a holographic mapping between local Hamiltonians in a 2-d bulk to local Hamiltonians on a 1-d boundary.
In previous sections we have demonstrated the desirable entanglement and error correcting properties of a holographic code where we inherit the geometry of~\cite{Tamara}'s construction but replace each perfect tensor with a random stabilizer tensor. Furthermore, \cref{thm Random stabilizer tensors are perfect} stated that individually a random stabilizer tensor is highly likely to be a perfect stabilizer tensor so all the successes of~\cite{Tamara}'s construction can also be retained.
 Therefore our construction will, with high probability, describe a bulk to boundary encoding that exhibits several key features of the AdS/CFT correspondence.

In~\cite{Tamara} the holographic mapping was defined between a local Hamiltonian in the bulk and a local Hamiltonian in the boundary.
However, it is desirable to relax the condition of locality in the bulk to allow for quasi-local bulk Hamiltonians which exhibit gravitational Wilson lines.
It was noted in~\cite{kohler2020translationallyinvariant} that the proof in~\cite{Tamara} can be extended to cover Hamiltonians which are not strictly local.
Here, we will define our holographic mapping between `quasi-local' Hamiltonians in the bulk, and local Hamiltonians in the boundary.
Where we define a quasi-local Hamiltonian as:

\begin{defn}[Quasi-local hyperbolic Hamiltonians]
  Let $\mathbb{H}^d$ denote\linebreak $d$-dimensional hyperbolic space, and let $B_r(x)\subset\mathbb{H}^d$ denote a ball of radius $r$ centred at $x$.
  Consider an arrangement of $n$ qudits in $ \mathbb{H}^d$ such that, for some fixed $r$, at most $k$ qudits and at least one qudit are contained within any $B_r(x)$.
  Let $Q$ denote the minimum radius ball $B_Q(0)$ containing all the qudits (which without loss of generality we can take to be centred at the origin).
  A quasi $k$-local Hamiltonian acting on these qudits can be written as:
  \begin{equation}
  H_\text{bulk} = \sum_Z h^{(Z)}
  \end{equation}
  where the sum is over the $n$ qudits, and each term can be written as:
  \begin{equation}
  h^{(Z)} = h_{\mathrm{local}}^{(Z)} h_{\mathrm{Wilson}}^{(Z)}
  \end{equation}
  where:
  \begin{itemize}
  \item $h_{\mathrm{local}}^{(Z)}$ is a term acting non-trivially on at most $k$ qudits which are contained within some $B_r(x)$
  \item $h_{\mathrm{Wilson}}^{(Z)}$ is a Pauli operator acting non trivially on at most $\BigO(L-x)$ qudits which form a line between $x$ and the boundary of $B_Q(0)$.
  \end{itemize}
\end{defn}

\begin{figure}[tbp]
\centering
\begin{tikzpicture}
\begin{scope}[on background layer]
         \node [inner sep=0pt] (1) at (0, 0) {\includegraphics[trim={4cm 3cm 5cm 0cm},clip,scale=0.45]{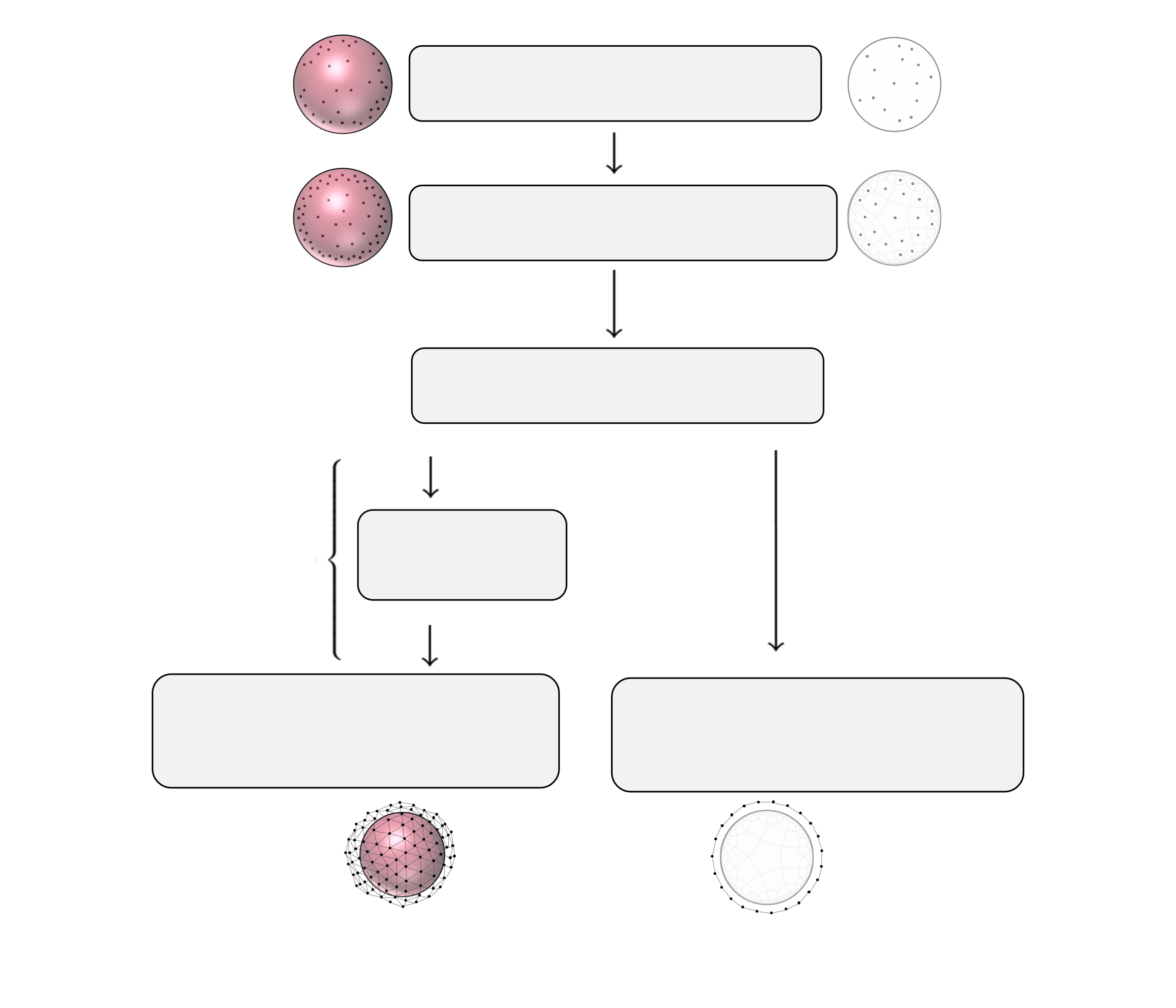}};
\end{scope}
		\node [align=left, anchor=west] (0) at (3.4, -0.7) {\scriptsize \cref{holography},};
		\node [align=left, anchor=west] (0) at (3.4, -1.05) {\scriptsize non-perturbative $(\Delta_L, \epsilon, \eta)-$};
		\node [align=left, anchor=west] (0) at (3.4, -1.4) {\scriptsize approximate simulation};
		\node [align=left, anchor=west] (0) at (3.4, -1.75) {\scriptsize $\Delta_L = \Omega(||H_\text{bulk}||)$};
		\node [align=left, anchor=west] (0) at (3.4, -2.1) {\scriptsize $\epsilon, \eta = 1/\text{poly}(\Delta_L)$};
		\node [align=left, anchor=west] (0) at (-7.3, -0.7) {\scriptsize \cref{thm Main result},};
		\node [align=left, anchor=west] (0) at (-7.3, -1.05) {\scriptsize perturbative $(\Delta_L, \epsilon, \eta)-$};
		\node [align=left, anchor=west] (0) at (-7.3, -1.4) {\scriptsize approximate simulation};
		\node [align=left, anchor=west] (0) at (-7.3, -1.75) {\scriptsize $\Delta_L = \Omega(||H_\text{bulk}||^6)$};
		\node [align=left, anchor=west] (0) at (-7.3, -2.1) {\scriptsize $\epsilon, \eta = 1/\text{poly}(\Delta_L)$};
		\node [align=left, anchor=west] (0) at (-3.7, 0.35) {\scriptsize Bulk dimension, $d\geq 3$};
		\node [align=left, anchor=west] (0) at (2, 0.35) {\scriptsize Bulk dimension, $d=2$};
		\node [align=left, anchor=west] (0) at (-3.4, -1.1) {\scriptsize Geometrically 2-local};
		\node [align=left, anchor=west] (0) at (-3.4, -1.45) {\scriptsize Hamiltonian acting on};
		\node [align=left, anchor=west] (0) at (-3.4, -1.8) {\scriptsize $O(n \log n)$ qudits};
		\node [align=left, anchor=west] (0) at (-6.7, -3.8) {\scriptsize Geometrically 2-local Hamiltonians acting on};
		\node [align=left, anchor=west] (0) at (-6.7, -4.15) {\scriptsize $O(n \poly(\log n))$ qudits in a tessellation of the};
		\node [align=left, anchor=west] (0) at (-6.7, -4.5) {\scriptsize boundary surface with $R' = O(\max(1, \frac{\ln(k) L}{r}$};
		\node [align=left, anchor=west] (0) at (-6.7, -4.85) {\scriptsize $+ \log \log n ))$};
		\node [align=left, anchor=west] (0) at (0.7, -3.8) {\scriptsize Geometrically 2-local Hamiltonians acting on};
		\node [align=left, anchor=west] (0) at (0.7, -4.15) {\scriptsize $O(n (\log n)^2)$ qudits in a tessellation of the};
		\node [align=left, anchor=west] (0) at (0.7, -4.5) {\scriptsize boundary surface with $R' = O(\max(1, \frac{\ln(k) L}{r}$};
		\node [align=left, anchor=west] (0) at (0.7, -4.85) {\scriptsize $+ \log \log n ))$};
		\node [align=left, anchor=west] (0) at (-2.5, 1.4) {\scriptsize Non-local boundary Hamiltonian acting on};
		\node [align=left, anchor=west] (0) at (-2.5, 1.05) {\scriptsize $O(n)$ boundary qudits};
		\node [align=left, anchor=west] (0) at (-2.7, 4.05) {\scriptsize (Quasi) $k$-local Hamiltonian on $O(n)$ bulk qudits};
		\node [align=left, anchor=west] (0) at (-2.7, 3.7) {\scriptsize embedded in a Coxeter polytope tessellation of $\mathbb{H}^d$};
		\node [align=left, anchor=west] (0) at (-2.6, 6.25) {\scriptsize (Quasi) $k$-local Hamiltonian on $n$ bulk qudits in};
		\node [align=left, anchor=west] (0) at (-2.6, 5.9) {\scriptsize $\mathbb{H}^d$ in a ball of radius $R=O(\max(1,\ln(k)L/r))$};
		\node [align=left, anchor=west] (0) at (1, 5) {\scriptsize Rescaling};
		\node [align=left, anchor=west] (0) at (1, 2.7) {\scriptsize High probability of a good random tensor};
		\node [align=left, anchor=west] (0) at (1, 2.35) {\scriptsize network when $p=O(n^q)$, \cref{thm Random stabilizer tensors are perfect}};
		\node [align=left, anchor=west] (0) at (-3.5, 2.52) {\scriptsize Perfect simulation below $\Delta_S$};
\end{tikzpicture}
\caption{Sequence of simulations used to create holographic toy models. Full details can be found in \cref{holography} for the 1d-2d case and \cref{thm Main result} for the higher dimensional case. The first two simulations are perfect and common between the $2d$ and $\geq3d$ bulk cases. These can be achieved with a random tensor network HQECC with probability that can be pushed arbitrarily close to 1 by scaling the tensor bond dimension, see equation \cref{eqn probability scaling}. Given this high probability we can efficiently obtain a good tensor network encoding by repeatedly construct random stabilizer networks and discarding those that do not describe a bulk to boundary isometry. The remaining simulations are ($\Delta_L, \epsilon, \eta$)-approximate, see~\cite{simulation} for exploration of how errors in physical observables scale with these error parameters. But by tuning Hamiltonian terms we are able to make these approximations arbitrarily accurate. The final result in both 2d and 3d is a dual local Hamiltonian living on a boundary surface that is qualitatively close in radius to the boundary surface of the bulk.}
 \label{fg simulation flow chart}
\end{figure}

These `quasi-local Hamiltonians' allow for bulk Hamiltonians which contain Pauli rank-1 Wilson lines, which extend to the boundary of the tensor network.
We can now state our main results for the 3D/2D and 2D/1D dualities.
A simplified flow-diagram of our proof structure in both the 2d and 3d bulk cases is given in \cref{fg simulation flow chart} -- readers may find it helpful to get an overview of the proof structure from the diagram before delving into the technical details.

\begin{theorem}[Main result: 3D to 2D holographic mapping]
\label{thm Main result}
Let $\mathbb{H}^3$ denote 3D hyperbolic space, and let $B_r(x) \subset \mathbb{H}^3$ denote a ball of radius $r$ centred at $x$. Consider any arrangement of $n$ qudits in $\mathbb{H}^3$ such that, for some fixed $r$, at most $k$ qudits and at least one qudit are contained within any $B_r(x)$. Let $L$ denote the minimum radius ball $B_L(0)$ containing all the qudits (which wlog we can take to be centred at the origin). Let $H_\textup{bulk} = \sum_Z h_Z$ be any (quasi)-local Hamiltonian on these qudits.

Then we can construct a Hamiltonian $H_\textup{boundary}$ on a 2D boundary manifold $\mathcal{M}\in \mathbb{H}^3$
with the following properties:
\begin{enumerate}
\item $\mathcal{M}$ surrounds all the qudits, has radius $O\left(\max(1, \frac{\ln(k)}{r})L + \log \log n \right)$, and is homomorphic to the Euclidean 2-sphere.
\item The Hilbert space of the boundary consists of a tesselation of $\mathcal{M}$ by polygons of $O(1)$ area, with a qudit at the centre of each polygon, and a total of $O\left( n(\log n)^4 \right)$ polygons/qudits.
\item Any local observable/measurement $M$ in the bulk has a set of corresponding observables/measurements $\{M'\}$ on the boundary with the same outcome. Any local bulk operator $M$ can be reconstructed on a boundary region $A$ if $M$ acts within the entanglement wedge\footnote{\label{entanglement_foot}The entanglement wedge as defined by the spin domain wall in the corresponding spin picture as described in \cref{lm Mapping to an Ising partition function}} of $A$, denoted $\mathcal{E}[A]$.   This implies complementary recovery.
\item $H_\textup{boundary}$ consists of 2-local, nearest-neighbour interactions between the boundary qudits.
\item $H_\textup{boundary}$ is a $(\Delta_L, \epsilon, \eta)$-simulation of $H_\textup{bulk}$ in the rigorous sense of~\cite{simulation}, Definition 23, with $\epsilon, \eta = 1/ \textup{poly}(\Delta_L)$, $\Delta_L = \Omega \left(||H_\textup{bulk}||^6 \right)$, and where the maximum interaction strength $\Lambda = \max_{ij} |\alpha_{ij}|$ in $H_\textup{boundary}$ scales as $\Lambda = O\left( \Delta_L^{\textup{poly}(n\log (n))} \right)$.
\item The entanglement entropy of any subregion of the boundary agrees exactly with the Ryu-Takayanagi formula when there is no entanglement in the bulk.
\end{enumerate}
\end{theorem}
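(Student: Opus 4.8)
The plan is to realise $H_\textup{boundary}$ as the end of a chain of simulations, exactly as summarised in \cref{fg simulation flow chart}, combining the isometric encoding of the random stabilizer network with the perturbative Hamiltonian simulation machinery of~\cite{Tamara, simulation, gadgets}. First I would recall from~\cite{Tamara} the geometric set-up: a space-filling tessellation of $\mathbb{H}^3$ by odd-faced Coxeter polytopes (\cref{defn Coxeter polytope}), rescaled so that the $n$ bulk qudits sit one per cell inside a ball of radius $R = O(\max(1, \ln(k)L/r))$, with the quasi-local Hamiltonian's Wilson lines running out to the cut-off. I place a random stabilizer tensor in each cell, with $fs+1$ legs as described after \cref{lm Complementary recovery}, one leg being the bulk index ($D_b = p$) and $s$ legs contracted through each of the $f$ faces ($D = p^s$, so $D_b < D$). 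By \cref{thm Random stabilizer tensors are perfect} and a union bound over all $O(n)$ tensors, choosing $p = O(n^q)$ for a suitable $q$ makes every tensor exactly perfect -- and hence the whole network a bulk-to-boundary isometry -- with probability arbitrarily close to $1$, so I can fix one good network by resampling.

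Second, I would push $H_\textup{bulk}$ through this isometry to obtain an intermediate boundary Hamiltonian acting on $O(n)$ qudits. This is the first ``perfect'' simulation and establishes properties~3 and~6 immediately: complementary recovery holds by \cref{lm Complementary recovery} (the odd-faced condition and $D_b < D$ are met by construction), and exact Ryu-Takayanagi for arbitrary bulk product states holds by \cref{lm Exact Ryu-Takayanagi}, both with the high probability inherited from the large bond dimension. Crucially, because the tensors are stabilizer tensors admitting a consistent Pauli-preserving basis (Theorem~D.4 of~\cite{Tamara}), a local bulk Pauli maps to a boundary Pauli of non-increased rank; this keeps the weight of the intermediate boundary interactions under control and is essential for the next step.

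Third, I would apply the perturbation-gadget simulation of~\cite{simulation, gadgets} to reduce the (generally high-weight, non-local) intermediate boundary Hamiltonian to a $2$-local, nearest-neighbour Hamiltonian on a tessellated $2$-sphere $\mathcal{M}$, giving properties~1,~2,~4 and~5. The bounded Pauli rank from the previous step ensures that each interaction requires only $O(\poly(\log n))$ locally placed ancilla qudits, yielding the claimed $O(n(\log n)^4)$ total, while the radius and spherical topology in properties~1--2 follow from the Coxeter growth rate (\cref{sect Hyperbolic Coxeter groups}). Tracking the gadget orders through the $k$-local-to-$2$-local reduction produces the stated $\Delta_L = \Omega(\|H_\textup{bulk}\|^6)$, $\epsilon,\eta = 1/\poly(\Delta_L)$, and interaction strength $\Lambda = O(\Delta_L^{\poly(n\log n)})$, in the $(\Delta_L,\epsilon,\eta)$-simulation sense of~\cite{simulation}.

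The main obstacle I anticipate is showing that this final perturbative step preserves property~6. The simulation introduces many ancilla qudits and heavy interaction terms, either of which could in principle spoil the exact Ryu-Takayanagi statement proved only for the pre-simulation network. The key point to argue carefully is that, unlike the history-state method used for the $2$D case in~\cite{kohler2020translationallyinvariant} (which does perturb the entropy), the $3$D perturbation gadgets confine their ancillas to a fixed product state in the relevant low-energy subspace, so they contribute no entanglement across any boundary bipartition and leave $S(\rho_A)$ -- and hence the equality of \cref{lm Exact Ryu-Takayanagi} -- intact. Verifying this decoupling rigorously, while simultaneously checking that the geometric constraints (odd faces, $D_b < D$, and output legs outnumbering input legs at every tensor) are all met by a single explicit $\mathbb{H}^3$ Coxeter tessellation such as the $7$-faced polytope of~\cite{Tamara}, is where the real work lies; the remaining content is the bookkeeping of combining previously established results without contradiction.
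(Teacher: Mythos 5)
Your proposal follows essentially the same route as the paper's proof: the identical chain of simulations (perfect encoding through the random stabilizer network, then the perturbative Steps~2--3 of Theorem~6.10 of~\cite{Tamara}), the same probability argument via \cref{thm Random stabilizer tensors are perfect} with $p=O(n^q)$ (the paper combines it with \cref{lm Exact Ryu-Takayanagi} through a Fr\'echet bound rather than treating them separately), the same appeal to \cref{lm Complementary recovery} for property~3, and even the same resolution of your anticipated obstacle -- the paper likewise disposes of property~6 by noting the gadget ancillas do not alter the entanglement structure, in contrast with the 2D/1D history-state method. The one imprecision worth noting: Theorem~D.4 of~\cite{Tamara} gives a Pauli-rank-preserving basis only for each \emph{individual} tensor, and since each random stabilizer tensor concentrates on a different perfect tensor there is no single consistent basis across the network; the paper closes this gap by observing that the output legs of distinct tensors' codes are always disjoint in the hyperbolic geometry, so the bases can be chosen independently without affecting the Pauli-weight bookkeeping you rely on.
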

\begin{proof}
In this proof we use a tensor network construction where the network's underlying graph is a tessellation of 3-d hyperbolic space generated from Coxeter polytopes.
We place a random stabilizer tensor in each polyhedral cell of the finite bulk tessellation.
Each tensor has one index identified as the bulk qudit while the rest are contracted with tensors in neighbouring polyhedral cells.

In order to demonstrate that properties 1,2,4,5 and 6 hold we need that with high probability every tensor in our tensor network is simultaneously perfect, and that the network exactly obeys the Ryu-Takayanagi formula.
From \cref{thm Random stabilizer tensors are perfect} and \cref{lm Exact Ryu-Takayanagi} we have probability bounds on both these events occurring individually.
To get a bound the joint probability of these related events we again use Fr\'echet inequalities:
\begin{multline}\label{eqn probability scaling}
\text{Prob}\left[S(A)=S_{RT}(A) \cap \text{ perfect network}\right] \\
\geq \max \left\{ 0, \left[1- \frac{1}{2p^b}{t\choose \floor*{t/2}} \right]^n - \frac{1}{a} \right\}.
\end{multline}
This probability can be pushed arbitrarily close to 1 by increasing the bond dimension of the tensors.

For larger tensor networks, the bond dimension must be chosen larger in order to maintain high probability of having all perfect tensors. The bond dimension must scale as $p=O(n^{q})$ for $q>1/b$ in order for the probability of having an exact isometry tensor network to go to 1 as the size of the network increases.
However for any given $n$, $p$ can be chosen to be some large finite constant such that all properties resulting from tensor network being perfect will follow automatically for our modified HQECC.
So, with high probability we can construct a tensor network that is composed of perfect tensors which exactly obeys the Ryu-Takayanagi formula.

The local boundary Hamiltonian is built up by composing simulations using results from~\cite{simulation}.
The series of simulations are exactly the simulations used in Steps 1-3 in the proof of \cite[Theorem 6.10]{Tamara} -- we recap them here for completeness, but refer readers to the original for further details.\footnote{The extension of the proof in~\cite{Tamara} to cover the case of quasi-local Hamiltonians is trivial since the generalisation does not affect the scaling of weights of boundary operators -- see \cite[Section 6.1.5]{Tamara}.}

\paragraph{Step 1:} First we simulate the bulk Hamiltonian with a Hamiltonian that acts on the bulk indices of a HQECC in $\mathbb{H}^3$ of radius $R = O\left(\max(1, \frac{\ln(k)}{r})L  \right)$.\footnote{This scaling occurs since we may need to rescale the distances between qudits to embed them in a tessellation -- see \cite[Theorem 6.10]{Tamara} for details.}
This is a perfect simulation.
The new bulk Hamiltonian act on $O(n)$ qudits of dimension $p$, and the HQECC will contain $O(n)$ random stabilizer tensors.
Since the network is composed of perfect tensors with high probability, we can then define a non-local boundary Hamiltonian which is a perfect simulation of the original bulk Hamiltonian using the isometry defined by the HQECC.
This non-local boundary Hamiltonian also acts on $O(n)$ qudits.

\paragraph{Step 2}:
Next we simulate the non-local boundary Hamiltonian by a geometrically 2-local qudit Hamiltonian using the peturbative simulations outlined in  \cite[Theorem 6.10]{Tamara}.

\paragraph{Step 3}:
Finally we use the perturbative simulations from Step 3 in \cite[Theorem 6.10]{Tamara} to reduce the degree of each vertex to $3(p-1)$ so that the interaction graph can be embedded in a tessellation of the boundary surface.
\paragraph{}
The perturbative simulations used in Steps 2 and 3 are approximate, but the errors they introduce are tracked in \cite[Theorem 6.10]{Tamara} and can be made arbitrarily small by tuning the gadget parameters in the Hamiltonian. Practically, perturbation gadgets involve introducing ancillary qudits with neighbouring interactions into the Hamiltonian interaction graph. These `gadgets' modify the graph to obtain a new operator that reproduces the same low energy spectrum with different interactions.

In order to demonstrate the scaling claimed in the theorem we need to upper bound the number of ancillary qudits introduced by the perturbative simulations.
This requires us to determine the distribution of Pauli weights of terms in the boundary Hamiltonian after Step 1 of the simulation.
As long as the tensor network preserves Pauli rank of operators, this distribution will be unchanged from that calculated in \cite[Theorem 6.10]{Tamara}.
For the geometry in~\cite{Tamara,kohler2020translationallyinvariant} Pauli rank is preserved through the network since there exists a basis for the family of codes described by the perfect stabilizer tensor that maps logical Pauli operators to physical Pauli operators (Theorem D.4 in~\cite{Tamara}).

Any individual high-dimensional random stabilizer tensor is perfect with high probability and so a consistent basis exists for the family of codes described by any \textit{individual} tensor in our network. This is necessary since when considering a single tensor as an error correcting code from $1$ to $t-1$ legs, or as a code from $2$ to $t-2$ legs, some of the output legs are the same. Therefore for both codes to preserve Pauli rank there must be consistent basis for those legs. However unlike Kohler and Cubitt's construction, the basis that preserves Pauli rank will not be consistent across different tensors since every random stabilizer tensor will not be concentrated on the same perfect tensor. We can convince ourselves that this is acceptable by examining the hyperbolic geometry in \cref{fg holographic code} and seeing that the output legs of separate tensors' codes are always disjoint i.e. there's never a leg in the tensor network which is an output for two different tensors. The two sets of output legs are independent so the basis for the different tensors can be chosen independently.

Therefore the number of ancillary qudits introduced by our simulations will be unchanged from \cite[Theorem 6.10]{Tamara}, where it is shown that the final boundary Hamiltonian acts on $O(n \poly(\log(n)))$ qudits.
Since we enforce that the polygon cells of the boundary tessellation have area $O(1)$, the final boundary surface must have radius $R' = O\left(\max(1, \frac{\ln(k)}{r})L + \log \log n \right)$.

Properties 1,2,4, 5 and 6 follow immediately from the argument above and the results in \cite[Theorem 6.10]{Tamara}.
The complementary recovery described in point 3 follows from \cref{lm Complementary recovery}, since our construction can satisfy the conditions of the lemma, as described in \cref{Full complementary recovery}.
\end{proof}

It should be noted that in \cite[Theorem 6.10]{Tamara} the perturbative simulations are taken one step further, and the final boundary Hamiltonian is a qubit Hamiltonian with full local $SU(2)$ symmetry.
In~\cite{Tamara} this can be achieved without changing the scaling of the final boundary radius since in the perfect tensor case the local Hilbert dimension $p$ of the tensors is independent of the size of the tensor network.
This means that going from a Hamiltonian as outlined in \cref{thm Main result} to a qubit Hamiltonian with full $SU(2)$ local symmetry requires $O(1)$ rounds of perturbation theory, and $O(n)$ additional ancilla qubits, which does not affect the final scaling.

However, in the case of random tensors we need the ability to increase the local dimension of the tensors as the size of the tensor network increases, to ensure that \emph{every} tensor in the network is perfect with high probability.
The bond dimension must scale as $p=O(n^{q})$ for $q>1/b$ in order for the probability of having an exact isometry tensor network to go to 1 as the size of the network increases (where $b < 1$ as defined in \cref{thm Random stabilizer tensors are perfect}).
This dependence of $p$ on $n$ means that reducing the boundary Hamiltonian to act on qubits and have $SU(2)$ symmetry requires $O(n^{1+2q}\poly(\log( n))$ ancilla qudits.
Therefore maintaining the density of qubits on the boundary surface would require a boundary radius of $R' = O((1+2q)L') > 3L'$, where $L' = \max(1, \frac{\ln(k)}{r})L$ and $L$ is the radius that contains all the bulk qudits.
At this point the boundary can no longer be considered a geometric boundary of the bulk geometry, which is why we omit the final simulation step from our construction.

It may be possible to construct a boundary Hamiltonian on qudits with $SU(2)$ or $SU(p)$ local symmetry while maintaining the scaling outlined here by using qudit perturbation gadgets.
It is known that the qudit generalisations of the Heisenberg Hamiltonian with either symmetries remain universal for all local dimension greater than 2 (see~\cite{piddock2018universal}), but the scaling of the ancillas required would need to be investigated.

\begin{theorem} [Main result: 2D to 1D holographic mapping] \label{holography}
Consider any arrangement of $n$ qudits in $\mathbb{H}^2$, such that for some fixed $r$ at most $k$ qudits and at least one qudit are contained within any $B_r(x)$.
  Let $Q$ denote the minimum radius ball $B_Q(0)$ containing all the qudits.
  Let $H_\textup{bulk} = \sum_Z h_Z$ be any (quasi) $k$-local Hamiltonian on these qudits.

  Then we can construct a Hamiltonian $H_\textup{boundary}$ on a 1D boundary manifold $\mathcal{M}$ with the following properties:
  \begin{enumerate}
  \item
    $\mathcal{M}$ surrounds all the qudits and has radius $\BigO\left(\max\left(1,\log(k)/r\right) Q + \log\log n\right)$.
      \item
    The Hilbert space of the boundary consists of a chain of qudits of length $\BigO\left(n\log n \right)$.
  \item
 Any local observable/measurement $M$ in the bulk has a set of corresponding observables/measurements $\{M'\}$ on the boundary with the same outcome. Any local bulk operator $M$ can be reconstructed on a boundary region $A$ if $M$ acts within the entanglement wedge\cref{entanglement_foot} of $A$, denoted $\mathcal{E}[A]$.   This implies complementary recovery.
  \item
    $H_\textup{boundary}$ consists of 2-local, nearest-neighbour interactions between the boundary qudits.
  \item
    $H_\textup{boundary}$ is a $(\Delta_L,\epsilon,\eta)$-simulation of $H_\textup{bulk}$ in the rigorous sense of~\cite{simulation}, with $\epsilon,\eta = 1/\poly(\Delta_L)$, $\Delta_L = \Omega\left(\|H_\text{bulk}\|\right)$, and where the interaction strengths in $H_\textup{boundary}$ scale as  $\max_{ij}|\alpha_{ij}| = \BigO\left(\Delta_L\right)$.
    \item
     The entanglement entropy of any subregion of the boundary agrees exactly with the Ryu-Takayanagi formula when there is no entanglement in the bulk.

  \end{enumerate}
\end{theorem}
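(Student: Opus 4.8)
The plan is to follow the proof of \cref{thm Main result} line for line, replacing the 3D ingredients by their 2D counterparts and swapping the perturbative simulation for the non-perturbative history-state method of~\cite{kohler2020translationallyinvariant}. First I would fix the network geometry: tessellate $\mathbb{H}^2$ by Coxeter polygons with an \emph{odd} number of faces (for instance the pentagonal tiling of~\cite{Happy}), place a random stabilizer tensor in each cell, designate one leg per tensor as the bulk index, and contract the remaining legs through the shared edges, truncating the tessellation at finite radius so that the leftover dangling legs form the $1$D boundary chain. Following \cref{Full complementary recovery}, taking tensors with $fs+1$ legs of dimension $p$ makes the bulk dangling index dimension $D_b=p$ strictly smaller than the internal bond dimension $D=p^s$, so conditions~1 and~2 of \cref{lm Complementary recovery} are met by construction.

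Next I would establish the probabilistic guarantee exactly as in \cref{eqn probability scaling}. By \cref{thm Random stabilizer tensors are perfect} each of the $O(n)$ tensors is perfect with probability at least $1-\frac{1}{2p^b}{t\choose\floor{t/2}}$, and by \cref{lm Exact Ryu-Takayanagi} the network reproduces the Ryu-Takayanagi entropy exactly with probability $1-1/a$; a Fr\'echet union bound over the tensors then shows that choosing $p=O(n^q)$ with $q>1/b$ drives the joint probability of a perfect, exactly-RT network arbitrarily close to $1$. Conditioned on this event the network is a bulk-to-boundary isometry, so complementary recovery (property~3) follows from \cref{lm Complementary recovery}, and exact Ryu-Takayanagi holds at the level of the HQECC.

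Then I would assemble the simulation chain. Step~1 embeds $H_\textup{bulk}$ into the bulk indices of a radius-$O(\max(1,\log(k)/r)Q)$ HQECC on $O(n)$ qudits and pushes it isometrically to a non-local boundary Hamiltonian, a perfect simulation identical to~\cite{kohler2020translationallyinvariant} up to the ambient dimension. The remaining steps diverge from the 3D case: rather than perturbation gadgets I would invoke the history-state simulation of~\cite{kohler2020translationallyinvariant} to convert the non-local boundary Hamiltonian into a $2$-local nearest-neighbour chain, which is what delivers the favourable scaling $\Delta_L=\Omega(\|H_\textup{bulk}\|)$ and $\max_{ij}|\alpha_{ij}|=O(\Delta_L)$ of property~5. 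To pin the chain length of property~2 I would re-run the ancilla count of~\cite{kohler2020translationallyinvariant}, which depends only on the distribution of Pauli weights produced by Step~1; as in \cref{thm Main result}, each individual perfect stabilizer tensor admits a consistent basis mapping logical Paulis to physical Paulis (Theorem~D.4 of~\cite{Tamara}), and although this basis varies between tensors the output legs of distinct tensors are disjoint, so the choices are independent and Pauli rank is preserved globally. This fixes the boundary chain length at $O(n\log n)$ and the boundary radius at $O(\max(1,\log(k)/r)Q+\log\log n)$, giving properties~1, 2 and~4.

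The step I expect to be the main obstacle is property~6 \emph{after} the final simulation. In the 3D perturbative construction the gadget ancillas occupy a fixed low-energy subspace and leave the entanglement structure untouched, but the history-state construction appends a clock register whose correlations perturb the boundary entropy away from the exact tensor-network value. I would therefore need to bound this deviation and argue that it is polynomially suppressed by the high-weight interaction terms, so that taking those weights large --- which is compatible with the free choice of simulation parameters --- drives the Ryu-Takayanagi agreement arbitrarily close to exact. The delicate point is reconciling this error budget with the simultaneous growth $p=O(n^q)$ needed for an isometric network, so that the entropy-suppression and isometry conditions do not compete; the rest is a direct transcription of the argument of \cref{thm Main result} together with the cited simulation results.
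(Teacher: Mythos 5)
Your proposal tracks the paper's proof of \cref{holography} almost exactly: same odd-faced Coxeter tessellation of $\mathbb{H}^2$ with random stabilizer tensors, same Fr\'echet/union-bound argument with $p=O(n^q)$ giving a perfect, exactly-RT network with high probability, same perfect Step~1 followed by the history-state simulation of \cite[Theorem 5.2]{kohler2020translationallyinvariant}, and the same per-tensor consistent-basis argument (Theorem D.4 of \cite{Tamara}, with disjoint output legs across tensors) to preserve Pauli rank. Your worry about property~6 is also well placed and resolved the way the paper resolves it: the history-state clock does introduce small errors in the Ryu-Takayanagi formula, which the paper's overview states are polynomially suppressed by the high-weight interaction terms and can be made arbitrarily close to zero, compatibly with the growth of $p$.

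There is, however, one concrete accounting gap. You assert that the ancilla count of \cite{kohler2020translationallyinvariant} ``depends only on the distribution of Pauli weights produced by Step~1'' and transcribe the perfect-tensor answer $O(n\log n)$. But that count also depends on the local qudit dimension, which in the random-tensor setting is \emph{not} constant: since $p=O(n^q)$, describing a Pauli rank-1 operator of weight $w$ requires $O(w\log n)$ bits of information rather than the $O(w)$ of the fixed-$p$ perfect-tensor case. The paper explicitly flags this (``we have to take care since the dimension of the qudits now scales as $p=O(n^q)$'') and concludes that the boundary manifold carries $O(n(\log n)^2)$ spins, with the hyperbolic geometry ensuring this extra logarithmic factor does not change the asymptotic boundary radius $O\left(\max\left(1,\log(k)/r\right)Q+\log\log n\right)$ (the flow diagram in \cref{fg simulation flow chart} records the $O(n(\log n)^2)$ count for the 2D/1D case). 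So a direct transcription of the perfect-tensor ancilla count is not justified here; the step where the growing bond dimension feeds into the history-state overhead must be argued, and it changes the spin count by a $\log n$ factor even though properties~1, 4, 5 and~6 survive unchanged.
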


\begin{proof}
In this proof we use a tensor network construction where the network's underlying graph is a tessellation of 2-d hyperbolic space generated from Coxeter polytopes.
We place a random stabilizer tensor in each polyhedral cell of the finite bulk tessellation.
Each tensor has one index identified as the bulk qudit while the rest are contracted with tensors in neighbouring polyhedral cells.
The proof that with high probability every tensor in the network is perfect, and that the network obeys the Ryu-Takayanagi formula follows exactly as in \cref{thm Main result}.

The first step in the simulation is unchanged from \cref{thm Main result}.
Then, instead of using perturbative simulations to construct the local boundary Hamiltonian, we use the history-state simulation method, as in \cite[Theorem 5.2]{kohler2020translationallyinvariant}.
The history state simulation method is again approximate, but it is non-perturbative.
The overhead it incurs both in terms of ancillary qudits and errors is calculated in \cite[Theorem 5.2]{kohler2020translationallyinvariant} -- as in  \cref{thm Main result} we can use the fact that our random tensor networks preserve the Pauli rank of operators to argue that the scaling here is unchanged from the perfect tensor case.
Again, we have to take care since the dimension of the qudits now scales as $p=O(n^q)$.
This means that describing a Pauli rank-1 operator of weight $w$ requires $O(w\log(n))$ bits of information, as opposed to $O(w)$ in the perfect tensor case.
So the number of spins on the boundary manifold scales as $O(n\log(n)^2)$.
Due to the hyperbolic geometry this does not change the asymptotic scaling of the distance from the new boundary to the old boundary.

Properties 1,2,4, 5 and 6  follow immediately.
As in \cref{thm Main result} the complementary recovery described in point 3 follows from \cref{lm Complementary recovery}.

\end{proof}

It also follows from the above that all the work in~\cite{Tamara}: exploring the map between models, the dynamics of the bulk and boundary and their relative energy scales, also applies to our modified holographic code construction with high probability. Our construction illustrates that a mapping between models is possible without having to chose carefully selected tensors, which allows us to simultaneously achieve a mapping between models, perfect Ryu-Takayanagi and perfect complementary recovery.

\section{Conclusions and outlook}
\label{Conclusions and outlook}

Quantum codes with a tensor network structure have provided exactly solvable toy models for several interesting holographic properties of the AdS/CFT correspondence. Independent models have succeeded on different fronts: averages of random tensor networks can be mapped to spin systems giving the Ryu-Takayanagi entropic relation for general cases, whereas a duality at the level of Hamiltonians giving insights into dynamical features and energy scales has been achieved using simulation techniques on perfect tensor networks. Tensor network constructions of holographic codes have been a fruitful area of research leading to classification and study of convenient tensors. This work outlines a mathematically rigorous characterisation of the concentration of random stabilizer tensors about perfect tensors with increasing bond dimension. Exploiting the algebraic structure of the stabilizer group we obtain a probability bound on having an exact perfect tensor.

We proposed new HQECCs based on the constructions in~\cite{Tamara} and~\cite{kohler2020translationallyinvariant} but replacing the perfect tensors in their networks with high dimensional random stabilizer tensors. By increasing the dimension of the qudits describing the tensor legs, we are able to construct a tensor network from random stabilizer tensors that inherits all the desirable features of a perfect tensor network with high probability. The simulation techniques from~\cite{Tamara,kohler2020translationallyinvariant} used to achieve a mapping between models from perfect tensors can be analogously applied to a network of random stabilizer tensors retaining the same scaling of boundary qudits and interaction strengths with high probability.
The successes of Kohler and Cubitt's codes are all inherited by these modified HQECCs: they are exactly solvable; have error correcting properties realising the proposal in~\cite{Rindler}; have a uniform bulk and are able to explore the dynamical features of the duality and energy scales. Demonstrating these properties without perfect tensors is useful since it removes the overhead of explicitly checking the existence and constructing a special tensor in the given dimension.

An important advantage of replacing perfect tensors with random stabilizer tensors is the improved entanglement features. We demonstrate exact Ryu-Takayanagi for general boundary configurations for random stabilizer tensor networks advancing from the singly connected regions in perfect tensor networks. This is shown rigorously for product bulk states, but following earlier work qualitative statements can be made regarding the expected corrections when entanglement is present in the bulk. Our construction does not suffer the exceptions to complementary recovery found in the HaPPY code, Kohler and Cubitt's construction, and general random HQECCs. This new construction exhibits complementary recovery for any bulk index and any bipartition of the boundary.

Our HQECC takes a further step towards a complete, mathematically rigorous construction of holographic duality, simultaneously capturing more features of the AdS/CFT correspondence than previous work. We are still however a long way from a complete mathematical description of AdS/CFT. The boundary theory arising from our encoding is not Lorentz invariant in the limit and is constructed in Euclidean rather than Minkowski space. An interesting avenue for further work could be to attempt to achieve a discrete version of conformal invariance in the boundary model. Work by Osborne et al. in~\cite{HQECC2} also extended the HaPPY code to introduce dynamics via a different method which led to a boundary system with conformal invariance, although the redundant encoding of information was not replicated. These results could potentially be further strengthened by combining the successful techniques from~\cite{HQECC2} into our construction.

As with previous HQECC constructions, the holographic states in our construction all have flat Renyi entanglement spectra.
In the past it has been suggested that this is a drawback of these models, since this is not the entanglement spectrum predicted from general relativity~\cite{2016}.
However, recent work has demonstrated that the `fixed area states' of quantum gravity (which are states corresponding to tensor network states), do in fact have flat entanglement spectra~\cite{2019}.
The more complicated entanglement spectra of general states can be understood by considering superpositions over states with fixed area, or, in the HQECC picture, superpositions over different tensor network geometries~\cite{2017}.
Another avenue to explore is how to introduce these superpositions dynamically into the construction so the expected entanglement spectrum emerges naturally.

\section*{Acknowledgements}
The authors would like to thank Sepehr Nezami for very helpful responses regarding random tensor network constructions.
The authors are grateful to Michael Walter for helpful initial discussions and pointing out a gap in the proof concerning concentration of random stabilizer tensors.
T.\,S.\,C.~is supported by the Royal Society.
T.\,K.~is supported by the EPSRC Centre for Doctoral Training in Delivering Quantum Technologies [EP/L015242/1].
H.\,A. ~is supported by EPSRC DTP Grant Reference: EP/N509577/1 and EP/T517793/1.

\begin{appendices}
\crefalias{section}{appendix}

\section{Lipschitz constant for $\trace(\sigma_A^2)$}
\label{Lipschitz constant}

In order to apply Levy's lemma to the purity of Haar random states we need to upper bound the function's Lipschitz constant. This will give a quantified measure of the `smoothness' of the function. The Lipschitz function for purity is simple to compute and found in previous works, e.g. see~\cite{Low_2009} Lemma 4.2. We include a short proof here for completeness.

\begin{lemma}[Lipschitz constant for purity]
\label{lm purity Lipschitz}
The Lipschitz constant for purity is upper bounded by $\eta \leq 2$.
\end{lemma}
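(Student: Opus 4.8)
The plan is to view each normalised pure state $\ket{\phi}_{AB}$ as a point on the Euclidean sphere $\mathbb{S}^{2d_{AB}-1}$ and to treat the purity $f(\phi)=\trace(\sigma_A^2)$, with $\sigma_A=\trace_B\ket{\phi}\bra{\phi}$, as a smooth real function on that sphere. Since Levy's lemma (\cref{lm Levy's lemma}) measures smoothness through the gradient on the sphere, the quantity I need to control is $\eta=\sup_\phi\|\nabla_{\mathbb{S}}f\|$, i.e. the \emph{tangential} gradient; the whole improvement over the naive bound will come from discarding the radial part.

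First I would compute the ambient gradient by first-order perturbation theory. Writing $\ket{\phi}\mapsto\ket{\phi}+\epsilon\ket{\chi}$, so that $\sigma_A\mapsto\sigma_A+\epsilon\,\trace_B(\ket{\chi}\bra{\phi}+\ket{\phi}\bra{\chi})+O(\epsilon^2)$, and expanding $\trace(\sigma_A^2)$ to first order gives the directional derivative $4\,\mathrm{Re}\,\bra{\chi}(\sigma_A\otimes\mathbb{I}_B)\ket{\phi}$. Hence the Euclidean gradient is $G=4(\sigma_A\otimes\mathbb{I}_B)\ket{\phi}$, with
\begin{equation}
\|G\|^2 = 16\,\bra{\phi}(\sigma_A^2\otimes\mathbb{I}_B)\ket{\phi} = 16\,\trace(\sigma_A^3).
\end{equation}
Taken on its own this only yields $\eta\le 4$, since $\trace(\sigma_A^3)\le 1$.

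The key step is to project $G$ onto the tangent space of the sphere at $\ket{\phi}$. The outward normal is $\ket{\phi}$ itself and, because the phase direction $i\ket{\phi}$ remains tangential, the normal space is real one-dimensional; the only component to remove is the radial one, $\langle\phi|G\rangle = 4\,\trace(\sigma_A^2)$. Therefore
\begin{equation}
\|\nabla_{\mathbb{S}}f\|^2 = \|G\|^2 - \langle\phi|G\rangle^2 = 16\left[\trace(\sigma_A^3)-\big(\trace(\sigma_A^2)\big)^2\right].
\end{equation}
Finally I would bound the bracket using $\trace(\sigma_A^3)\le\trace(\sigma_A^2)$ (the eigenvalues of $\sigma_A$ lie in $[0,1]$): setting $P=\trace(\sigma_A^2)\in(0,1]$ gives $\trace(\sigma_A^3)-P^2\le P-P^2\le \tfrac14$, so $\|\nabla_{\mathbb{S}}f\|^2\le 4$ and hence $\eta\le 2$.

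The main obstacle is essentially bookkeeping rather than analysis: the naive Cauchy--Schwarz estimate on the ambient gradient overshoots by a factor of two, and recovering the sharp $\eta\le 2$ hinges on correctly subtracting the radial derivative $4\,\trace(\sigma_A^2)$ before maximising. The only point requiring care is the real-inner-product geometry of $\mathbb{S}^{2d_{AB}-1}$ — in particular that $i\ket{\phi}$ is tangential, so that the normal space has real codimension one and not two.
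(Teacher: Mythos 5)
Your proof is correct, and it takes a genuinely different route from the paper's. The paper works algebraically on density operators: writing purity as the squared Hilbert--Schmidt norm, $\trace(\rho^2)=\|\rho\|_2^2$, factoring the difference of squares, applying the reverse triangle inequality $\|\rho_1-\rho_2\|_2\geq\left|\,\|\rho_1\|_2-\|\rho_2\|_2\right|$, and using $\|\rho\|_2\leq 1$ to conclude $\eta\leq\sup\left(\|\rho_1\|_2+\|\rho_2\|_2\right)\leq 2$. You instead compute the spherical gradient of $f(\phi)=\trace(\sigma_A^2)$ directly on $\mathbb{S}^{2d_{AB}-1}$, and every step checks out: the ambient gradient is indeed $G=4(\sigma_A\otimes\mathbb{I}_B)\ket{\phi}$ with $\|G\|^2=16\bra{\phi}(\sigma_A^2\otimes\mathbb{I}_B)\ket{\phi}=16\trace(\sigma_A^3)$; the normal space at $\ket{\phi}$ is real one-dimensional, with the phase direction $i\ket{\phi}$ tangential and $f$ constant along it (consistently, $\mathrm{Re}\braket{i\phi|G}=0$); the radial component $\mathrm{Re}\braket{\phi|G}=4\trace(\sigma_A^2)$ agrees with Euler's relation for the degree-4 homogeneous extension of $f$; and the final estimate $\trace(\sigma_A^3)-P^2\leq P-P^2\leq \tfrac14$ with $P=\trace(\sigma_A^2)$ is valid since the eigenvalues of $\sigma_A$ lie in $[0,1]$, yielding $\|\nabla_{\mathbb{S}}f\|\leq 2$. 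Comparing the two: the paper's argument is shorter and applies to purity as a function of an arbitrary mixed state, but what it literally bounds is a Lipschitz constant with respect to the 2-norm distance \emph{between density operators}; to feed this into \cref{lm Levy's lemma}, which concerns functions of the state vector on the hypersphere, one must additionally relate the sphere metric to $\|\sigma_{A,1}-\sigma_{A,2}\|_2$, and naive chaining of the two Lipschitz constants loses a constant factor (already the map $\ket{\phi}\mapsto\ket{\phi}\bra{\phi}$ has local Lipschitz constant $\sqrt{2}$ in 2-norm). Your computation produces exactly the quantity $\eta=\sup|\nabla f|$ appearing in \cref{lm Levy's lemma}, with no intermediate metric and no hidden composition step, at the cost of being specific to the purity of a reduced pure state; in that sense yours is the more airtight justification of the constant actually used in \cref{thm Random stabilizer tensors are perfect}, while the paper's is the more economical and more general statement.
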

\begin{proof}
The purity is equal to the square of the 2-norm, $\trace(\rho^2)= \sum_{i=1}^n \lambda_i^2= ||\rho||_2^2$. Therefore the Lipschitz constant of purity is given by
\begin{subequations}
\begin{align}
\eta &= \sup_{\rho_1,\rho_2} \frac{\left| ||\rho_1||_2^2 - ||\rho_2||^2_2\right|}{||\rho_1-\rho_2||_2}\\
&= \sup_{\rho_1,\rho_2} \frac{\left| ||\rho_1||_2 - ||\rho_2||_2\right|(||\rho_1||_2 + ||\rho_2||_2)}{||\rho_1-\rho_2||_2}.
\end{align}
\end{subequations}
Since the 2-norm obeys the reverse triangle inequality,
\begin{subequations}
\begin{align}
\eta &\leq \sup_{\rho_1,\rho_2} \frac{\left| ||\rho_1||_2 - ||\rho_2||_2\right|(||\rho_1||_2 + ||\rho_2||_2)}{\left| ||\rho_1||_2 - ||\rho_2||_2\right|}\\
&= \sup_{\rho_1,\rho_2} (||\rho_1||_2 + ||\rho_2||_2).
\end{align}
\end{subequations}
Finally since $||\rho||_2\leq1$ for all $\rho$ we arrive at the result.
\end{proof}

\section{Low temperature Ising model corrections}
\label{Low temperature Ising model corrections}

The ground state of the Ising model in $>1$ dimensions is given by the number of tensor `legs' crossed by the minimal domain wall. The energy gap, $\Delta$, between the ground and first excited state comes from one additional tensor leg being crossed by the domain wall, giving energy penalty $\Delta =1$. Writing out the sum,
\begin{subequations}
\begin{align}
Z &= e^{-\beta E_{GS}} \left(1 + e^{-\beta} + e^{-2\beta} +... \right)\\
&= e^{-\beta E_{GS}}(1+\epsilon).
\end{align}
\end{subequations}
Calculating $\epsilon$ as the sum of a GP:
\begin{subequations}
\begin{align}
\epsilon &= \sum_{i=1}^N e^{-\beta} (e^{-\beta})^{i-1}\\
& = \frac{e^{-\beta}(1-e^{-N\beta})}{1-e^{\beta}} = \frac{\frac{1}{p}\left(1 - \frac{1}{p} \right)}{1- \frac{1}{p}} = \frac{p^N - 1}{p^{N+1}-1} = O\left(\frac{1}{p} \right).
\end{align}
\end{subequations}
Using the expansion of $Z$:
\begin{subequations}
\begin{align}
-\ln Z &=  - \ln \left[e^{-\beta E_{GS}}\right] - \ln \left[ (1+\epsilon) \right] \\
&= - \ln \left[e^{-\beta E_{GS}}\right] - \epsilon - \frac{\epsilon^2}{2} + \frac{\epsilon^3}{3}-... \tag*{using the Taylor expansion of $\log(1+\epsilon)$}\\
&= \beta E_{GS} - O\left(\frac{1}{p} \right),
\end{align}
\end{subequations}
with corrections vanishing with increasing bond dimension. This result holds for any graph in $>1$ dimensions. However if there are $k$ degenerate minimal surfaces through the tensor network $-\ln Z$ is modified by $-\ln k$.

\section{Generalisation of the Ising mapping}
\label{Generalisation of the Ising mapping}

Section 4 of~\cite{Random} introduces a generalisation of the Ising mapping where the R\'enyi entropy considered is a function of a state with different support. We now take the full tensor network state, written as a pure state defined by the tensor network isometry, $V$, in an orthonormal basis of the bulk, $\{\ket{\alpha} \}$, and boundary, $\{ \ket{a}\}$:
\begin{equation}
\ket{\Psi_V} = \frac{1}{p^{v/2}} \sum_{\alpha, a} \braket{\alpha|V|a} \ket{\alpha} \otimes \ket{a},
\end{equation}
and trace out subregions of either the bulk or the boundary or both. $v$ here is the number of vertices in the tensor network graph.

Similarly to the mapping described for boundary states, they find that a function related to the R\'enyi-2 entropy of $\rho = \ket{\Psi_V}\bra{\Psi_V}$ can be equated to an Ising model partition function:
\begin{equation}
\langle \trace \left[\left(\rho \otimes \rho \right)\mathcal{F}_{Y}\right] \rangle = \sum_{\{s_x \}} e^{-\mathcal{A}[\{s_x \}]},
\end{equation}
where $Y$ can be any subregion of the full tensor network state i.e. include part of the boundary and/or the bulk. The Ising action is given by
\begin{equation}\label{eqn appen F action}
\mathcal{A}\left[\{ s_x\} \right] = -\frac{1}{2} \log D \left[ \sum_{\langle xy \rangle}(s_xs_y-1) + \sum_{x\in \mathcal{B}}(h_xs_x-1) \right] - \frac{1}{2} \log D_b \sum_{x\in b} (b_x s_x -1),
\end{equation}
where we have introduced a bulk pinning field $b_x$,
\begin{align}
\text{Boundary pinning field, } h_x &= \begin{cases}
+1 & x \in Y\\
-1 &  x \in \bar{Y}
\end{cases}\\
\text{Bulk pinning field, } b_x &= \begin{cases}
+1 & x \in Y\\
-1 &  x \in \bar{Y}
\end{cases}
\end{align}
We have also introduced $D$, the dimension of the bond connecting two tensors in the network and $D_b$, the dimension of the bulk dangling index.
In our construction we chose each leg of our tensors to have dimension $p$ and achieve different dimensions for the links within the network and the bulk degrees of freedom by grouping legs together in the faces of the polytopes.

By the same method as \cref{Low temperature Ising model corrections} in the limit of large bond dimension the free energy of this Ising model (defined without the temperature prefactor) is dominated by the Ising model ground state:
\begin{equation}
-\ln \langle \trace \left[\left(\rho \otimes \rho \right)\mathcal{F}_{Y}\right] \rangle = |\gamma_Y| \ln p - \ln k - O(1/p).
\end{equation}
Where $|\gamma_Y|$ is the length of the domain spin wall in the ground state of the Ising model defined by \cref{eqn appen F action}.
As before, this convenient function is not directly the R\'enyi-2 entropy,
\begin{equation}
S_2(\rho_Y) = -\frac{1}{2} \ln \trace \left[\rho_Y \right] = \frac{1}{2} \ln \trace \left[\left(\rho \otimes \rho  \right)\mathcal{F}_Y\right]
\end{equation}
where $\rho_Y = \trace_{\bar{Y}}(\rho)$.
However, the method of appendix F of~\cite{Random} follows through when considering a full tensor network state rather than a boundary state so we can again bound the average R\'enyi-2 entropy in terms of the ground state energy of the Ising model
\begin{equation}
|\gamma_Y| \ln p - \langle S(\rho_Y) \rangle \leq \ln k + o(1).
\end{equation}
Since $\ket{\Psi_V}$ is a stabilizer state and hence has quantised entropy, following the proof of \cref{lm Approximate Ryu-Takayanagi} we can conclude that the von Neumann entropies of subregions of this state can be made equal to $|\gamma_Y| \ln p$ with probability greater than $\left(1-\frac{1}{\delta} \right)$ which can be made arbitrarily close to 1 by choosing a sufficiently high bond dimension $p$.

\end{appendices}

\bibliographystyle{JHEP}
\bibliography{RandomTensorsJHEP}

\end{document}